\newcommand{\extendedversion}{1}
\def\BibTeX{{\rm B\kern-.05em{\sc i\kern-.025em b}\kern-.08em
    T\kern-.1667em\lower.7ex\hbox{E}\kern-.125emX}}
\definecolor{lightblue}{rgb}{.651,.745,0.875}
\newcommand{\mypar}[1]{\myparbold{#1}}
\newcommand{\myparbold}[1]{\smallskip\noindent\textbf{#1.}}
  \newcommand{\fix}[1]{\footnote{#1}}
  \newcommand{\fix}[1]{}
  \newenvironment{newtext}{\color{blue}}{}
  \newcommand{\remove}[1]{\textcolor{red}{#1}}
    \newenvironment{newtext}{\color{blue}}{}
    \newcommand{\remove}[1]{}
    \newcommand{\remove}[1]{}
\newtheorem{definition}{Definition}
\newtheorem{lemma}{Lemma}
\newtheorem{theorem}{Theorem}
\definecolor{gobracommentcolor}{HTML}{747678}
\newcommand{\gammagray}{\textcolor{gobracommentcolor}{\gamma}}
\newcommand{\existsgray}{\textcolor{gobracommentcolor}{\exists}}
\lstdefinelanguage{gobra}{
  language=go,
  sensitive=true,
  morecomment=[l]{//},
  morecomment=[s]{/*}{*/},
  morekeywords=[1]{ 
    pred, implements, ghost, set, match, foreach, of
  },
  morekeywords=[2]{ 
    requires, ensures, invariant, req, ens, pure, unfolding, in, forall, acc, let
  },
  morekeywords=[3]{ 
    fold, unfold,
    assume, assert, inhale, exhale
  },
  basicstyle={\ttfamily\footnotesize},
  commentstyle={\color{gobracommentcolor}\textit},
  keywordstyle={[1]\color[HTML]{0005FF}},
  keywordstyle={[2]\color[HTML]{CC5500}},
  keywordstyle={[3]\color[HTML]{EC008C}},
  mathescape=true,
  moredelim=**[is][\normalfont\itshape]{'}{'}
}
\def\code{%
    \lstinline[language=gobra,basicstyle=\ttfamily]}
\lstdefinelanguage{fstar}{
  sensitive=true,
  morecomment=[l]{//},
  morecomment=[s]{/*}{*/},
  morekeywords=[1]{ 
    module, open, val, let, in, assert, fun
  },
  basicstyle={\ttfamily\footnotesize},
  commentstyle={\color[HTML]{747678}\textit},
  keywordstyle={[1]\color[HTML]{0005FF}},
  keywordstyle={[2]\color[HTML]{CC5500}},
  keywordstyle={[3]\color[HTML]{EC008C}},
  mathescape=true,
  moredelim=**[is][\normalfont\itshape]{'}{'}
}
\newcommand*{\Ie}{I.e.,\xspace}
\newcommand*{\ie}{i.e.,\xspace}
\newcommand*{\Eg}{E.g.,\xspace}
\newcommand*{\eg}{e.g.,\xspace}
\newcommand*{\cf}{cf.\xspace}
\newcommand*{\wrt}{w.r.t.\xspace}
\newcommand*{\etal}{et~al.\xspace}
\newcommand*{\term}[1]{\emph{#1}}
\newcommand*{\Figref}[1]{Fig.~\ref{fig:#1}}
\newcommand*{\figref}{\Figref}
\newcommand*{\Secref}[1]{Sec.~\ref{sec:#1}}
\newcommand*{\secref}{\Secref}
\newcommand*{\Appref}[1]{App.~\ref{sec:#1}}
\newcommand*{\appref}{\Appref}
\newcommand*{\Lineref}[1]{Line~\ref{line:#1}}
\newcommand*{\lineref}[1]{line~\ref{line:#1}}
\newcommand*{\Linerange}[2]{Lines~\ref{line:#1}--\ref{line:#2}}
\newcommand*{\linerange}[2]{lines~\ref{line:#1}--\ref{line:#2}}
\newcommand*{\Thmref}[1]{Thm.~\ref{thm:#1}}
\newcommand*{\thmref}{\Thmref}
\newcommand*{\Lemref}[1]{Lemma~\ref{lem:#1}}
\newcommand*{\lemref}{\Lemref}
\newcommand*{\Defref}[1]{Def.~\ref{def:#1}}
\newcommand*{\defref}{\Defref}
\newcommand*{\citeauthors}[2]{#1~\etal~\cite{#2}}
\newcommand{\ignorecite}[1]{{\@fileswfalse\cite{#1}}}
\newcommand{\vb}{\vphantom{b}} 
\newcommand*{\event}[1]{\ensuremath{\mathit{#1}}}
\newcommand*{\symb}[1]{\ensuremath{\mathit{#1}}}
\newcommand{\lbbar}{\{\kern-0.5ex|} 
\newcommand{\rbbar}{|\kern-0.5ex\}} 
\keywords{Protocol implementation verification, Symbolic security, Separation logic, Automated verification, Injective agreement, Forward secrecy.}
\begin{document}

\ifnum\extendedversion=0
\else
  \settopmatter{printfolios=true, printacmref=false} 
\fi

\ifthenelse{\boolean{soundness_proof}}{
  \title[A Generic Methodology for the Modular Verification of Security Protocol Implementations (extended version)]{A Generic Methodology for the\\ Modular Verification of Security Protocol Implementations (extended version)}
}{
  \title[A Generic Methodology for the Modular Verification of Security Protocol Implementations]{A Generic Methodology for the\\ Modular Verification of Security Protocol Implementations}
}

\author{Linard Arquint}
\orcid{0000-0002-6230-8014}
\affiliation[obeypunctuation=true]{%
 \department{Department of Computer Science}\\
 \institution{ETH Zurich},
 \country{Switzerland}
}

\author{Malte Schwerhoff}
\orcid{0000-0003-2569-9121}
\affiliation[obeypunctuation=true]{%
 \department{Department of Computer Science}\\
 \institution{ETH Zurich},
 \country{Switzerland}
}

\author{Vaibhav Mehta}
\orcid{0000-0003-2357-3023}
\affiliation{%
 \institution{Cornell University}
 \city{Ithaca}
 \state{NY}
 \country{USA}
}
\authornote{The work was performed during a fellowship at ETH Zurich.}

\author{Peter Müller}
\orcid{0000-0001-7001-2566}
\affiliation[obeypunctuation=true]{%
 \department{Department of Computer Science}\\
 \institution{ETH Zurich},
 \country{Switzerland}
}

\begin{abstract}
\looseness=-1
Security protocols are essential building blocks of modern IT systems. Subtle flaws in their design or implementation may compromise the security of entire systems. It is, thus, important to prove the absence of such flaws through formal verification. Much existing work focuses on the verification of protocol \emph{models}, which is not sufficient to show that their \emph{implementations} are actually secure. Verification techniques for protocol implementations (\eg via code generation or model extraction) typically impose severe restrictions on the used programming language and code design, which may lead to sub-optimal implementations. 
In this paper, we present a methodology for the modular verification of strong security properties directly on the level of the protocol implementations. Our methodology leverages state-of-the-art verification logics and tools to support a wide range of implementations and programming languages. We demonstrate its effectiveness by verifying memory safety and security of Go implementations of the Needham-Schroeder-Lowe, Diffie-Hellman key exchange, and WireGuard protocols, including forward secrecy and injective agreement for WireGuard. We also show that our methodology is agnostic to a particular language or program verifier with a prototype implementation for C.
\end{abstract}

\maketitle

\section{Introduction}
\label{sec:introduction}

Cryptographic protocols, such as TLS, WireGuard~\cite{Donenfeld17}, and Signal~\cite{MarlinspikeP16}, are the cornerstones of today's global communication networks because they ensure crucial security properties, such as participant authentication and data privacy. With Lowe's famous attack on the Needham-Schroeder protocol~\cite{Lowe96,NeedhamS78}, it has become obvious that formal proofs are necessary for verifying that cryptographic protocols actually provide the desired properties.

The vast majority of existing work on automated protocol verification targets protocol \emph{models}, \ie abstract descriptions of the cryptographic operations and message exchanges that constitute a protocol. The verification of protocol models is useful to show the security of the protocol \emph{design}, but does not guarantee that concrete protocol \emph{implementations} are also secure. Common programming errors (\eg missing bounds checks in the Heartbleed bug~\cite{CVEHeartbleed13}) or incorrect implementations of the design (\eg accidentally omitted protocol steps in the Matrix SDK~\cite{CVEMatrix21}) may render the implementation insecure even if the protocol design is secure.

Verifying protocol implementations is substantially more complex than verifying models. Targeting realistic implementations requires reasoning, for instance, about mutable data structures, intricate control-flow (\eg dynamic dispatch), concurrency, and performance-optimized code. Moreover, implementations are significantly larger than abstract models, and change more frequently, which requires \emph{modular} verification techniques to decompose the verification task and reduce the re-verification effort when code evolves. Modular reasoning is more difficult than the non-modular analyses typically used to verify protocol models.

\looseness=-1
One approach at obtaining verified protocol implementations
is to generate an executable implementation automatically from a verified model (\eg~\cite{PozzaSD04,CadeB12,BhargavanBDHKSW21,BhargavanB0HKSW21,HoPBB22}). 
For instance, Bhargavan~\etal's DY* framework~\cite{BhargavanBDHKSW21, BhargavanB0HKSW21, HoPBB22} generates OCaml or C code from a functional implementation in F*~\cite{SwamyHKRDFBFSKZ16}. The generated code is secure by construction (provided the code generator is correct). However, changing the code manually (\eg to optimize performance) forfeits any security guarantees. To achieve modular verification, DY* relies on a specific coding discipline (at most one protocol step per F* function), which must be enforced manually, and is in general not adhered to by existing implementations. A violation of this discipline unwittingly restricts the capabilities of the attacker and, thus, may cause DY* to miss attacks.

An alternative approach is to verify security properties for a protocol model that is extracted automatically from an implementation
(\eg~\cite{BhargavanFGT08,OShea08,AizatulinGJ12,KobeissiBB17,BhargavanBK17}).
However, automatic model extraction often requires that implementations follow restrictive coding disciplines.
Similar restrictions apply to approaches based on \term{executable models} (\eg~\cite{SistoCAP18,ProtzenkoBMB19}), 
\ie models written in specific subsets of programming languages that facilitate reasoning, but typically do not provide the low-level features required for optimized implementations.

Instead of automatically generating an implementation or extracting a model, 
Arquint~\etal~\cite{ArquintWLSSWBM23} prove refinement between an \emph{existing} verified model and a corresponding \emph{existing} implementation. Their approach supports realistic implementations, but requires expertise in and relies on the soundness of two tools (a model \emph{and} a program verifier). Moreover, formal models may not always exist, or may not be in sync with an evolving implementation.

\mypar{This work}
We present a methodology for the verification of strong security properties (\eg injective agreement, forward secrecy) directly on the level of the protocol implementations. Our methodology leverages established program verification techniques that are supported by a wide range of existing automated\footnote{The proof search is automatic but relies on user-provided annotations.} tools (\eg \cite{JacobsSPVPP11,WolfACOPM21,BlomH14,SantosMNWG18,Astrauskas0PS19}), which makes it readily applicable. It is based on separation logic \cite{OHearnRY01,Reynolds02}, a program logic that supports the language features used to write efficient implementations, such as mutable heap data structures and concurrency. As a result, our methodology applies to realistic implementations written in mainstream programming languages such as C, Go, JavaScript, and Rust. Verification in our methodology is \emph{modular}, that is, one can verify each method (or protocol participant) in isolation. Modularity is crucial for scalability, to reduce the re-verification effort when the code evolves, and to provide strong guarantees for libraries.

As is common in protocol verification, we explicitly model the global trace of a protocol execution, which allows us to express security properties in ways familiar to security experts. This trace is expressed and manipulated via \emph{ghost code}~\cite{FilliatreGP14}, that is, program code that is used for verification purposes, but erased by the compiler before the program is executed. The ghost code required to manipulate the global protocol trace is encapsulated in the I/O and crypto libraries used by an implementation to ensure, \eg that each sent message is correctly reflected on the trace.

Using ghost code allows us to cleanly separate the global trace, which is necessary to prove protocol-wide properties, from the data structures maintained locally by each participant. We treat each participant instance of a protocol (including a Dolev-Yao attacker~\cite{DolevY83}) as a concurrent thread, and the global trace as shared state among these threads. This approach allows us to reason about unboundedly many participant instances and to leverage existing verification techniques and tools for shared-data concurrency.

\mypar{Contributions}
We make the following contributions:

\begin{enumerate} 
\item We present a modular verification methodology for protocol implementations, based on global traces and concurrent separation logic, that applies to a wide range of programming languages, protocol implementations, and verification tools.

\item We show how to use separation logic's linear resources to \emph{modularly} prove injective agreement, \ie the absence of replay attacks. To the best of our knowledge, we present the first invariant-based verification technique for this property.

\item We developed a reusable Go library that facilitates maintaining the global trace; protocol-independent properties are verified once and for all for this library and can, thus, be reused for different protocol implementations.

\item We demonstrate the practicality of our approach by using the Gobra verifier~\cite{WolfACOPM21} to verify memory safety and security of Go implementations of three~protocols: Needham-Schroeder-Lowe (NSL)~\cite{NeedhamS78,Lowe96}, signed Diffie-Hellman (DH)~\cite{DBLP:journals/tit/DiffieH76}, and WireGuard~\cite{Donenfeld17}. 
We show that our approach supports different programming languages and verifiers by additionally implementing a prototype of the reusable library for C and the VeriFast verifier~\cite{JacobsSPVPP11}, and using it to verify a C implementation of NSL. The implementations of our reusable verification library and the case studies are
open-source~\cite{PaperArtifact}.

\item We prove soundness of our approach, in particular, that the global trace correctly reflects all relevant protocol steps and, thus, any security property proved for the trace indeed holds for the protocol implementation.
\end{enumerate}

\noindent
We build on and substantially extend two lines of prior work:
Our use of a global trace and security labels to prove secrecy
is inspired by Bhargavan~\etal~\cite{BhargavanBDHKSW21}, but our approach achieves modularity without relying on a coding discipline (\cf earlier discussion of DY*), and thus handles existing protocol implementations soundly.
Our encoding of the global trace as a concurrent data structure is inspired by Dupressoir~\etal~\cite{DupressoirGJN11}.
Their work depends on specific features of the used programming language (\eg C's volatile fields) and verifier (VCC~\cite{CohenDHLMSST09}), while we present a separation-logic-based methodology applicable across different programming languages and verifiers, as demonstrated by our case studies.
The use of separation logic allows us to verify concurrent, heap-manipulating programs and prove security properties that so far were out of reach for invariant-based approaches.

\mypar{Outline}
\secref{trace_verification} introduces background on trace-based verification and our attacker model.
In \secref{global_trace}, we explain how we encode the global trace and how we relate it formally to the local state of each participant. 
In \secref{security_properties}, we show how to prove important security properties based on a suitable trace invariant, and how we use separation logic's linear resources to prove injective agreement.
In \secref{implementation}, we introduce our reusable verification library, which implements our methodology, and substantially reduces the verification effort per protocol. 
\secref{case-studies} describes our case studies.
We explain the trust assumptions underlying our methodology and sketch its soundness proof in \secref{trust-assumptions}, discuss related work in  \secref{related_work}, and conclude in \secref{conclusions}.

\section{Trace-based Verification}
\label{sec:trace_verification}

A protocol's security depends on the interplay of the protocol participants in the presence of an attacker. A standard technique to verify security is to record all relevant actions of the participants and the attacker on a \emph{global trace} and to formulate the intended security properties as properties of this trace. Verification then amounts to proving that all possible traces of a protocol satisfy the intended properties. In this section, we give a high-level overview of this approach; we provide the details in the later sections.

\mypar{Attacker}
We consider a Dolev-Yao attacker that has full control over the network and performs symbolic cryptographic operations.
These operations are modeled as functions over symbolic values, so-called \emph{terms}, and encode the perfect cryptography assumption, \eg that decryption succeeds if and only if it uses the correct key.

\looseness=-1
An attacker can apply these functions to all terms in its knowledge, which initially consists of all publicly-known terms, including string and integer constants.
An attacker obtains additional knowledge by reading messages on the network.
Furthermore, an attacker may corrupt participants, which adds all terms in the state of the corrupted participant to the attacker knowledge. 
We model two kinds of corruption: Corrupting a \emph{participant} leaks its long-term state, which is common to all instances of this participant, such as long-term secret keys. 
Corrupting a \emph{participant session} additionally leaks short-term state, \eg ephemeral secret keys, or exchanged nonces\footnote{Session corruption affects the entire short-term state of a participant instance, which might participate in multiple protocol sessions; a more fine-grained treatment of individual sessions is possible, but omitted for simplicity.}.

\begin{figure*}
\begin{tabularx}{\textwidth}{l>{\centering\arraybackslash}Xr} 

    \Inf[\rulename{Write}]
    	{\phantom{\shoare{}{}{}}} 
    	{\shoareRes
    		{\Gamma}
    		{p \mapsto \_}
    		{\cderef{p} := e}
    		{p \mapsto e}}
    		
    &
   	
    \Inf[\rulename{Par}]
    	{\shoareRes{\Gamma}{P_1}{C_1}{Q_1}}
    	{\shoareRes{\Gamma}{P_2}{C_2}{Q_2}}
    	{\shoareRes{\Gamma}{P_1 * P_2}{C_1 \cpar C_2}{Q_1 * Q_2}}
    	
	&
	
    \Inf[\rulename{With}]
    	{\shoareRes{\Gamma}{P * I_r}{C}{Q * I_r}}
    	{\shoareRes{\Gamma,r: I_r}{P}{\cwith{r}{C}}{Q}}

\end{tabularx}
\vspace{-1.7em}
\caption{Selected separation logic proof rules: 
heap writes (\cf \secref{safety_verification}) along with parallel composition and lock-protected critical sections (\cf \secref{global-trace-encoding}).
Side-conditions are omitted for simplicity.}
\label{fig:sep-logic-proof-rules}
\vspace{-0.5em}
\end{figure*}

\mypar{Trace entries}
The global trace is a sequence of events. Each event corresponds to a high-level operation performed by a participant or the attacker.
It has a name and takes event-specific arguments. \Eg event \event{CreateNonce(n)} records that nonce~\symb{n} was created. This event is protocol-independent; we also support protocol-specific events to keep track of the progress within a protocol execution and to express specific security properties. \Eg a protocol-specific event may express 
which nonces or keys a participant uses to communicate with a peer
(\cf \secref{security_properties}).

We use seven protocol-independent events:
(1)~A \emph{create nonce} event records that a fresh nonce has been generated.
(2)~A \emph{send message} event records that a message has been sent on the network.
Both events may originate from a participant or the attacker.
The remaining five protocol-independent events model the capabilities of the attacker. 
(3)~The (unique) \emph{root} event is the first event on every trace and contains the initial attacker knowledge. 
(4)~An \emph{extend attacker knowledge} event models that the attacker learns additional terms, \eg by applying a cryptographic operation to a term already in the attacker knowledge.
Corruption is represented by (5)~a \emph{participant corruption} or (6)~a \emph{session corruption} event. In both cases, we use extend-events~(4) to add the newly-learned terms (from the corrupted state) to the attacker knowledge.
At any point during a protocol run, the total attacker knowledge is therefore determined by the union of the root event~(3), the send-events~(2), and the extend-events~(4).
Finally, (7)~a \emph{drop message} event records that the attacker dropped a message from the network.

\mypar{Trace invariant}
To reason modularly about the (unbounded) set of all possible traces, we introduce a \emph{trace invariant}, a property that must hold for every prefix of each trace produced by a protocol. Verification then consists of two main steps: first, proving that each action of a participant or the attacker (according to the above attacker model) maintains the trace invariant and, second, showing that the trace invariant implies the intended security properties.

An important component of a trace invariant are \emph{message invariants}, which characterize the content of a message. For instance, a message invariant might express that a message parameter is a nonce (as opposed to an arbitrary term).

\section{Local Reasoning}
\label{sec:global_trace}

In the previous section, we have summarized how we can prove security properties based on a global trace of events. 
In this section, we show how to verify concrete protocol implementations by relating the global trace of the protocol to the local state and operations of each protocol participant. 
This verification is modular and can be automated using existing verification tools.

\subsection{Safety Verification}
\label{sec:safety_verification}

To support realistic, efficient, and existing protocol implementations, our verification technique needs to handle programming concepts such as mutable heap structures and concurrency. To this end, we employ separation logic~\cite{OHearnRY01,Reynolds02}, the de-facto standard for the modular verification of imperative code. Separation logic is supported by existing verifiers for many languages, including VeriFast~\cite{JacobsSPVPP11} for C, Prusti~\cite{Astrauskas0PS19} for Rust, and Gobra~\cite{WolfACOPM21} for Go. All of them can be used in combination with our methodology.

In separation logic, each heap location is conceptually owned by a single function execution (similar to Rust). Attempting to access a location \emph{without} owning it results in a verification failure. Ownership prevents data races in concurrent programs (since at most one function may access a location at any point in time) and facilitates reasoning about side effects (as long as one function owns a location, no other function can possibly modify it).

In specifications, the \emph{points-to assertion}~$p \mapsto e$ expresses ownership, \ie that the current function has an exclusive \emph{permission} to access location~$p$ and that $p$ has value $e$ (we write $\_$ if the value is irrelevant). For instance, the proof rule for heap updates (rule~\rulename{Write} in \figref{sep-logic-proof-rules}) enforces via its precondition that the current function execution may update $p$ only if it holds the corresponding permission.

Permissions are initially obtained when allocating a heap location, and are transferred between function executions upon call and return according to the callee function's specification. Permissions may also be transferred between threads, see \secref{global-trace-encoding}.

Verifying a protocol implementation in separation logic ensures that it is memory safe (\eg does not cause  null-pointer dereferences or buffer overflows), does not abort (\eg due to division by zero), and does not exhibit data races.
Where needed for our safety or security proof, we also verify functional correctness properties. We omit the details of safety proofs here because they are routine work in and orthogonal to the focus of this paper.

\subsection{Relating Bytes with Terms}
\label{sec:relate-bytes-with-terms}

\looseness=-1
Our global trace includes symbolic terms, such as keys, nonces, and messages. In concrete implementations, these terms are typically represented by (mutable) byte arrays. In order to relate the two, we use a \emph{concretization function} $\gamma$, which maps a term to its byte representation. We use this function in specifications; in particular, we have annotated library functions, \eg for cryptographic operations, to relate the term representations of their inputs and outputs. \Eg a hash function that maps the byte array $\mathit{xa}$ (representing, \eg a message) to the byte array $\mathit{ra}$ (representing, \eg a number) is specified by relating the corresponding terms:
$\exists x,r \ldotp \mathit{xa}=\gamma(x) \wedge \mathit{ra}=\gamma(r) \wedge r=h(x)$, where $h$ is the symbolic hash operation on terms.

Parsing a received message often requires showing that the parsed byte array $b$ corresponds to a given term $t$: $b=\gamma(t)$. Proving this property generally requires that each byte array corresponds to a
\emph{unique} term. However, this requirement is typically not satisfied in realistic implementations where, \eg a byte array of length four could store an integer or an ASCII-encoded string, which have different term representations. A possible solution~\cite{BhargavanBDHKSW21, DBLP:journals/pacmpl/0001KEW0CB20} is to enforce a unique byte-level representation for every term (for instance, by preceding it with a tag). However, this is not possible when targeting existing implementations with fixed message formats.

Therefore, we adopt a less restrictive solution here. We use the \emph{pattern requirement} from \citeauthors{Arquint}{ArquintWLSSWBM23}, which allows multiple terms to have the same byte-level representation \emph{in general}, but requires a \emph{unique} representation for the terms corresponding to protocol messages. This requirement allows a participant to uniquely determine the term for a parsed message. It ensures that the concretization function $\gamma$ is \emph{injective} on the byte arrays received as messages.
The pattern requirement is satisfied by many protocols because they include message tags to distinguish the \emph{kinds of messages}, which in turn determines the unique relationship between a byte array and a term. At the same time, it allows clashes among the representations of other terms, such as integers and strings.

\begin{figure}
\setlength{\abovedisplayskip}{0pt}
\setlength{\belowdisplayskip}{0pt}
\begin{alignat*}{3}
	& \symb{M1}.\quad & A & \rightarrow B &&:\  \{\langle 1, \symb{na}, A \rangle\}_{\symb{pk_{B}}}				\\
	& \symb{M2}.\quad & B & \rightarrow A &&:\  \{\langle 2, \symb{na}, \symb{nb}, B \rangle\}_{\symb{pk_{A}}} \\
	& \symb{M3}.\quad & A & \rightarrow B &&:\  \{\langle 3, \symb{nb} \rangle\}_{\symb{pk_{B}}}				\\
\end{alignat*}
\vspace{-2.5em} 
\caption{The NSL public key protocol, where \symb{na} and~\symb{nb} are nonces, whose generation is omitted. $\{m\}_{\symb{pk}}$ and $\langle \cdots \rangle$ denote public key encryption of plain text~$m$ under the public key~\symb{pk} and tupling, respectively. Creation and distribution of the participants' authentic keys is not part of the protocol.}
\label{fig:nsl}
\vspace{-0.5em}
\end{figure}

We illustrate the approach using the NSL public key protocol~\cite{Lowe96} in \figref{nsl}.
After receiving message \symb{M1}, Bob  parses it as an encrypted triple. The specification of the parse operation ensures $\exists \symb{na} \ldotp \gamma(m) = \gamma(\{1, \symb{na}, A\}_{\symb{pk_{B}}})$. Since $\{1, \symb{na}, A\}_{\symb{pk_{B}}}$ is a protocol message, we can apply the pattern requirement to derive the required information about $m$: $\exists \symb{na} \ldotp m = \{1, \symb{na}, A\}_{\symb{pk_{B}}}$.

\subsection{Global Trace Encoding}
\label{sec:global-trace-encoding}

As explained in \secref{trace_verification}, we use a global trace of events, verify invariants over this trace, and finally prove that the invariants imply the intended security properties. For this approach to be sound, the global trace has to include all relevant events performed by the 
protocol participants and the attacker, 
which we ensure as follows.

We model each participant instance potentially participating in a protocol session, and the attacker, as a thread in a concurrent system. Each thread maintain its own (mutable) local state, which may contain short-term, session-specific data and long-term data that is shared by all instances of the participant. Multiple instances of the same protocol role are modeled as threads that execute the same code.
Soundness of separation logic ensures that any verified property holds for all possible interleavings between the threads, that is, for all possible interactions between the participant instances and the attacker. Moreover, since separation logic is modular, it verifies the implementation of each participant in isolation, independent of the other threads potentially running in the system (assuming only that their implementations are also verified). Consequently, the verified properties hold for an arbitrary, unbounded number of participant instances.

Separation logic achieves thread-modular reasoning by ensuring that different threads operate on \emph{disjoint} memory, which prevents data races and eliminates interference between threads (see below for shared state). The proof rule for parallel composition (\rulename{Par} in \figref{sep-logic-proof-rules}) illustrates this approach. The threads $C_1$ and $C_2$ can be verified independently. They 
operate on the heap locations for which they obtain permissions via their preconditions $P_1$ and $P_2$, resp. Separation logic's \emph{separating conjunction}~$*$ in the precondition of the parallel composition expresses that the permissions in $P_1$ and $P_2$ are disjoint. Note that we show the rule for a structured parallel composition statement for simplicity; our technique also supports dynamic thread creation.

Each thread needs to manipulate its own local data structures and the global trace data structure that is shared among all threads. To support mutable shared state, we can use any of the established verification techniques for concurrency reasoning. For concreteness, we use a global \emph{lock}, which is associated with a lock invariant that needs to be established when the lock is first created. This invariant may then be assumed whenever the lock is acquired and must be proved to hold upon release.
Proof rule~\rulename{With} in \figref{sep-logic-proof-rules} illustrates this reasoning for a critical section~$C$ that is protected by the lock~$r$. $I_r$ is the invariant associated with lock~$r$, as specified by $r : I_r$ in the proof context. Conceptually, a lock owns the permissions expressed in $I_r$ and temporarily lends these permissions to a thread on entering the critical section.

Since the global trace exists only for the purpose of verification, we model it as \emph{ghost state} and all operations on it as \emph{ghost operations}; both are erased during compilation. Consequently, the lock protecting this ghost data structure can also be erased.
Reasoning about ghost locks is completely analogous to standard locks (and supported by separation logic program verifiers). However,
since a ghost lock is erased during compilation, it does not ensure mutual exclusion. Therefore, any non-ghost operation performed between an acquire and a release must be \emph{atomic} to ensure that erasing the ghost lock does not create thread interleavings that were not considered during verification.

The trace data structure provides two~operations: appending an event, and reading the current state of the global trace. \figref{overview-dynamic} illustrates how participants and the attacker interact with the global trace.
The lock invariant for the global trace is the trace invariant. By formulating this invariant in separation logic, it can express ownership of heap locations and other resources, which allows us to prove security properties that are out of reach for existing invariant-based related work, as we will see in \secref{authentication}.

\begin{figure}
\includegraphics[width=\columnwidth]{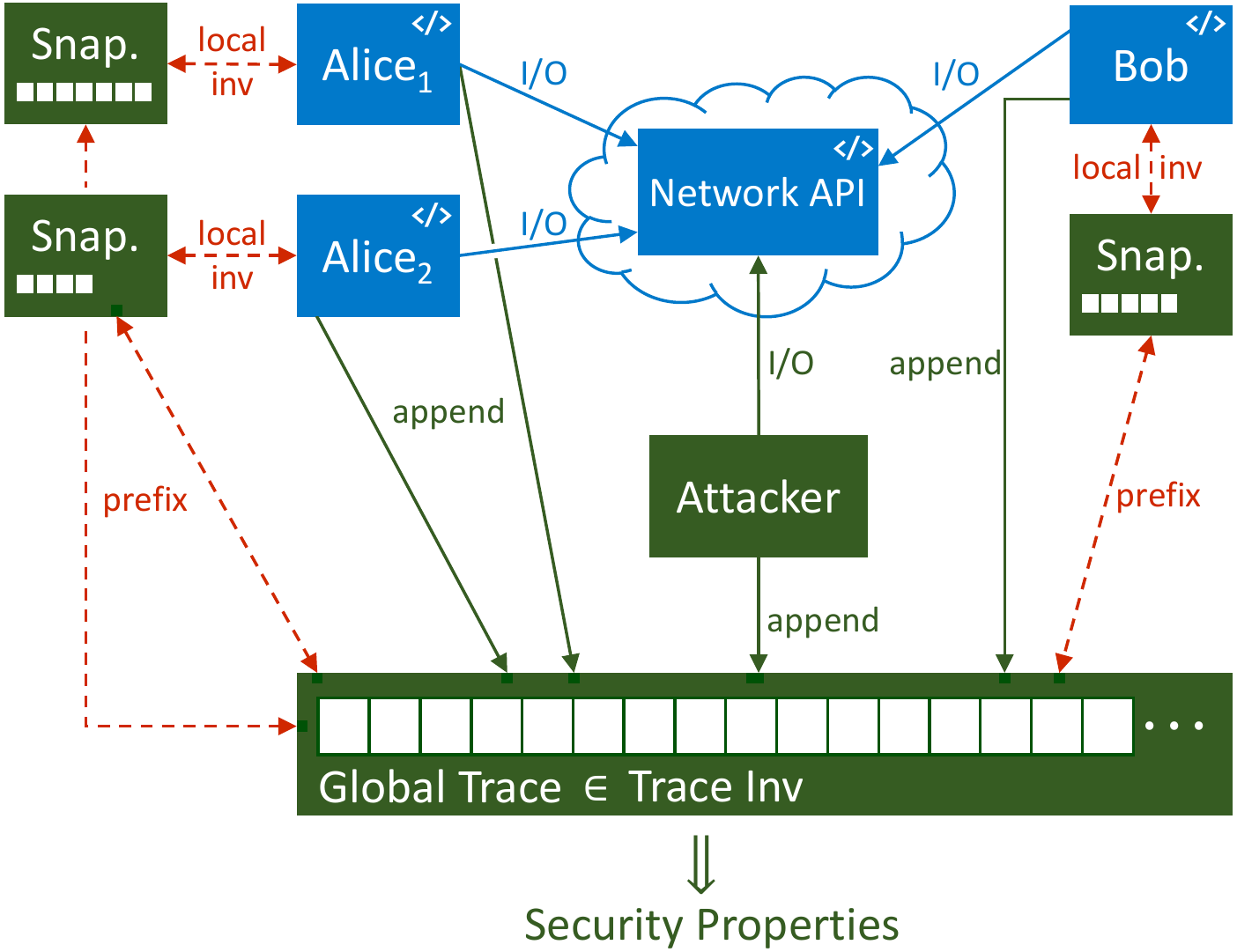}
\vspace{-1.2em}
\caption{An overview of the main components of a protocol execution in our methodology. The blue boxes are components of the protocol implementation; green boxes denote ghost structures that are used for verification. Blue and green arrows denote actual and ghost method calls, resp. The red dashed arrows denote invariants relating different data structures.
Participants and the attacker send and receive messages by interacting with the network. The attacker can perform additional I/O operations such as instructing the network to drop or modify messages. All protocol-relevant operations (including I/O operations) are recorded on a global trace. We verify (global) security properties by proving modularly that each protocol implementation (\eg two and one~implementations of Alice's and Bob's role, resp.) and the attacker maintain a trace invariant, and that the trace invariant implies the security properties.
We enable the verification of participants by relating participant-local state with the trace via local ghost state that contains a participant's local snapshot, \ie its last observed version of the trace.
}
\label{fig:overview-dynamic}
\vspace{-0.5em}
\end{figure}

Participants must record all protocol-relevant operations on the global trace. That is, to perform an operation such as sending a message or creating a nonce, they must (1)~acquire the ghost lock, (2)~perform the operation, (3)~append the corresponding event to the trace, and (4)~release the ghost lock (and at this point prove that the trace invariant is preserved). For each relevant operation, we provide a library wrapper (see \secref{implementation} for details) that performs these four steps\footnote{To avoid any runtime overhead, calls to this wrapper library could be inlined (and ghost code is erased in any case).}. Preconditions on the library functions ensure that the performed operation indeed preserves the trace invariant.
Since the trace invariant (and, hence, the preconditions) contain protocol-specific properties, our library is parametric in the invariant (\cf~\secref{implementation}). To ensure that \emph{all} relevant operations are recorded on the trace, it then suffices to perform a simple syntactic check that relevant operations are performed only via the wrapper library.

The attacker is handled similarly. We model it as code that (1)~acquires the ghost lock, (2)~determines which operations the attacker could potentially perform based on its current attacker knowledge (which is recorded on the trace), (3)~non-deterministically chooses any of these operations and appends the corresponding event to the trace, and (4)~releases the ghost lock (and at this point proves that the trace invariant is preserved). Verifying this code ensures that all possible attacker operations preserve the trace invariant.
In other words, the invariant may state only those properties that are valid under our attacker model, a property we call \emph{attacker completeness} (sometimes referred to as \emph{attacker typability}).

Participant and session corruption are two of the possible attacker operations in step~2 above. In both cases, step~3 adds all symbolic terms possibly present in the participant's (long-term or short-term) state to the attacker knowledge,  and step~4 checks that the invariant about the attacker knowledge is maintained.

\subsection{Local Snapshots}
\label{sec:local-snapshots}
\label{sec:fwdref-trace-snapshots}

To prove that a protocol-relevant operation preserves the trace invariant, we frequently need to relate the arguments of the operation to earlier events on the trace. 
For example, when sending the first message of the NSL protocol (\figref{nsl}), Alice has to show that the message invariant holds. The message invariant specifies that \symb{na} is a nonce, \ie requires a prior \event{CreateNonce(na)} event on the trace.

Discharging such proof obligations requires that participants retain information about their prior operations on the global trace. Since the global trace is a shared data structure that may grow between any two accesses, participants may soundly hold on to those facts that are \emph{stable} under extensions of the trace. For instance, if a \event{CreateNonce(na)} is present on the trace at some program point, it will also be present in all future versions of the trace.

\looseness=-1
We represent the stable information of a participant by maintaining in each participant a \emph{local snapshot} (\ie a local copy) of the global trace (see \figref{overview-dynamic}). Since the global trace may evolve by actions of other participants and the attacker, the local snapshot of a participant is generally a prefix of the global trace. Whenever a participant performs a protocol-relevant operation, we update its local snapshot to the current global trace. The trace invariant ensures that the local snapshots of all participants are prefixes of the global trace, and that these updates are the \emph{only} modifications of local snapshots.

With this design, local snapshots need to be owned by the participants (to ensure their values are retained across operations of other threads), and they must \emph{also} be owned by the ghost lock (to allow the lock invariant to relate the local snapshots to the global trace). To express this notion of shared ownership, we use fractional permissions~\cite{Boyland03}, which are supported by many separation logics. Conceptually, fractional permissions allow one to split a permission into several fractions; a non-zero fraction permits read access, whereas the full permission is required for write access. Separating conjunction \emph{adds} the fractions in both conjuncts and yields false if the sum for any location exceeds a full permission.

We split the permission to a local snapshot into two halves:
One half is part of the trace invariant and lets it express properties of the local snapshot. The other half remains with the corresponding participant and enables the participant to retain information about the global trace.
After acquiring the ghost lock, a participant temporarily obtains exclusive permission to its local snapshot by adding the half it holds with the half from the trace invariant (through the precondition 
$P * I_r$ in rule~\rulename{With} in \figref{sep-logic-proof-rules}) and can, thus, update the local snapshot.

\begin{figure}[t]
  \makeatletter
  \lst@AddToHook{OnEmptyLine}{\vspace{-0.4\baselineskip}}
  \makeatother
\begin{gobra}
na /*@, naT @*/ := CreateNonce(/*@ s @*/)$\label{line:relate-call}$
//@ assert s.NonceOccurs(naT)$\label{line:relate-local-to-global}$
\end{gobra}
\vspace{-0.7em}
\caption{Excerpt from a NSL implementation for Alice creating a nonce and demonstrating how to relate local state with the global trace. \code{//@} and \code{/*@ ... @*/} mark ghost code that is used for verification only. We omit the nonce's secrecy label (\secref{secrecy}) for simplicity.}
\label{fig:nonce-creation}
\vspace{-0.5em}
\end{figure}

By letting each participant retain a non-zero permission to its snapshot, we can rule out interference from other threads and, thus,
use standard sequential reasoning to relate the content of the local snapshot to the concrete data structures maintained by the participant (via local invariants) and to prove the presence of an event on the snapshot. The example in \figref{nonce-creation} illustrates that. \Lineref{relate-call} invokes the library function \code{CreateNonce}. Its regular result \code{na} is the generated nonce; the additional ghost result \code{naT} is the corresponding term. \code{CreateNonce} takes the caller's local snapshot \code{s} as ghost argument, which allows the function to update the snapshot and to express in its postcondition the existence of the create-nonce event on the updated local snapshot. This postcondition allows the caller to prove the assertion in \lineref{relate-local-to-global}, without having to consider any interleaving operations by other participants or the attacker.

\section{Proving Security Properties}
\label{sec:security_properties}

In this section, we show how to define a trace invariant that lets us 
verify two important security properties, authentication and secrecy.
Authentication means that two protocol participants are indeed communicating with each other and (depending on the particular authentication property) agree on some common values.
Secrecy holds if confidential data remains unknown to the attacker.
While we focus here on the proof techniques for these two standard properties, our methodology is also applicable to more complex properties such as forward secrecy, as demonstrated in \secref{wireguard}.

\subsection{Authentication}
\label{sec:authentication}

To prove authentication, we use protocol-specific events to record additional information beyond the exchanged messages, so that authentication properties can be expressed in a familiar way: as correspondence between these events.  In this subsection, we show how to use trace invariants expressed in separation logic to prove two~strong and common authentication properties: non-injective and injective agreement.

We illustrate our methodology using the NSL example from  \figref{nsl}. We prove authentication using four protocol-specific events:
Before sending the first message, Alice creates event \event{Initiate(Alice, Bob, na)} to record that she wants to communicate with Bob, and use the nonce~\symb{na} in the current protocol session. After receiving the first and before sending the second message, Bob in turn creates event \event{Respond(Alice, Bob, na, nb)}, indicating the communication partners and used nonces. Finally, the events \event{FinishA} and \event{FinishB}, with the same parameters as \event{Respond}, indicate successful completion of the protocol (\ie runtime checks such as nonce comparisons succeeded) for Alice and Bob, resp. We focus on Alice's perspective in the following. We prove authentication for Bob's perspective in \secref{wireguard}, where we also discuss authentication properties for WireGuard.

\begin{figure}[t]
  \makeatletter
  \lst@AddToHook{OnEmptyLine}{\vspace{-0.4\baselineskip}}
  \makeatother
\begin{gobra}[%
	linebackgroundcolor={\ifnum\value{lstnumber}>4\ifnum\value{lstnumber}<7\color{lightblue}\fi\fi}] %
let commit = FinishA(A,B,na,nb) in
t.Occurs(commit) $\implies$ $\label{line:agreement_commit_occurs}$
let prefix, i = t.GetPrefix(commit) in
(prefix.Occurs(Respond(A,B,na,nb)) && $\label{line:agreement_running_occurs}$
  !($\exists$A$'$,B$'$,nb$'$,i$'$. i != i$'$ &&
    t.OccursAt(FinishA(A$'$,B$'$,na,nb$'$),i$'$))
) || prefix.IsCorrupted({A, B}) $\label{line:agreement_corruption}$
\end{gobra}
\vspace{-0.7em}
\caption{Non-injective (white background) and injective (all lines) agreement from Alice's perspective with Bob on the nonces~\code{na} and \code{nb}. \code{t.Occurs(e)} yields whether event~\code{e} occurs on trace~\code{t}; \code{t.GetPrefix(e)} returns \code{t}'s prefix up to and including the most recent occurrence of \code{e}, and the index of that occurrence (\ie the length of \code{prefix} minus 1). \code{t.OccursAt(e,i)} expresses that event \code{e} occurs at index~\code{i} on trace \code{t}.}
\label{fig:agreement_formula}
\vspace{-0.5em}
\end{figure}

\mypar{Non-injective Agreement}
The fact that Alice agrees with Bob on the nonces~\symb{na} and~\symb{nb}, known as \emph{non-injective agreement}~\cite{Lowe97a}, is specified in \figref{agreement_formula} (ignore the conjunct highlighted in blue for now).
This trace-based property states that if a \event{FinishA} event occurs on the trace (\lineref{agreement_commit_occurs}) then either a \event{Respond} event with matching arguments occurs earlier on the trace (\lineref{agreement_running_occurs}) or one of the participants has been corrupted before an agreement was reached (\lineref{agreement_corruption}).

To prove agreement for NSL, we include the NSL-specific property from 
\figref{nsl_event_inv} (ignore \lineref{finisha-uniqueness-witness} for now) into the trace invariant. It states that for every \event{FinishA} event, a corresponding \event{Respond} event occurred prior on the trace, or one of the participants has been corrupted. Maintaining this invariant requires us to show the occurrence of a suitable \event{Respond} event (or of corruption) when Alice creates the \event{FinishA} event.

\begin{figure}[t]
  \makeatletter
  \lst@AddToHook{OnEmptyLine}{\vspace{-0.4\baselineskip}}
  \makeatother
\begin{gobra}[%
	linebackgroundcolor={\ifnum\value{lstnumber}>2\ifnum\value{lstnumber}<4\color{lightblue}\fi\fi}] %
match ev {
  case FinishA(A, B, na, nb):
    UniWit(FinishA, na) &&$\label{line:finisha-uniqueness-witness}$
    (prefix.Occurs(Respond(A, B, na, nb)) ||
      prefix.IsCorrupted({A, B}))
  ...
}
\end{gobra}
\vspace{-0.7em}
\caption{A simplified fragment of the trace invariant for NSL-specific events. This invariant is universally quantified over the events \code{ev} occurring on the trace; \code{prefix} is the trace prefix up to event \code{ev}. The invariant expresses that whenever a \event{FinishA} event occurs on the trace, a \event{Respond} event with matching arguments must \emph{previously} occur, unless one of the participants has been corrupted.
The highlighted line includes a separation logic resource to express that the \event{FinishA} event is unique \wrt to the nonce~\symb{na}, which allows us to prove injective agreement. 
The conjunction~\code{&&} is interpreted as separation logic's separating conjunction~\code{*}.
}
\label{fig:nsl_event_inv}
\vspace{-0.5em}
\end{figure}

\looseness=-1
We discharge this proof obligation by extending the trace invariant with a message invariant for NSL's second message, which requires that the \event{Respond} event occurs on the trace or the message comes from the attacker. Hence, an implementation for Bob has to create a \event{Respond} event before sending the second message. When Alice receives the message, she may assume its message invariant (as part of the trace invariant). Since her local snapshot gets updated upon the receive-operation, the received message is recorded on the local snapshot and the message invariant becomes part of Alice's stable knowledge. So when Alice adds the \event{FinishA} event to the trace, she knows that either the \event{Respond} event occurs on the trace, or the second NSL message comes from the attacker. In the latter case, Alice can derive that corruption must have occurred because the attacker was able to construct a message containing the nonce~\symb{na}, which is accessible only to Alice and Bob (unless corrupted).

Once we established the trace invariant, it remains to show that for all traces, the invariant from \figref{nsl_event_inv} implies non-injective agreement (\figref{agreement_formula}). This proof is a standard entailment check, which is performed automatically by program verifiers.

\mypar{Injective Agreement}
The stronger property \emph{injective agreement} holds only for implementations that detect if the attacker replays messages from other protocol sessions. If successful, such a replay attack could trick participants into reusing outdated nonces (in general, key material), thereby weakening security. Proving injective agreement modularly is challenging; to the best of our knowledge, we present here the first invariant-based verification technique for injective agreement in protocol implementations (see also \secref{related_work}).

The highlighted conjunct in \figref{agreement_formula} strengthens non-injective to injective agreement by mandating that there is no second \event{FinishA} event on the trace with the same nonce \code{na}. The uniqueness of the event/nonce-pair enforces a one-to-one correspondence between \event{Respond} and \event{FinishA} events and, thus, excludes replay attacks. 

To prove injective agreement, we strengthen our trace invariant to imply this property. We could in principle include a conjunct that specifies uniqueness by quantifying over the indexes into the trace. However, such an invariant would be difficult to maintain \emph{modularly}. The necessary proof obligation for adding a \event{FinishA} event would require that no such event with the same first nonce already exists on the trace. However, each participant has only \emph{partial} information about the trace stored in its local snapshot. So even if we proved the absence of an event on the local snapshot, we could not conclude its absence on the trace, such that the proof obligation cannot be discharged.

To obtain a modular verification technique for injective agreement, we leverage separation logic's permissions to encode arbitrary linear resources (non-duplicable facts). Due to the meaning of 
separating conjunction, $p \mapsto \_  \star p \mapsto \_$ is equivalent to false (because the permissions of the two conjuncts are not disjoint). That is, the points-to assertion $p \mapsto \_$ is a \emph{non-duplicable (\ie unique) resource}. We can use this fact to model the uniqueness of an event by representing the event as a separation logic permission.
We use this mechanism as follows.

Conceptually, we tie event uniqueness to nonces because nonces are, by assumption of perfect cryptography, unique. When a protocol-specific event is declared, it can be specified as unique \wrt a specific nonce parameter. \Eg in NSL, event \event{FinishA} is unique \wrt its third parameter \symb{na}.
Subsequently, when a nonce is generated via a call to our verification library, a program annotation states for which events this nonce will be used (\eg \event{FinishA}). The library call returns not only the fresh nonce (\symb{na}), but also a linear resource for each indicated event type (technical details follow in \secref{implementation}).

This resource---called an event's \term{uniqueness witness}---then needs to be given up when the corresponding event is appended to the trace. That is, ownership of the resource is transferred from the participant to the ghost lock by conjoining the resource to the trace invariant. \Eg for NSL, Alice obtains the witness \symb{UniWit(FinishA, na)} when creating nonce \symb{na}. This witness is transferred to the trace invariant when she appends the event \event{FinishA(\_, \_, na, \_)} to the trace, as expressed by the highlighted conjunct in \figref{nsl_event_inv}. Due to the linearity of the resource, any attempt to append another \event{FinishA} event with \symb{na} would fail to verify because the required witness cannot be provided a second time, which would be necessary to preserve the trace invariant.

Consequently, the invariant from \figref{nsl_event_inv} implies that the \event{FinishA} event with \symb{na} is unique, which allows a standard separation logic verifier to prove the highlighted conjunct in the definition of injective agreement (\figref{agreement_formula}). 

Our discussion shows how the combination of a global trace and local snapshots allows us to prove authentication modularly, and how we can leverage the expressive power of separation logic to specify a trace invariant that lets us prove injective agreement.

\subsection{Secrecy}
\label{sec:secrecy}

Secrecy of a term~\symb{s}, \eg a key or a nonce, states that the attacker does not learn this term except when corrupting one of the protocol participants  that know the term. We can express secrecy as a property of the global trace because we can extract both the attacker knowledge and corruption events from the trace.

Instead of directly reasoning about the concrete attacker knowledge, we follow Bhargavan et al.~\cite{BhargavanFG10,BhargavanBDHKSW21} by over-approximating the concrete attacker knowledge to classes of terms that the attacker (possibly) knows.
This over-approximation enables modular reasoning about secrecy: we impose proof obligations that prevent secrets from being leaked to the attacker by checking for every send operation that the sent message belongs to a class already known to the attacker. For instance, if a participant tried to send a (unencrypted) secret term over the network, the send operation would be rejected by the verifier. Consequently, sending a message does not change the over-approximated attacker knowledge.
This knowledge is extended only when the attacker corrupts a participant or session.
In this case, we add the class of terms readable by the corrupted participant or session to the knowledge.

We classify terms based on their allowed recipients by assigning them a \emph{secrecy label}.
Secrecy labels range from public (\ie everyone including the attacker) over a set of participants to a set of particular protocol sessions.
The latter is useful to classify ephemeral private keys, \eg in our WireGuard case study, because only a participant running a particular protocol session is allowed to read these keys.

By proactively enforcing secrecy labels, we ensure that the (concrete) attacker knowledge may contain only public terms and terms whose secrecy label contains a participant or protocol session that is allowed to read the term and that has been corrupted in the past.
We prove this property once and for all as part of our reusable verification library (\cf \secref{implementation}).

 \section{Reusable Verification Library}
\label{sec:implementation}

\looseness=-1
We implement our methodology as a reusable verification library, which  significantly reduces the verification effort per protocol: the library encapsulates the global trace and provides a convenient API for common network and cryptographic operations that automates trace updates. In addition, the library provides various lemmas, such as attacker completeness (\secref{trace_verification}), which are proved once and hold for all protocols. To enable verification of a wide range of protocols, the global trace is parametric in the events it records, and the trace invariant is parametric to account for protocol-specific properties.

To demonstrate that our methodology is widely applicable, we developed a library for the Go verifier Gobra~\cite{WolfACOPM21}, and one for the C verifier VeriFast~\cite{JacobsSPVPP11}. Both library implementations are available in our open-source artifact~\cite{PaperArtifact}. In this section, we
give an overview of the library and highlight some of its technical solutions.

\begin{figure}
  \begin{center}
  \includegraphics[width=0.615\columnwidth]{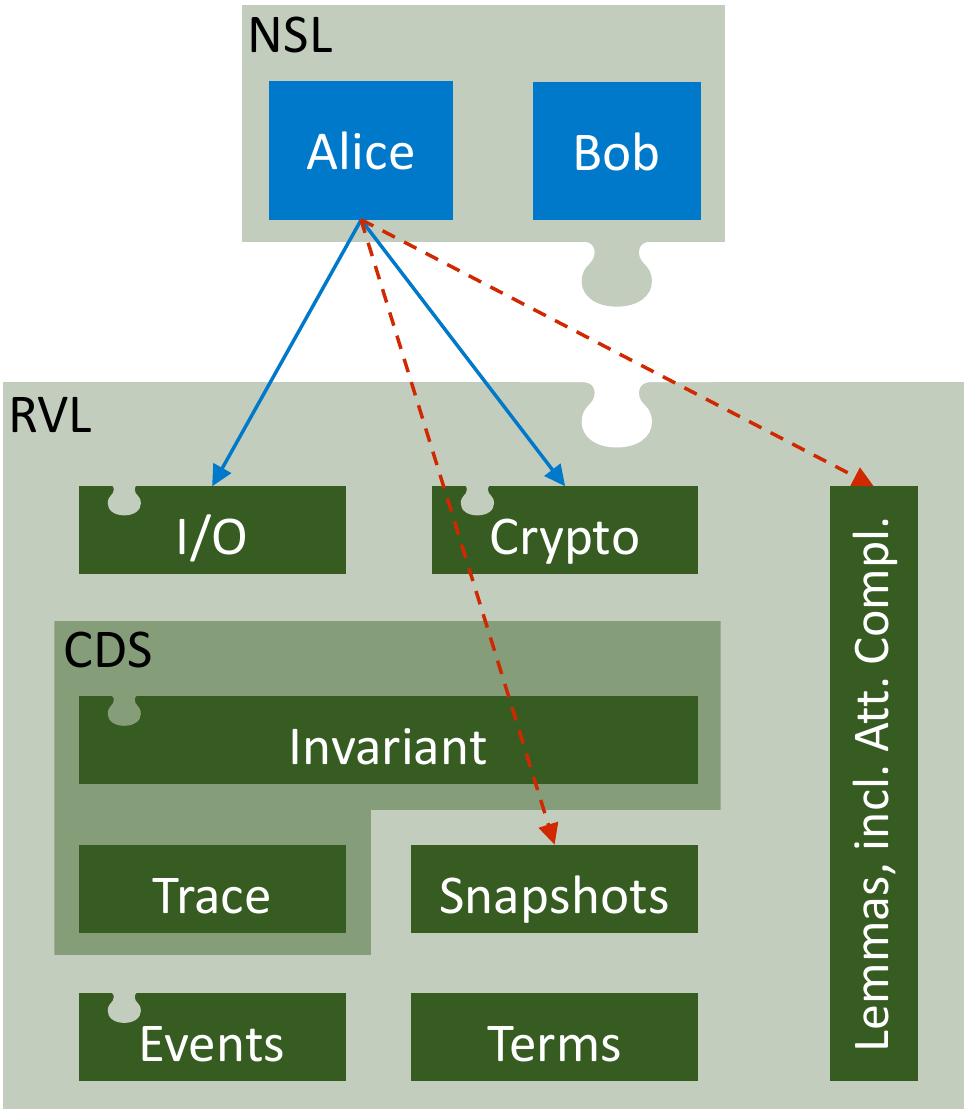}
  \end{center}
  \caption{Structure of our \acf{RVL}. The library provides implementations for the abstractions used in our methodology: terms, events, the global trace, and local snapshots. Both the trace and all local snapshots are governed by the trace invariant. The trace is encapsulated inside a \acf{CDS} that permits shared access. The APIs for I/O and cryptographic operations apply these operations and also register the corresponding events on the trace. The \ac{RVL} also provides several lemmas that have been proved for all protocols, \eg attacker completeness. Many components of the library are parametric to accommodate protocol-specific events and invariants (and the corresponding preconditions for the I/O and crypto API). We indicate parametric components using a tab symbol near the top of the box. The parameters are supplied for a concrete protocol (here, NSL), as indicated by the tab at the bottom of the box.}
  \label{fig:library}
  \vspace{-0.5em}
\end{figure}

\subsection{Overview}

In the following, we describe the library's structure and components, explain how the library can be instantiated for different protocols, and provide data on its size and verification time.

\mypar{Components}
\Figref{library} illustrates the structure of our library (lower box) and a protocol implementation that uses it (upper box). The library provides the abstractions introduced in \secref{global_trace}: terms and events abstract over concrete data structures (\eg byte arrays) and participant operations, respectively. Events are recorded on the global trace, whose content is constrained by the trace invariant. The \acf{CDS} fully encapsulates the trace, to govern shared access and maintain the invariant. Local snapshots are prefixes of the global trace, which also satisfy the trace invariant, but are owned locally by the protocol participants.

The library also provides a convenient API for common network I/O and cryptographic operations: each function performs the corresponding concrete operation (\eg sending a message or creating a nonce) and also adds the corresponding event to the trace. Suitable preconditions ensure that the operation preserves the trace invariant; they lead to proof obligations for clients using the API\@. Clients typically discharge these  with the help of stable knowledge about the trace, which is recorded in their local snapshots.

In terms of cryptographic operations, our library currently offers asymmetric encryption, authenticated encryption with associated data~(AEAD), signatures, and modular exponentiation, but can easily be extended by additional cryptographic operations.
As a reference, adding the latter two~features and proving the corresponding lemmas took about two~person days.

Note that almost the entire library consists of ghost code that is used for verification, but will be erased by the compiler. The only non-ghost operations are the calls to the underlying I/O and crypto libraries. This has two important consequences. First, these calls can be inlined in the participant implementation, such that the entire library can be removed from the executable program and does not cause any runtime overhead. Second, existing protocol implementations do not have to be modified to use the library. The library provides a convenient way to systematically annotate an implementation with ghost code and proof obligations, but other forms of annotations are also possible.

\mypar{Parametricity}
As we discussed earlier, some events and aspects of the trace invariant (and consequently the preconditions of the I/O and crypto API) are protocol-specific. To capture them, we designed our library to be parametric, such that clients using the library can instantiate it for a given protocol.  

Despite being parametric, our library nonetheless provides lemmas that are proven once and for all protocols, in particular, \emph{attacker completeness} (\secref{trace_verification}) and the \emph{secrecy lemma} (\secref{secrecy}). Attacker completeness can be proved once and for all because the library is not parametric in the kinds of term abstractions it provides. Secrecy directly follows from the protocol-independent parts of the trace invariant, which enforce for all protocols that implementations do not leak secrets to the attacker, \ie messages have to be public.
The library provides also several utility lemmas (\eg that event existence is a stable trace property) that can be used when verifying a participant implementation.

\begin{figure}[t]
  \makeatletter
  \lst@AddToHook{OnEmptyLine}{\vspace{-0.4\baselineskip}}
  \makeatother
\begin{gobra}
pred TraceInv[P](t: Trace) {
  foreach e: Entry of t:
    let pre = ... in // trace prefix up to e
    match e {
      case Send(msg):
        MsgInv[P](msg, pre)$\label{line:msgInvCall}$
      case PEvent(pe):
        P::PEventInv(pe, pre)$\label{line:eventInvCall}$
      ...
    }
}
\end{gobra}
\vspace{-0.7em}
\caption{Excerpt of the parametric trace invariant, defined via pattern matching over individual trace entries. All cases may refer to earlier events on the trace via the prefix parameter~\code{pre}.
  The case for a \event{Send} event enforces the message invariant, which is partly defined by the library, but itself parametric. 
  A \code{PEvent} represents any protocol-specific event~\code{pe}. The corresponding case of the trace invariant comes entirely from the protocol parameter \code{P}. 
}
\label{fig:trace_inv}
\vspace{-0.5em}
\end{figure}

\figref{trace_inv} shows a small excerpt of our trace invariant. The parameter \code{P} provides protocol-specific events and invariants. Besides various properties of the entire trace (not shown in the figure), the trace invariant also includes event-specific invariants. We show here the invariants for \event{Send} events and protocol-specific events. A \event{Send} event requires the message invariant, which itself can be parameterized by library clients. We prove that the generic part of the message invariant is weak enough to be preserved by the attacker; it states, in particular, that the terms occurring in the message do not leak secrets. The protocol-specific part of the message invariant may constrain only encrypted data and must allow the possibility that the encrypted data was fabricated by the attacker out of terms in the attacker knowledge. This ensures that it is maintained by all attacker actions.
For a protocol-specific event, the invariant is supplied entirely by the parameter \code{P}. In the following, we explain how this parameter is represented in our library implementations.

In the Go implementation of the library, we achieve parametricity by using Go interfaces. In particular, the \emph{generic protocol interface} declares mathematical functions (\eg \symb{isUnique} to indicate that an event is unique),
separation logic predicates (\eg protocol-specific event invariants),
and lemmas. Clients may then supply different implementations of this interface with different definitions for these functions, predicates, and lemmas. Gobra checks via suitable proof obligations that any concrete implementation satisfies key properties specified in the interface (\eg that protocol-specific invariants provide uniqueness witness resources for unique events). These properties can thus soundly be assumed while verifying the parametric library.
Analogously, parametricity \wrt events is enabled by declaring an \symb{Event} interface that protocol-specific events extend.

In VeriFast, we use its generic types (\eg for events), abstract mathematical functions (\eg \symb{isUnique}), and abstract lemmas (\eg that the event invariant is stable) to achieve parametricity and verify the library once for all protocols.
When verifying implementations of a particular protocol, these abstract functions and lemmas are concretized by providing function and lemma definitions via an automated syntactic transformation.
We prove that these definitions are not present while verifying the library, that is, we indeed verify the parametric version of the library, not a concrete instantiation.

\begin{figure}[t]
\centering
\begin{tabular}{ l c c c }
	\hline
Library & \acsp{LOC} & \acsp{LOS} & Verification time [s] \\ 
	\hline
	Go/Gobra & 83 & 6,932 & 126.1 \\
	C/VeriFast & 343 & 3,837 & 0.8 \\
	\hline
\end{tabular}
\vspace{-0.4em}
\caption{\acfp{LOC-Capital}\acused{LOC} and \acfp{LOS} (incl.\ ghost code) for the library code, together with the average verification times in Gobra and VeriFast.
}
\label{fig:library_size}
\vspace{-0.5em}
\end{figure}

\mypar{Statistics}
\figref{library_size} shows the size and verification time for the two verified implementations of our library.
As explained above, the library consists mostly of ghost code; only around 1\% is executable code.
All methods and lemmas together are verified in ca.\ 2~minutes. 
The library for VeriFast is currently less complete than the one for Gobra, and lacks several useful lemmas, which explains the smaller amount of ghost code.
It verifies in 1~second (VeriFast is usually faster than Gobra, but provides less automation).
We have measured the verification times by averaging over 30~runs on a 2020 Apple Mac mini with M1 processor and macOS Ventura 13.0.1\@. Since the library is verified once for all protocols, this effort does not have to be repeated when verifying a concrete protocol implementation.

\subsection{Technical Solutions}
\label{sec:technical-solutions}

In the following, we summarize the features of a verification technique and tool required to implement the
main abstractions (\eg terms, events, global traces) provided by our library.

\mypar{Custom Mathematical Theories}
Verification techniques frequently represent information as values of mathematical theories, such as sets, tuples, sequences, etc.\ In contrast to the corresponding data types of a programming language, these values are immutable and their operations have a direct representation in the verification logic, which simplifies reasoning. 

We use mathematical theories to represent the abstractions we use in specifications and ghost code: events, the global trace, secrecy labels, and terms with equational theories.
Conceptually, events form an \ac{ADT}, as does the global trace (a functional list). Labels and terms are also algebraic structures, but with additional properties (\eg labels have a commutative join operator).

The Gobra implementation of the library represents all these structures as uninterpreted functions with appropriate axioms (analogous to how custom theories are encoded to SMT solvers). \Eg for the ADT of events, we define axioms that ADT constructors are injective in their arguments, and that different constructors produce different events.
For terms, we define additional axioms to encode cryptographic equational theories, \eg ${g^{x\vb}}^y = {g^{y\vb}}^x$, where $g^x$ denotes Diffie-Hellman exponentiation with generator~$g$\@.
VeriFast supports ADTs natively, which we use to represent events and the global trace. For labels and terms, we again use uninterpreted functions and axioms (``auto-lemmas'') to express equational theories.

\mypar{Linear Resources}
Our novel support for proving injective agreement (\cf \secref{security_properties}) requires reasoning about the uniqueness of certain protocol-specific events. For this purpose, we introduce (ghost) memory locations and use separation logic's (exclusive) permissions to these locations as linear resources.
Separation logic predicates~\cite{parkinsonB05} allow us to construct linear resources with arbitrary parameters by mapping the parameter tuples injectively to a heap location.
We use such predicates to represent the uniqueness witnesses from \secref{security_properties}.

\mypar{Concurrency Reasoning}
As discussed in \secref{global-trace-encoding}, we model the global trace as a \emph{concurrent} data structure. Our approach is compatible with any verification technique that is able to reason about shared accesses to such a data structure and to maintain an invariant over it. Moreover, to encode local snapshots (\cf \secref{local-snapshots}), we require support for reasoning about properties that are stable under concurrency, which are offered by separation logic verifiers. 

We model the global trace as a data structure that is protected by a ghost lock. Neither Gobra nor VeriFast support ghost locks directly, but both offer standard locks. Reasoning about ghost locks and standard locks is almost identical, with one exception: Any non-ghost operations performed between acquiring and releasing a ghost lock must be atomic (because the lock will be erased by the compiler, so it does not actually provide mutual exclusion). This property is satisfied in our library.

\section{Case Studies}
\label{sec:case-studies}
We applied our methodology to Go implementations of the NSL public key protocol, signed Diffie-Hellman~(DH) key exchange, and the WireGuard VPN protocol, and prove strong security properties.
We also verified a C implementation of NSL, and obtained the same security properties as for the Go implementation. Our case studies (included in our artifact~\cite{PaperArtifact})
thus demonstrate the portability of our methodology across different protocols, programming languages, and verifiers, and its scalability to realistic, interoperable  implementations.
In this section, we first summarize each of the case studies and then discuss our experiences.

\subsection{Needham-Schroeder-Lowe}
We used Gobra to verify a Go implementation of the initiator and responder roles for the NSL protocol (\cf \figref{nsl}), and likewise VeriFast for a C implementation thereof.
We implemented the core of the protocol as one method per participant; we also verified an alternative Go implementation of the initiator that contains one method per message to demonstrate that verification is not sensitive to the code structure.
Both protocol roles store their program state locally and use an invariant to relate the local state via the term abstraction to their local snapshot and, thereby, to the global trace.

\begin{figure}[t]
  \makeatletter
  \lst@AddToHook{OnEmptyLine}{\vspace{2\baselineskip}}
  \makeatother
\begin{gobra}
struct Alice  {
  SkA: byte[]
  PkB: byte[]
  Na:  byte[]
  Nb:  byte[]
  /*@ Step: uint @*/
  ...
}

/*@ pred LocalInvariant(a: Alice) {$\label{line:local-invariant-start}$
  $\existsgray$naT,nbT.
  ... && // memory omitted
  (a.Step == 2 ==>
    UniWit(FinishA, naT)) &&$\label{line:alice-uniqueness-witness}$
  (a.Step >= 2 ==>
    $\gammagray$(naT) == a.Na &&$\label{line:alice-na-gamma}$
    a.Snap().NonceOccurs(naT)) &&$\label{line:alice-na-is-nonce}$
  (a.Step >= 3 ==>
    $\gammagray$(nbT) == a.Nb &&
    a.Snap().Occurs(FinishI(A, B, naT, nbT)))$\label{line:alice-finishi-occurs}$
} @*/$\label{line:local-invariant-end}$
\end{gobra}
\vspace{-0.7em}
\caption{The struct used for Alice's local state in the Go implementation of NSL, and an excerpt from the local invariant that relates this state to Alice's local snapshot and, thereby, to the global trace. The \code{Step} field is a ghost field that is used to track Alice's progress in the protocol.}
\label{fig:local-invariant-nsl}
\vspace{-0.5em}
\end{figure}

\looseness=-1
\Figref{local-invariant-nsl} illustrates the interplay between the local state and the local snapshot for the initiator, Alice.
Alice manages her program state in a struct~\code{Alice}.
The local invariant in  \linerange{local-invariant-start}{local-invariant-end} relates Alice's local state to her local snapshot (and, thus, indirectly to the global trace).
This invariant expresses ownership of the heap locations for the struct fields, which is omitted in the figure.
More importantly, it specifies properties about the struct fields depending on Alice's progress within the protocol execution, which we keep track of via the \code{Step} field.
\Eg Alice is in Step~2 after creating the nonce~\symb{naT} and sending the first message. In this case, the invariant includes 
the uniqueness witness (\lineref{alice-uniqueness-witness}), which allows Alice to create the \event{FinishI} event in a later protocol step. The invariant relates the concrete nonce field \code{Na} to its term representation \code{naT} using the concretization function $\gamma$ (\lineref{alice-na-gamma}). This term is used in the events on the global trace. In particular, the \event{CreateNonce} event for \code{naT}  must occur on Alice's local snapshot \code{a.Snap()} (\lineref{alice-na-is-nonce}) and, thus, on the global trace.
Once Alice's protocol run has reached the final Step~3, it adds the \event{FinishI} event to the trace. The invariant reflects this  by stating that the event is on the local snapshot (\lineref{alice-finishi-occurs}).
Knowledge about \event{FinishI}'s existence on the trace entails (via the trace invariant) properties about the \event{Respond} event created by Bob (recall \figref{nsl_event_inv}). This knowledge, together with \event{FinishI}'s uniqueness witness (now stored in the trace invariant), allows us to prove injective agreement with Bob as explained in \secref{authentication}.

We prove for all participant implementations that they achieve (at the end of a protocol execution) injective agreement on, and secrecy for, both nonces~\symb{na} and \symb{nb}.
Additionally, we verify initialization code that creates an empty trace, generates public/private key pairs for the participants, and spawns two~participant instances as Go routines (similar to threads) to demonstrate that key distribution (although not part of the protocol) can be modeled using our methodology.

\subsection{Signed Diffie-Hellman}
\label{sec:dh}
In the signed DH key exchange (\cf \figref{dh}), Alice and Bob each generate a DH secret key~$x$ and $y$, respectively.
By transmitting the corresponding (signed) DH public keys~$g^x$ and $g^y$, they agree on the shared key~${g^{x\vb}}^y$ after a successful protocol run.

We prove secrecy for, and injective agreement on, the shared key.
The proof is similar to the proof for NSL, which allowed us to reuse substantial parts.
One noticable difference is that proving that both participants derive the \emph{same} shared key requires the equational theory for Diffie-Hellman exponentiation. Our reusable verification library provides such custom theories, as discussed in \secref{technical-solutions}.
Another difference is that the nonces $x$ and $y$ are not directly part of the protocol messages (in contrast to $na$ and $nb$ in NSL), but are existentially quantified in the message invariants.
A participant instance can determine the values of these existentially-quantified variables after receiving a protocol message, by connecting the message invariant to its own Diffie-Hellman secret key.

\begin{figure}
\setlength{\abovedisplayskip}{0pt}
\setlength{\belowdisplayskip}{0pt}
\begin{alignat*}{3}
	& \symb{M1}.\quad & A & \rightarrow B &&:\  g^x				\\
	& \symb{M2}.\quad & B & \rightarrow A &&:\  \lbbar\langle 0, B, A, g^x, g^y \rangle\rbbar_{\symb{sk_{B}}} \\
	& \symb{M3}.\quad & A & \rightarrow B &&:\  \lbbar\langle 1, A, B, g^y, g^x \rangle\rbbar_{\symb{sk_{A}}}				\\
\end{alignat*}
\vspace{-2.5em} 
\caption{The signed DH key exchange protocol, where $g^x$ and $g^y$ are DH public keys and $\lbbar m \rbbar_{\symb{sk}}$ denotes cryptographically signing a payload~$m$ with a secret key~$sk$.}
\label{fig:dh}
\vspace{-0.5em}
\end{figure}

\subsection{WireGuard}
\label{sec:wireguard}

As our main case study, we have picked the WireGuard VPN protocol as a real-world protocol achieving even stronger security properties than NSL\@.
WireGuard is a modern, open-source, and cross-platform VPN that uses state-of-the-art cryptography and is part of the Linux kernel. The WireGuard protocol, which performs an authenticated key exchange, has been analyzed rigorously~\cite{DowlingP18,LippBB19}.
It consists of a handshake and transport phase.
During the handshake phase, the protocol participants agree on two session keys~\symb{k_{IR}} and~\symb{k_{RI}}, one per direction, that are used to symmetrically encrypt VPN packets in the transport phase.

\mypar{Implementation}
\looseness=-1
We used the existing Go implementation from Arquint et~al.~\cite{SoundVerificationWireGuardSources}, whose memory safety proof we reused.
Thanks to our reusable verification library's parametric design, instantiating our library with the concrete networking library used by the WireGuard implementation was straightforward and only required annotating cryptographic functions with suitable postconditions.

Arquint et~al.'s implementation is a subset of WireGuard's official Go implementation. It omits advanced VPN features such as DDoS protection, session key renewal, and support for multiple concurrent VPN connections. Moreover, their implementation reduces concurrency (which we partly re-introduced, as we discuss below), and replaces a message buffer pool by single-use buffers.
Our technique could handle the removed features with additional effort that is mostly orthogonal to our methodology. For instance, the implementation of DDoS protection collects metrics (which does not pose a challenge for program verification) and uses a slightly different handshake (whose verification is analogous to the standard handshake; the differences are not relevant for authenticity and secrecy). Supporting multiple VPN connections requires slightly more complex data structures, as do buffer pools, which can easily be handled in separation logic.
Despite these simplifications, the implementation is interoperable with other WireGuard implementations and supports tunneling IP packets via the established VPN connection to and from the operating system.
Since each IP packet is encrypted using a distinct counter value, a new handshake must be performed before the counter reaches its upper limit, which is not yet implemented.
Instead, the implementation stops forwarding IP packets at that point.

Our case study goes substantially beyond of Arquint et~al.'s, which focuses on connecting Tamarin to code-level verification, and proves weak forward secrecy and non-injective agreement in the presence of long-term key corruption.
We additionally consider session corruption, \ie the possibility for an attacker to obtain ephemeral key material, and prove \emph{strong} forward secrecy and \emph{injective agreement with actor key compromise (AKC) security}.
Furthermore, we have re-introduced (from the official WireGuard implementation) and verified the ability to send and receive transport messages in the initiator \emph{concurrently}.
This change increases TCP~throughput compared to Arquint et~al.'s implementation by a factor of 180, which illustrates how important such code optimizations are for real-world protocol implementations. The initiator verified in our work reaches 72\% of the official implementation's throughput; the additional concurrency needed to close the remaining performance gap,
requires standard concurrency reasoning in separation logic, which is supported by our methodology.

\mypar{Security Properties}
Since the session keys are based on ephemeral as well as long-term key material that is contributed by both protocol participants, WireGuard achieves strong security properties.
In particular, we prove forward secrecy and injective agreement, both with \ac{AKC} security.
While WireGuard optionally incorporates a pre-shared symmetric key into the handshake to increase security, we prove all security properties in this section without considering this pre-shared key, \ie we treat the pre-shared key as a term known to the attacker.
In the following, we call the initiator \emph{actor} and the responder \emph{peer} when proving a property from the initiator's perspective, and vice versa for the responder's perspective.

Forward secrecy protects sessions against future corruption of the long-term secret keys.
\Ie an attacker cannot compute the session keys of an already established session after learning the long-term secret keys.
However, sessions that get established after corrupting the long-term secret keys are not protected because the attacker can impersonate participants by knowing their secret keys.
The literature distinguishes between weak and strong forward secrecy. We were able to reuse formalizations from existing work~\cite{HoPBB22,GirolHSJCB20,DBLP:conf/ccs/CremersHHSM17}, which are phrased as trace-based security properties and, thus, directly supported by our methodology.

\begin{figure}[t]
  \makeatletter
  \lst@AddToHook{OnEmptyLine}{\vspace{-0.4\baselineskip}}
  \makeatother
\begin{gobra}
!t.AttackerKnows(s) ||$\label{line:pfs-unknown}$
  t.GetHs(ASess, PSess).IsCorrupted({$\colorbox{lightblue}{A,}$ P}) ||$\label{line:pfs-participant-corruption}$
  t.IsSessionCorrupted({ASess, PSess})$\label{line:pfs-session-corruption}$
\end{gobra}
\vspace{-0.7em}
\caption{Strong (without highlighted part) and weak forward secrecy (entire property) for a session key~\code{s} on trace~\code{t}.
	\code{A} and \code{P} identify the actor and peer that derive the key in their protocol sessions \code{ASess} and \code{PSess}, respectively.
	\code{t.GetHs(ASess, PSess)} returns a prefix of \code{t} up to and including the corresponding handshake's completion from the actor's perspective.
	The key is protected against (future) participant corruption after the handshake's completion.}
\label{fig:pfs}
\vspace{-0.5em}
\end{figure}

\emph{Weak forward secrecy} for a session key~\symb{s} (\cf entire \figref{pfs}) holds if at any point in time, one of the following three properties hold: (1)~The attacker does not know \symb{s} (\lineref{pfs-unknown}), (2)~the actor or its peer has been corrupted before completing the handshake (\lineref{pfs-participant-corruption}), or (3)~the actor's or peer's session has been corrupted (\lineref{pfs-session-corruption}).
In the last case, the attacker gets to read the long-term and short-term state of the corrupted participant, that is, the long-term secret key and also the session keys if the session is established.
Hence, the attacker either directly obtains the session keys if the session is already established or otherwise uses the long-term secret key to impersonate the actor or its peer while establishing a session in the future. The session keys of all other sessions remain secret.

Compared to weak forward secrecy, session keys satisfying \emph{strong forward secrecy} are additionally protected against corrupting the actor, \ie the highlighted actor is removed from \lineref{pfs-participant-corruption} in \figref{pfs}.
In particular, having access to the actor's long-term secret key does not allow the attacker to obtain the established session keys.
This resilience has been formalized as \acf{AKC} by Basin et al.~\cite{DBLP:conf/csfw/BasinCH14}, generalizing the more widely known notion of \ac{KCI}.

From the initiator's perspective, WireGuard guarantees strong forward secrecy for the two~session keys once the handshake has been completed. In contrast, the responder guarantees only weak forward secrecy by the end of the handshake, but achieves strong forward secrecy after receiving the first transport message. We verified strong forward secrecy at the appropriate points in the protocol for both roles.

The responder's forward secrecy guarantee is strengthened by receiving and successfully processing the first transport message because this message acts as a key confirmation.
\Ie the responder checks that it derived the same session key~\symb{k_{IR}} as the initiator, which allows the responder to detect \ac{AKC} attacks.
Based on strong forward secrecy for the session keys, we further prove that the VPN payloads are treated with the same level of secrecy. This induces proof obligations that a participant sends VPN payloads to the network in a way that they can be read only by participants allowed to read the session keys (\eg by encrypting the VPN payloads with one of the session keys).

\begin{figure}[t]
  \makeatletter
  \lst@AddToHook{OnEmptyLine}{\vspace{-0.4\baselineskip}}
  \makeatother
\begin{gobra}[%
	linebackgroundcolor={\ifnum\value{lstnumber}>5\ifnum\value{lstnumber}<8\color{lightblue}\fi\fi}] %
let commit = Commit(A,P,ASess,PSess,m) in
let running = Running(A,P,ASess,PSess,m) in
t.Occurs(commit) $\implies$
let prefix, i = t.GetPrefix(commit) in
(prefix.Occurs(running) &&
  !($\exists$A$'$,P$'$,ASess$'$,PSess$'$,i$'$. i != i$'$ &&
    t.OccursAt(Commit(A$'$,P$'$,ASess$'$,PSess$'$,m),i$'$))
) || prefix.IsCorrupted({P})
  || prefix.IsSessionCorrupted({ASess})
\end{gobra}
\vspace{-0.7em}
\caption{Injective agreement with \acs{AKC} security on a term~\code{m} from the actor~\code{A}'s perspective with a peer~\code{P}. The highlighted conjunct indicates the \code{Commit}~event's uniqueness requirement for the given \code{m}.}
\label{fig:inj-agreement-with-akc}
\vspace{-0.5em}
\end{figure}

Confirming the session keys not only enables strong forward secrecy for the session keys but also provides additional authentication guarantees:
\emph{Injective agreement with \ac{AKC} security} (\cf \figref{inj-agreement-with-akc}) states that (1) an actor~\symb{A} agrees with a peer~\symb{P} on a term~\symb{m} with a one-to-one correspondence between the \event{Commit} and \event{Running} events unless (2) the actor's session or (3) the peer's (short-term or long-term) state has been corrupted.
In particular, corrupting the actor is not sufficient to satisfy this property.
In contrast, the NSL protocol only satisfies injective agreement \emph{without} \ac{AKC} security (as presented in \secref{authentication}) from the initiator's perspective because having access to the initiator's secret key enables the attacker to decrypt the second message, obtain the nonces~\symb{na} and \symb{nb}, and construct a modified second message containing \symb{na} and \symb{nb'} with $\symb{nb} \neq \symb{nb'}$.
Thus, there is no correspondence between \event{Commit} and \event{Running} events in the case of actor key compromise because the initiator and responder do not agree on the nonces.

\subsection{Discussion}
\label{sec:case-studies-discussion}

\begin{figure}[t]
\centering
\begin{tabular}{@{\hspace*{1.8em}} l c c c }
	\hline
	\multicolumn{1}{l}{Case Studies} & \acsp{LOC} & \acsp{LOS} & Verification time [s] \\
	\hline
	\multicolumn{1}{l}{Go/Gobra} \\
	NSL & 197 & 924 & 97.6 \\ 
	Signed DH & 225 & 888 & 119.0 \\ 
	WireGuard & 557 & 5,815 & 268.1 \\ 
	\hline
	\multicolumn{1}{l}{C/VeriFast} \\
	NSL & 300 & 1,014 & 5.0 \\ 
	\hline
\end{tabular}
\vspace{-0.4em}
\caption{\acfp{LOC-Capital}\acused{LOC} and \acfp{LOS} (incl.\ ghost code) for our case studies, together with the average verification times in Gobra and VeriFast. We performed the measurements in the same way as in \figref{library_size}.
}
\label{fig:case_studies_size}
\vspace{-0.5em}
\end{figure}

For each case study, \figref{case_studies_size} reports the size of the implementation and its specification, along with the verification time.
We exclude the alternative NSL initiator implementation in Go, and the reusable verification library (recall \figref{library_size}).
However, the specifications do include the (ghost) code instantiating our reusable verification library:
it amounts to 374, 370, and 1,077~\acp{LOS} in Gobra for NSL, DH, and WireGuard, respectively, and 391~\acp{LOS} in VeriFast for NSL\@.

Overall, the annotation overhead for Gobra ranges between 3.9 and 10.4~\acp{LOS} per line of code, and is in the typical range for modular program verification. For example, Wolf~\etal~\cite{WolfACOPM21} report a ratio of~2.7 for a small example using concurrency in Gobra.
VST-Floyd~\cite{CaoBGDA18}, a separation logic-based verifier based on Coq, reports an average ratio of~13.9 for small C programs.
Both works verify only memory safety and functional properties, but do not include any (arguably much more complex) security properties, whereas our numbers include safety and security.
As another data point, Arquint~\etal~\cite{ArquintWLSSWBM23} prove security properties for WireGuard in Gobra with a ratio of~6.5, in addition to a Tamarin model of 350~lines and a Tamarin oracle implemented in Python, which ensures that Tamarin’s proof search terminates.
Counting the Tamarin model and oracle as specification, the overall ratio is~7.1 (and requires the use of three different languages).

The main challenge in our case studies was finding
a sufficiently strong trace invariant to prove the presented security properties.
For WireGuard, we had to find suitable message invariants such that the secrecy labels for the derived session keys~\symb{k_{IR}} and~\symb{k_{RI}} are sufficiently strong to prove weak and strong forward secrecy.
These secrecy labels are related to the message invariants because the session keys are derived by an eightfold~application of \acp{KDF} that factors in long-term and ephemeral, \ie session-specific, Diffie-Hellman key material that is either locally generated or received from the peer.
Thus, each \ac{KDF} application results in a new key with a secrecy label that depends on the secrecy labels of the input key material.
To keep the annotations related to the secrecy labels in the implementation to a minimum, we have implemented a lemma for each \ac{KDF} application that proves the result's secrecy label.

Moreover, the invariant for protocol-specific events has to be strong enough to prove injective agreement with \acs{AKC} resilience.
Our reusable verification library enables strengthening the proven authentication property from non-injective to injective by adding the uniqueness witness for each protocol-specific event.
This allowed us to focus on finding a suitable invariant for  non-injective agreement with \acs{AKC} resilience first, and then 
strengthen the authentication property, which required less than 40~additional \acp{LOS}.

After completing the proofs for sequential code, we re-introduced concurrency to the initiator's transport phase (recall \secref{wireguard}), which entailed only minimal proof changes and was done in an afternoon. This demonstrates that our separation-logic-based methodology enables security proofs that are robust \wrt nontrivial code changes.

\section{Trust Assumptions and Soundness}
\label{sec:trust-assumptions}

Our methodology allows us to prove strong security properties for implementations of security protocols. Like with all verification techniques, these proofs rely on several assumptions about the implementation and the execution environment.

We rely on the soundness of the used program verifier. Since our methodology is compatible with standard separation logic verifiers, we can mitigate this assumption by using a mature tool.

As is standard for symbolic cryptography, we assume perfect cryptographic operations (\eg absence of hash collisions, or that ciphertexts do not leak any information). We also do not verify that the implementations of the cryptographic primitives are functionally correct; while this is orthogonal to our work, our methodology could be combined with verified libraries like EverCrypt~\cite{DBLP:conf/sp/ProtzenkoPFHPBB20}.

Furthermore, we assume that all \emph{output} operations, \ie sending messages, are reflected on the global trace by corresponding events, which is the case when using the I/O operations provided by our verification library.
However, if an implementation uses, \eg inline assembly or third-party libraries to send messages to the network, the global trace has to reflect these messages nonetheless. 
Omitting any other event does not affect soundness, only completeness.

Lastly, we assume that the protocol terms corresponding to the byte arrays in a participant's \emph{initial} state, and those obtained from operations outside of our library (\eg read from a config file), are readable at least by that participant according to the terms' secrecy labels (recall \secref{secrecy}).
Otherwise, it would not be sufficient to model corruption of a participant by adding the class of terms readable by that participant to the attacker knowledge; the attacker could learn even more.
For all terms a participant can obtain by interacting with our verification library (\eg receiving messages, generating nonces, applying encryption), we prove in our library (via corresponding lemmas) that a participant can read these terms (and thus the terms leak as expected to the attacker in case of corruption).

We sketch soundness of our methodology in \ifthenelse{\boolean{soundness_proof}}{\appref{soundness-proof-sketch}}{\cite{full-version}} by showing that the global trace reflects all relevant protocol steps and, thus, any security property proved for the trace indeed holds for the protocol implementation.
For this purpose, we define a core programming language covering all protocol-relevant operations (\eg network I/O, cryptographic primitives), and those relevant for modeling an attacker (\eg corruption).
The language's operational semantics supports thread-local state and
explicitly maintains a shared global trace.
The thread-local state models the state of each participant and is manipulated via assignments in the participants' implementations. In contrast, the global trace is maintained automatically by our semantics and extended whenever a relevant protocol operation is executed.
We then define a Hoare logic to enable modularly verifying each participant implementation. The logic natively supports our methodology's local snapshots and the global trace. We prove that this logic is sound \wrt the operational semantics using a standard rule induction. Thereby, we obtain the guarantee that locally-verified participants, if composed with the attacker to a concurrent system, maintain the global trace invariant in all possible interleavings.
As one would expect, obtaining this global guarantee turned out to be the most challenging step in the soundness proof. Formally connecting our dedicated Hoare logic to a standard separation logic is straightforward, based on the encoding discussed throughout the paper (using the heap to store the thread-local state, a ghost lock to synchronize access, and a lock invariant to constrain the trace and all local snapshots).

\section{Related Work}
\label{sec:related_work}

  Much prior work on the verification of cryptographic protocols exists, and surveys~\cite{BarbosaBBBCLP21,AvallePS14,Blanchet12} provide an extensive overview of the field.
  We focus on \emph{modular verification of symbolic security properties}, and discuss the most closely related work first: techniques for verifying security of \emph{realistic protocol implementations}.

\citeauthors{Dupressoir}{DupressoirGJN11} use VCC~\cite{CohenDHLMSST09} to verify memory safety, non-injective agreement, and (via an external argument in Coq) weak secrecy, of two protocols implemented in C:
RPC and Otway-Rees.
To our knowledge, they are the first to encode a global protocol trace (``log'') as a concurrent data structure.
We generalize this idea to separation logic to make it much more widely applicable,
because their encoding relies on C's \lstinline@volatile@ fields and a VCC-specific program logic, neither of which are (widely) available in other languages and verifiers.
Moreover, since their logic (unlike separation logic) does not provide linear resources, proving injective agreement would require a nontrivial extension of their work. 
Their set-based trace encoding prevents proving, \eg forward secrecy (which we do); they account for principal corruption, but not session corruption (we account for both).
\citeauthors{Polikarpova}{PolikarpovaM12} extend this work by incorporating stepwise refinement to formally connect a model to an existing implementation, all encoded in VCC\@. This refinement decomposes the verification into smaller steps, but incurs additional overhead. 
Moreover, they remove the need for external arguments 
when proving weak secrecy. 
They verify the latter, and a variant of authentication, for a small stateful subset of TPM~2.0.

\looseness=-1
\citeauthors{Vanspauwen}{Vanspauwen015}, like us, use a separation-logic-based verifier (VeriFast~\cite{JacobsSPVPP11}), but they do not model a global trace. Consequently, properties that are commonly expressed over a trace potentially need to be assembled from individual assertions. They propose an extended symbolic model 
that strengthens attackers by permitting byte-wise manipulations, such as splitting and reconcatenating byte sequences, in addition to the usual symbolic manipulations. 
Our attacker operates on terms (standard for symbolic cryptography) but we could adapt their extension.
They specify PolarSSL's API using this extended model, and then verify secrecy and non-injective agreement of an NSL-implementation (and a few less complex protocols).
Unlike us, they do not consider session corruption.

\looseness=-1
\citeauthors{Arquint}{ArquintWLSSWBM23} suggest a two-step approach: 
First, a protocol \emph{model} is verified via Tamarin~\cite{SchmidtMCB12}. If successful, a separation logic predicate (one per participant) with I/O specifications~\cite{Penninckx0P15} is generated, specifying which I/O operations preserve the security properties of the model.
Second, existing implementations of the protocol can be verified against these predicates; if successful, the implementation is guaranteed to satisfy the model's properties.
This two-step workflow achieves tool reuse---Tamarin, and suitable separation logic verifiers---but requires expertise in two different fields of formal reasoning, and the existence of a Tamarin protocol model.
Moreover, limitations of Tamarin (\eg difficulties when verifying protocols with loops), and of the I/O specifications (unclear how to generate specifications suitable for a concurrent implementation) may prevent verifying corresponding implementations. Similar limitations apply to Sprenger et~al.'s work~\cite{DBLP:journals/pacmpl/0001KEW0CB20}, which connects protocol models verified in Isabelle/HOL~\cite{dblp:books/sp/nipkowpw02} via I/O specifications to separation logic verifiers.

  \citeauthors{Bhargavan}{BhargavanBDHKSW21} suggest DY*: a framework for verifying protocols implemented in F*~\cite{SwamyHKRDFBFSKZ16}, a functional language that enables type-system-based proofs, \eg using monadic effects and refinement types. DY* introduces the idea of a parametric library for reducing the per-protocol proof effort; an idea we adopted. DY*'s type system is tailored to F*, whereas our methodology supports a wide range of languages and tools. Moreover, by building on separation logic, we are able to prove stronger properties, in particular, injective agreement. Our methodology can be applied directly to existing implementations, as we demonstrate in the WireGuard case study. In contrast, DY* supports code generation, but additionally requires a hand-written (and partly protocol-specific) runtime wrapper~\cite{BhargavanB0HKSW21}.
  Included in DY*'s case study is the first automated verification of Signal~\cite{MarlinspikeP16} that proves forward and post-compromise security over an unbounded number of protocol messages. Our main case study is WireGuard, for which we prove, also for an unbounded number of messages, forward secrecy and injective agreement with \acs{AKC} resilience.
  Soundness of DY*'s global protocol trace depends on a specific coding discipline (one method per protocol step) that is not automatically enforced. If missed, the attacker is accidentally restricted, and security properties can be proven incorrectly.

An earlier line of work (\eg~\cite{BengtsonBFGM08,BhargavanFG10,BhargavanFG10b}) verifies security of functional programs written in F\# using the F7 type checker~\cite{BengtsonBFGM08}, but does not integrate equational theories, and has limited support for mutable state. Moreover, this work does not model the global protocol traces and, thus, states security properties only implicitly.

\citeauthors{K{\"{u}}sters}{KustersTG12} share our goal of reusing existing program analyzers 
and suggest an approach that enables non-interference checkers to establish computational indistinguishability results for sequential programs. To account for closed-system assumptions (typically made by such checkers) in the presence of an attacker-controlled environment, they restrict interaction with the latter to static, exception-free methods, and primitive (\ie value) types. How to extend their approach to trace-based properties and concurrent programs remains unclear.

Several security property verifiers exist that (unlike us) do not reuse existing program analyzers, \eg Csur~\cite{Goubault-LarrecqP05} and ASPIER~\cite{ChakiD09} (for C), and JavaSec~\cite{Jurjens06} (for Java). However, to reduce development costs, such domain-specific tools typically only implement semantics of a restricted language subset and, \eg assume crucial properties such as memory safety (which may render implementations insecure, \eg due to buffer overflows).

Prior work \cite{DonenfeldM18,DowlingP18,KobeissiNB19,LippBB19,GirolHSJCB20,HoPBB22} on verifying properties of WireGuard (our main case study) is concerned with verifying models of the protocol, not existing implementations.

Finally, a large body of work is concerned with mechanizing the verification of computational (rather than symbolic) properties; see aforementioned surveys for details.
This line of work establishes stronger guarantees by making weaker, more realistic cryptographic assumptions. For instance, Owl~\cite{Gancher23} allows one to verify computational security of protocols written in a dedicated language. Like in our work, their proofs are automated and compositional. However, due to probabilistic reasoning, verifying computational security is generally more challenging than reasoning about symbolic terms, and we are not aware of tools for modularly verifying computational security properties of existing implementations.
Recently, the first separation logics for probabilistic reasoning have been proposed \cite{TassarottiH19,BatzKKMN19,BartheHL20}, but we are not aware of automated verifiers for such logics. 

\section{Conclusions}
\label{sec:conclusions}

\looseness=-1
We presented a methodology for the modular verification of security protocol implementations. It enables proving strong security properties for realistic protocol implementations in the presence of a network-controlling attacker. By employing separation logic, we support efficient implementations using heap data structures, side effects, concurrency, etc. Encapsulating the global trace in a concurrent ghost data structure and our use of invariants over local snapshots allow our methodology to support arbitrary code structures and data representations, which is crucial for targeting existing implementations. Separation logic also allows us to specify resources in the trace invariant to express uniqueness of protocol-specific events, which is key to modularly proving injective agreement.

We have instantiated our methodology for Go and C and two corresponding verifiers. Our case studies on NSL, signed DH, and WireGuard demonstrate that our methodology handles existing and interoperable implementations of protocols with strong security properties, such as forward secrecy and injective agreement.

For future work, we plan to integrate our methodology with formally-verified cryptographic libraries to further reduce our trust assumptions. It would also be interesting to advance towards the computational model of cryptography by combining our work with probabilistic separation logic. 

\begin{acks}
	We thank the Werner Siemens-Stiftung (WSS) for their generous support of this project.
	We are grateful to Ralf Sasse for the helpful discussions and feedback on an earlier draft of this paper;
	Thibault Dardinier, for his suggestions for the soundness proof;
	Hugo Queinnec, for his critical questions;
	and the anonymous reviewers for their helpful feedback that helped us sharpen our contributions.
\end{acks}

\bibliographystyle{IEEEtran}
\balance
\bibliography{references}

\begin{thebibliography}{10}
\providecommand{\url}[1]{#1}
\csname url@samestyle\endcsname
\providecommand{\newblock}{\relax}
\providecommand{\bibinfo}[2]{#2}
\providecommand{\BIBentrySTDinterwordspacing}{\spaceskip=0pt\relax}
\providecommand{\BIBentryALTinterwordstretchfactor}{4}
\providecommand{\BIBentryALTinterwordspacing}{\spaceskip=\fontdimen2\font plus
\BIBentryALTinterwordstretchfactor\fontdimen3\font minus
  \fontdimen4\font\relax}
\providecommand{\BIBforeignlanguage}[2]{{%
\expandafter\ifx\csname l@#1\endcsname\relax
\typeout{** WARNING: IEEEtran.bst: No hyphenation pattern has been}%
\typeout{** loaded for the language `#1'. Using the pattern for}%
\typeout{** the default language instead.}%
\else
\language=\csname l@#1\endcsname
\fi
#2}}
\providecommand{\BIBdecl}{\relax}
\BIBdecl

\bibitem{Donenfeld17}
J.~A. Donenfeld, ``{WireGuard:} {Next} generation kernel network tunnel,'' in
  \emph{{NDSS}}.\hskip 1em plus 0.5em minus 0.4em\relax The Internet Society,
  2017.

\bibitem{MarlinspikeP16}
\BIBentryALTinterwordspacing
M.~Marlinspike and T.~Perrin. The {X3DH} key agreement protocol (revision 1).
  [Online]. Available: \url{https://www.signal.org/docs/specifications/x3dh/}
\BIBentrySTDinterwordspacing

\bibitem{Lowe96}
G.~Lowe, ``Breaking and fixing the {Needham-Schroeder} public-key protocol
  using {FDR},'' in \emph{{TACAS}}, ser. {LNCS}, vol. 1055.\hskip 1em plus
  0.5em minus 0.4em\relax Springer, 1996, pp. 147--166.

\bibitem{NeedhamS78}
R.~M. Needham and M.~D. Schroeder, ``Using encryption for authentication in
  large networks of computers,'' \emph{Commun. {ACM}}, vol.~21, no.~12, pp.
  993--999, 1978.

\bibitem{CVEHeartbleed13}
\BIBentryALTinterwordspacing
{CVE}, ``{CVE}-2014-0160,'' 2013. [Online]. Available:
  \url{https://www.cve.org/CVERecord?id=CVE-2014-0160}
\BIBentrySTDinterwordspacing

\bibitem{CVEMatrix21}
\BIBentryALTinterwordspacing
------, ``{CVE}-2021-40823,'' 2021. [Online]. Available:
  \url{https://www.cve.org/CVERecord?id=CVE-2021-40823}
\BIBentrySTDinterwordspacing

\bibitem{PozzaSD04}
D.~Pozza, R.~Sisto, and L.~Durante, ``{Spi2Java:} {Automatic} cryptographic
  protocol {Java} code generation from {spi} calculus,'' in
  \emph{{AINA}}.\hskip 1em plus 0.5em minus 0.4em\relax {IEEE} Computer
  Society, 2004, pp. 400--405.

\bibitem{CadeB12}
D.~Cad{\'{e}} and B.~Blanchet, ``From computationally-proved protocol
  specifications to implementations,'' in \emph{{ARES}}.\hskip 1em plus 0.5em
  minus 0.4em\relax {IEEE} Computer Society, 2012, pp. 65--74.

\bibitem{BhargavanBDHKSW21}
K.~Bhargavan, A.~Bichhawat, Q.~H. Do, P.~Hosseyni, R.~K{\"{u}}sters,
  G.~Schmitz, and T.~W{\"{u}}rtele, ``{DY*:} {A} modular symbolic verification
  framework for executable cryptographic protocol code,'' in
  \emph{EuroS{\&}P}.\hskip 1em plus 0.5em minus 0.4em\relax {IEEE}, 2021, pp.
  523--542.

\bibitem{BhargavanB0HKSW21}
------, ``An in-depth symbolic security analysis of the {ACME} standard,'' in
  \emph{{CCS}}.\hskip 1em plus 0.5em minus 0.4em\relax {ACM}, 2021, pp.
  2601--2617.

\bibitem{HoPBB22}
S.~Ho, J.~Protzenko, A.~Bichhawat, and K.~Bhargavan, ``{Noise*:} {A} library of
  verified high-performance secure channel protocol implementations,'' in
  \emph{{S{\&}P}}.\hskip 1em plus 0.5em minus 0.4em\relax {IEEE}, 2022, pp.
  107--124.

\bibitem{SwamyHKRDFBFSKZ16}
N.~Swamy, C.~Hritcu, C.~Keller, A.~Rastogi, A.~Delignat{-}Lavaud, S.~Forest,
  K.~Bhargavan, C.~Fournet, P.~Strub, M.~Kohlweiss, J.~K. Zinzindohoue, and
  S.~Z. B{\'{e}}guelin, ``Dependent types and multi-monadic effects in {F*},''
  in \emph{{POPL}}.\hskip 1em plus 0.5em minus 0.4em\relax {ACM}, 2016, pp.
  256--270.

\bibitem{BhargavanFGT08}
K.~Bhargavan, C.~Fournet, A.~D. Gordon, and S.~Tse, ``Verified interoperable
  implementations of security protocols,'' \emph{{ACM} Trans. Program. Lang.
  Syst.}, vol.~31, no.~1, pp. 5:1--5:61, 2008.

\bibitem{OShea08}
N.~O'Shea, ``Using {Elyjah} to analyse {Java} implementations of cryptographic
  protocols,'' in \emph{{FCS-ARSPA-WITS-2008}}, 2008.

\bibitem{AizatulinGJ12}
M.~Aizatulin, A.~D. Gordon, and J.~J{\"{u}}rjens, ``Computational verification
  of {C} protocol implementations by symbolic execution,'' in
  \emph{{CCS}}.\hskip 1em plus 0.5em minus 0.4em\relax {ACM}, 2012, pp.
  712--723.

\bibitem{KobeissiBB17}
N.~Kobeissi, K.~Bhargavan, and B.~Blanchet, ``Automated verification for secure
  messaging protocols and their implementations: {A} symbolic and computational
  approach,'' in \emph{EuroS{\&}P}.\hskip 1em plus 0.5em minus 0.4em\relax
  {IEEE}, 2017, pp. 435--450.

\bibitem{BhargavanBK17}
K.~Bhargavan, B.~Blanchet, and N.~Kobeissi, ``Verified models and reference
  implementations for the {TLS} 1.3 standard candidate,'' in
  \emph{{S{\&}P}}.\hskip 1em plus 0.5em minus 0.4em\relax {IEEE} Computer
  Society, 2017, pp. 483--502.

\bibitem{SistoCAP18}
R.~Sisto, P.~B. Copet, M.~Avalle, and A.~Pironti, ``Formally sound
  implementations of security protocols with {JavaSPI},'' \emph{Formal Aspects
  Comput.}, vol.~30, no.~2, pp. 279--317, 2018.

\bibitem{ProtzenkoBMB19}
J.~Protzenko, B.~Beurdouche, D.~Merigoux, and K.~Bhargavan, ``Formally verified
  cryptographic web applications in {WebAssembly},'' in \emph{{S{\&}P}}.\hskip
  1em plus 0.5em minus 0.4em\relax {IEEE}, 2019, pp. 1256--1274.

\bibitem{ArquintWLSSWBM23}
L.~Arquint, F.~A. Wolf, J.~Lallemand, R.~Sasse, C.~Sprenger, S.~N. Wiesner,
  D.~A. Basin, and P.~M{\"{u}}ller, ``Sound verification of security protocols:
  From design to interoperable implementations,'' in \emph{{SP}}.\hskip 1em
  plus 0.5em minus 0.4em\relax {IEEE}, 2023, pp. 1077--1093.

\bibitem{JacobsSPVPP11}
B.~Jacobs, J.~Smans, P.~Philippaerts, F.~Vogels, W.~Penninckx, and F.~Piessens,
  ``{VeriFast:} {A} powerful, sound, predictable, fast verifier for {C} and
  {Java},'' in \emph{{NASA} Formal Methods}, ser. {LNCS}, vol. 6617.\hskip 1em
  plus 0.5em minus 0.4em\relax Springer, 2011, pp. 41--55.

\bibitem{WolfACOPM21}
F.~A. Wolf, L.~Arquint, M.~Clochard, W.~Oortwijn, J.~C. Pereira, and
  P.~M{\"{u}}ller, ``{Gobra:} {Modular} specification and verification of {Go}
  programs,'' in \emph{{CAV}}, ser. {LNCS}, vol. 12759.\hskip 1em plus 0.5em
  minus 0.4em\relax Springer, 2021, pp. 367--379.

\bibitem{BlomH14}
S.~Blom and M.~Huisman, ``The {VerCors} tool for verification of concurrent
  programs,'' in \emph{{FM}}, ser. {LNCS}, vol. 8442.\hskip 1em plus 0.5em
  minus 0.4em\relax Springer, 2014, pp. 127--131.

\bibitem{SantosMNWG18}
J.~F. Santos, P.~Maksimovic, D.~Naudziuniene, T.~Wood, and P.~Gardner,
  ``{JaVerT:} {JavaScript} verification toolchain,'' \emph{Proc. {ACM} Program.
  Lang.}, vol.~2, no. {POPL}, pp. 50:1--50:33, 2018.

\bibitem{Astrauskas0PS19}
V.~Astrauskas, P.~M{\"{u}}ller, F.~Poli, and A.~J. Summers, ``Leveraging {Rust}
  types for modular specification and verification,'' \emph{Proc. {ACM}
  Program. Lang.}, vol.~3, no. {OOPSLA}, pp. 147:1--147:30, 2019.

\bibitem{OHearnRY01}
P.~W. O'Hearn, J.~C. Reynolds, and H.~Yang, ``Local reasoning about programs
  that alter data structures,'' in \emph{{CSL}}, ser. {LNCS}, vol. 2142.\hskip
  1em plus 0.5em minus 0.4em\relax Springer, 2001, pp. 1--19.

\bibitem{Reynolds02}
J.~C. Reynolds, ``{Separation Logic:} {A} logic for shared mutable data
  structures,'' in \emph{{LICS}}.\hskip 1em plus 0.5em minus 0.4em\relax {IEEE}
  Computer Society, 2002, pp. 55--74.

\bibitem{FilliatreGP14}
J.~Filli{\^{a}}tre, L.~Gondelman, and A.~Paskevich, ``The spirit of ghost
  code,'' in \emph{{CAV}}, ser. {LNCS}, vol. 8559.\hskip 1em plus 0.5em minus
  0.4em\relax Springer, 2014, pp. 1--16.

\bibitem{DolevY83}
D.~Dolev and A.~C. Yao, ``On the security of public key protocols,''
  \emph{{IEEE} Trans. Inf. Theory}, vol.~29, no.~2, pp. 198--207, 1983.

\bibitem{DBLP:journals/tit/DiffieH76}
W.~Diffie and M.~E. Hellman, ``New directions in cryptography,'' \emph{{IEEE}
  Trans. Inf. Theory}, vol.~22, no.~6, pp. 644--654, 1976.

\bibitem{PaperArtifact}
\BIBentryALTinterwordspacing
L.~Arquint, M.~Schwerhoff, V.~Mehta, and P.~M{\"{u}}ller, ``A generic
  methodology for the modular verification of security protocol
  implementations,'' Sep. 2023, artifact containing the reusable verification
  libraries and the case studies. [Online]. Available:
  \url{https://doi.org/10.5281/zenodo.8330913}
\BIBentrySTDinterwordspacing

\bibitem{DupressoirGJN11}
F.~Dupressoir, A.~D. Gordon, J.~J{\"{u}}rjens, and D.~A. Naumann, ``Guiding a
  general-purpose {C} verifier to prove cryptographic protocols,'' in
  \emph{{CSF}}.\hskip 1em plus 0.5em minus 0.4em\relax {IEEE} Computer Society,
  2011, pp. 3--17.

\bibitem{CohenDHLMSST09}
E.~Cohen, M.~Dahlweid, M.~A. Hillebrand, D.~Leinenbach, M.~Moskal, T.~Santen,
  W.~Schulte, and S.~Tobies, ``{VCC:} {A} practical system for verifying
  concurrent {C},'' in \emph{{TPHOLs}}, ser. {LNCS}, vol. 5674.\hskip 1em plus
  0.5em minus 0.4em\relax Springer, 2009, pp. 23--42.

\bibitem{DBLP:journals/pacmpl/0001KEW0CB20}
C.~Sprenger, T.~Klenze, M.~Eilers, F.~A. Wolf, P.~M{\"{u}}ller, M.~Clochard,
  and D.~A. Basin, ``Igloo: soundly linking compositional refinement and
  separation logic for distributed system verification,'' \emph{Proc. {ACM}
  Program. Lang.}, vol.~4, no. {OOPSLA}, pp. 152:1--152:31, 2020.

\bibitem{Boyland03}
J.~Boyland, ``Checking interference with fractional permissions,'' in
  \emph{{SAS}}, ser. {LNCS}, vol. 2694.\hskip 1em plus 0.5em minus 0.4em\relax
  Springer, 2003, pp. 55--72.

\bibitem{Lowe97a}
G.~Lowe, ``A hierarchy of authentication specification,'' in
  \emph{{CSFW}}.\hskip 1em plus 0.5em minus 0.4em\relax {IEEE} Computer
  Society, 1997, pp. 31--44.

\bibitem{BhargavanFG10}
K.~Bhargavan, C.~Fournet, and A.~D. Gordon, ``Modular verification of security
  protocol code by typing,'' in \emph{{POPL}}.\hskip 1em plus 0.5em minus
  0.4em\relax {ACM}, 2010, pp. 445--456.

\bibitem{parkinsonB05}
M.~J. Parkinson and G.~M. Bierman, ``{Separation Logic} and abstraction,'' in
  \emph{{POPL}}.\hskip 1em plus 0.5em minus 0.4em\relax {ACM}, 2005, pp.
  247--258.

\bibitem{DowlingP18}
B.~Dowling and K.~G. Paterson, ``A cryptographic analysis of the {WireGuard}
  protocol,'' in \emph{{ACNS}}, ser. {LNCS}, vol. 10892.\hskip 1em plus 0.5em
  minus 0.4em\relax Springer, 2018, pp. 3--21.

\bibitem{LippBB19}
B.~Lipp, B.~Blanchet, and K.~Bhargavan, ``A mechanised cryptographic proof of
  the {WireGuard} virtual private network protocol,'' in
  \emph{EuroS{\&}P}.\hskip 1em plus 0.5em minus 0.4em\relax {IEEE}, 2019, pp.
  231--246.

\bibitem{SoundVerificationWireGuardSources}
\BIBentryALTinterwordspacing
L.~Arquint, F.~A. Wolf, J.~Lallemand, R.~Sasse, C.~Sprenger, S.~N. Wiesner,
  D.~Basin, and P.~M{\"{u}}ller, ``Sound verification of security protocols:
  From design to interoperable implementations,'' Aug. 2022, {Tamarin} model
  {\&} verified {Go} implementation of the {WireGuard} {VPN} key exchange
  protocol. [Online]. Available: \url{https://doi.org/10.5281/zenodo.7409524}
\BIBentrySTDinterwordspacing

\bibitem{GirolHSJCB20}
G.~Girol, L.~Hirschi, R.~Sasse, D.~Jackson, C.~Cremers, and D.~A. Basin, ``A
  spectral analysis of {Noise:} {A} comprehensive, automated, formal analysis
  of {Diffie-Hellman} protocols,'' in \emph{{USENIX} Security Symposium}.\hskip
  1em plus 0.5em minus 0.4em\relax {USENIX} Association, 2020, pp. 1857--1874.

\bibitem{DBLP:conf/ccs/CremersHHSM17}
C.~Cremers, M.~Horvat, J.~Hoyland, S.~Scott, and T.~van~der Merwe, ``A
  comprehensive symbolic analysis of {TLS} 1.3,'' in \emph{{CCS}}.\hskip 1em
  plus 0.5em minus 0.4em\relax {ACM}, 2017, pp. 1773--1788.

\bibitem{DBLP:conf/csfw/BasinCH14}
D.~A. Basin, C.~Cremers, and M.~Horvat, ``Actor key compromise: Consequences
  and countermeasures,'' in \emph{{CSF}}.\hskip 1em plus 0.5em minus
  0.4em\relax {IEEE} Computer Society, 2014, pp. 244--258.

\bibitem{CaoBGDA18}
Q.~Cao, L.~Beringer, S.~Gruetter, J.~Dodds, and A.~W. Appel, ``Vst-floyd: {A}
  separation logic tool to verify correctness of {C} programs,'' \emph{J.
  Autom. Reason.}, vol.~61, no. 1-4, pp. 367--422, 2018.

\bibitem{DBLP:conf/sp/ProtzenkoPFHPBB20}
J.~Protzenko, B.~Parno, A.~Fromherz, C.~Hawblitzel, M.~Polubelova,
  K.~Bhargavan, B.~Beurdouche, J.~Choi, A.~Delignat{-}Lavaud, C.~Fournet,
  N.~Kulatova, T.~Ramananandro, A.~Rastogi, N.~Swamy, C.~M. Wintersteiger, and
  S.~Z. B{\'{e}}guelin, ``{EverCrypt:} {A} fast, verified, cross-platform
  cryptographic provider,'' in \emph{{S{\&}P}}.\hskip 1em plus 0.5em minus
  0.4em\relax {IEEE}, 2020, pp. 983--1002.

\bibitem{BarbosaBBBCLP21}
M.~Barbosa, G.~Barthe, K.~Bhargavan, B.~Blanchet, C.~Cremers, K.~Liao, and
  B.~Parno, ``{SoK:} {Computer}-aided cryptography,'' in \emph{{S{\&}P}}.\hskip
  1em plus 0.5em minus 0.4em\relax {IEEE}, 2021, pp. 777--795.

\bibitem{AvallePS14}
M.~Avalle, A.~Pironti, and R.~Sisto, ``Formal verification of security protocol
  implementations: a survey,'' \emph{Formal Aspects Comput.}, vol.~26, no.~1,
  pp. 99--123, 2014.

\bibitem{Blanchet12}
B.~Blanchet, ``Security protocol verification: Symbolic and computational
  models,'' in \emph{{POST}}, ser. {LNCS}, vol. 7215.\hskip 1em plus 0.5em
  minus 0.4em\relax Springer, 2012, pp. 3--29.

\bibitem{PolikarpovaM12}
N.~Polikarpova and M.~Moskal, ``Verifying implementations of security protocols
  by refinement,'' in \emph{{VSTTE}}, ser. {LNCS}, vol. 7152.\hskip 1em plus
  0.5em minus 0.4em\relax Springer, 2012, pp. 50--65.

\bibitem{Vanspauwen015}
G.~Vanspauwen and B.~Jacobs, ``Verifying protocol implementations by augmenting
  existing cryptographic libraries with specifications,'' in \emph{{SEFM}},
  ser. {LNCS}, vol. 9276.\hskip 1em plus 0.5em minus 0.4em\relax Springer,
  2015, pp. 53--68.

\bibitem{SchmidtMCB12}
B.~Schmidt, S.~Meier, C.~Cremers, and D.~A. Basin, ``Automated analysis of
  {Diffie-Hellman} protocols and advanced security properties,'' in
  \emph{{CSF}}.\hskip 1em plus 0.5em minus 0.4em\relax {IEEE} Computer Society,
  2012, pp. 78--94.

\bibitem{Penninckx0P15}
W.~Penninckx, B.~Jacobs, and F.~Piessens, ``Sound, modular and compositional
  verification of the input/output behavior of programs,'' in \emph{{ESOP}},
  ser. {LNCS}, vol. 9032.\hskip 1em plus 0.5em minus 0.4em\relax Springer,
  2015, pp. 158--182.

\bibitem{dblp:books/sp/nipkowpw02}
T.~Nipkow, L.~C. Paulson, and M.~Wenzel, \emph{Isabelle/HOL - {A} Proof
  Assistant for Higher-Order Logic}, ser. {LNCS}.\hskip 1em plus 0.5em minus
  0.4em\relax Springer, 2002, vol. 2283.

\bibitem{BengtsonBFGM08}
J.~Bengtson, K.~Bhargavan, C.~Fournet, A.~D. Gordon, and S.~Maffeis,
  ``Refinement types for secure implementations,'' in \emph{{CSF}}.\hskip 1em
  plus 0.5em minus 0.4em\relax {IEEE} Computer Society, 2008, pp. 17--32.

\bibitem{BhargavanFG10b}
K.~Bhargavan, C.~Fournet, and N.~Guts, ``Typechecking higher-order security
  libraries,'' in \emph{{APLAS}}, ser. {LNCS}, vol. 6461.\hskip 1em plus 0.5em
  minus 0.4em\relax Springer, 2010, pp. 47--62.

\bibitem{KustersTG12}
R.~K{\"{u}}sters, T.~Truderung, and J.~Graf, ``A framework for the
  cryptographic verification of {Java}-like programs,'' in \emph{{CSF}}.\hskip
  1em plus 0.5em minus 0.4em\relax {IEEE} Computer Society, 2012, pp. 198--212.

\bibitem{Goubault-LarrecqP05}
J.~Goubault{-}Larrecq and F.~Parrennes, ``Cryptographic protocol analysis on
  real {C} code,'' in \emph{{VMCAI}}, ser. {LNCS}, vol. 3385.\hskip 1em plus
  0.5em minus 0.4em\relax Springer, 2005, pp. 363--379.

\bibitem{ChakiD09}
S.~Chaki and A.~Datta, ``{ASPIER:} {An} automated framework for verifying
  security protocol implementations,'' in \emph{{CSF}}.\hskip 1em plus 0.5em
  minus 0.4em\relax {IEEE} Computer Society, 2009, pp. 172--185.

\bibitem{Jurjens06}
J.~J{\"{u}}rjens, ``Security analysis of crypto-based {Java} programs using
  automated theorem provers,'' in \emph{{ASE}}.\hskip 1em plus 0.5em minus
  0.4em\relax {IEEE} Computer Society, 2006, pp. 167--176.

\bibitem{DonenfeldM18}
\BIBentryALTinterwordspacing
J.~A. Donenfeld and K.~Milner. Formal verification of the {WireGuard} protocol.
  [Online]. Available:
  \url{https://www.wireguard.com/papers/wireguard-formal-verification.pdf}
\BIBentrySTDinterwordspacing

\bibitem{KobeissiNB19}
N.~Kobeissi, G.~Nicolas, and K.~Bhargavan, ``{Noise Explorer:} {Fully}
  automated modeling and verification for arbitrary {Noise} protocols,'' in
  \emph{EuroS{\&}P}.\hskip 1em plus 0.5em minus 0.4em\relax {IEEE}, 2019, pp.
  356--370.

\bibitem{Gancher23}
J.~Gancher, S.~Gibson, P.~Singh, S.~Dharanikota, and B.~Parno, ``Owl:
  Compositional verification of security protocols via an information-flow type
  system,'' in \emph{{SP}}.\hskip 1em plus 0.5em minus 0.4em\relax {IEEE},
  2023, pp. 1130--1147.

\bibitem{TassarottiH19}
J.~Tassarotti and R.~Harper, ``A {Separation Logic} for concurrent randomized
  programs,'' \emph{Proc. {ACM} Program. Lang.}, vol.~3, no. {POPL}, pp.
  64:1--64:30, 2019.

\bibitem{BatzKKMN19}
K.~Batz, B.~L. Kaminski, J.~Katoen, C.~Matheja, and T.~Noll, ``{Quantitative
  Separation Logic:} {A} logic for reasoning about probabilistic pointer
  programs,'' \emph{Proc. {ACM} Program. Lang.}, vol.~3, no. {POPL}, pp.
  34:1--34:29, 2019.

\bibitem{BartheHL20}
G.~Barthe, J.~Hsu, and K.~Liao, ``A probabilistic {Separation Logic},''
  \emph{Proc. {ACM} Program. Lang.}, vol.~4, no. {POPL}, pp. 55:1--55:30, 2020.

\bibitem{DBLP:journals/entcs/Vafeiadis11}
V.~Vafeiadis, ``Concurrent separation logic and operational semantics,'' in
  \emph{{MFPS}}, ser. Electronic Notes in Theoretical Computer Science, vol.
  276.\hskip 1em plus 0.5em minus 0.4em\relax Elsevier, 2011, pp. 335--351.

\end{thebibliography}

\clearpage
\appendix

\ifthenelse{\boolean{soundness_proof}}{
  \section{Soundness Proof Sketch}
\label{sec:soundness-proof-sketch}

Intuitively, we argue soundness of our methodology by showing that, given a distributed system of verified protocol implementations and an arbitrary attacker, the systems' set of possible executions is a subset of the executions permitted by the verification trace invariant, which in turn is a subset of the executions that satisfy the desired security properties.
To achieve this, we define a minimal but concurrent programming language with primitives for security-relevant operations such as sending messages or creating nonces, and a corresponding operational semantics (\secref{operational-semantics}) that reflects these operations on a global (i.e. system-wide shared) trace.
We then define an axiomatic semantics (\secref{program-logic}) parameterized with a trace invariant that we prove sound \wrt the operational semantics. \Ie we show that the axiomatic proof rules enforce the trace invariant.
Since the global trace maintained by the operational semantics reflects all relevant protocol steps, and because our axiomatic semantics is proven sound, we can conclude that the aforementioned trace inclusion holds (\secref{trace-inclusion}).
In each subsection, we additionally relate the semantics defined for the proof sketch with the verification performed by an off-the-shelf separation-logic verifier (such as Gobra) against our reusable verification library.

\subsection{Language and Operational Semantics}
\label{sec:operational-semantics}

On a high-level, we consider a distributed system consisting of multiple components: either instances of a protocol implementation, i.e. participants, or the attacker. Our programming language does not support user-defined shared variables or a heap, and each participant executes its commands in its own local state. However, security-relevant commands additionally mutate the global trace to reflect the performed operation.

Consequently, our system's configurations comprise a local configuration per component, and the global trace~$\tau$. A local configuration for a protocol participant~$i$ is characterized by its local command~$C_i$ and local state~$\sigma_i$. The local configuration for the attacker is similar, but additionally contains a knowledge set~$k_a$ that stores all symbolic terms that are known to the attacker.

\begin{definition}\textbf{Local program states.}
    Local program states, ranged over by $\sigma$, are total functions from local variables (in the set $\vars$) to values (in the set $\vals$).
    $$\states \triangleq \vars \rightarrow \vals$$
\end{definition}

We define our programming language such that it directly works with symbolic terms instead of bytes, which avoids having to complicate the semantics to reflect the orthogonal issue of mapping between the bytes and terms.

\begin{definition}\textbf{System configurations.}\label{def:system_configuration}
	A configuration of our distributed system has the shape
	\begin{align*}
		\systemstatefive{\participantcomponent{C_1}{\sigma_1}}{\cdots}{\participantcomponent{C_n}{\sigma_n}}{\participantcomponent{C_a}{\sigma_a}}{k_a}{\tau}
	\end{align*}
	where $\participantcomponent{C_i}{\sigma_i}$ denotes the local command and local state of participant~$i$, $\participantcomponent{C_a}{\sigma_a}$ denotes the local command and local state of the attacker~$a$, $k_a$ is the attacker's knowledge set and $\tau$ denotes the system's global trace.
\end{definition}

Observe that the attacker's knowledge set~$k_a$ is not part of the attacker's local configuration $\participantstate{C_a}{\sigma_a}$, even though only commands executed by the attacker possibly modify~$k_a$. By using the same shape for local configurations of participants and the attacker, both can apply the same operational semantics rules, \eg for sequential composition.

\begin{definition}\textbf{Programming language.}\label{def:language}
We consider the following programming language, where $C$ ranges over commands, $x$ and $\vec{x}$ over variables and lists of variables in the set $\vars$, respectively, and $e$ over expressions (modeled as total functions from $\states$ to $\vals$):
\begin{align*}
C \triangleq
&\;
\cskip \mid
C \cseq C \mid
\cif{e}{C}{C} \mid
\cwhile{e}{C} \mid
\\
&\;
\cassign{x}{e} \mid
\csend{e} \mid
\crecv{x} \mid
\cnonce{x} \mid
\\
&\;
\chash{x}{e} \mid
\cpk{x}{e} \mid
\cenc{x}{e}{e} \mid
\cdec{x}{x}{e}{e} \mid
\\
&\;
\cdrop{e} \mid
\clearn{e} \mid
\cchoose{x} \mid
\ccorrupt{e} \mid
\\
&\;
\cfork{\vec{x}}{C}
\end{align*}
\end{definition}

Besides standard commands, such as sequential composition and assignment, the programming language provides several commands essential for protocol implementations: for sending and receiving a network message, for generating a nonce, hashing a term, generating a public key corresponding to a given secret key ($\mathit{pk}$), and encrypting and decrypting a term with a key.
Additionally, the programming language provides commands only available to the attacker: dropping a message from the network, adding the value of a local variable to the attacker knowledge ($\mathit{learn}$), non-deterministically obtaining a term from the attacker knowledge ($\mathit{choose}$), and corrupting the state of specific participant (each participant has a unique id/index).

Finally, fork starts a new thread executing the provided command, which corresponds to spawning a new participant or the attacker. The new thread operates on its own local state, which initially maps the variables in $\vec{x}$ to the same values as the state in which the fork command is executed. This command is used to bootstrap the distributed system, as discussed in \secref{trace-inclusion}.

The expression language comprises symbolic terms for booleans and integers, and the usual operations thereon. We assume well-typed programs, \eg that if-conditions are of type boolean.

\begin{definition}\textbf{Operational semantics}\label{def:operational_semantics}
	\Figref{semantics} defines the small-step operational semantics for our programming language.
\end{definition}

\begin{figure*}
\begin{center}
\[
\begin{array}{c}
\Inf[\rulename{Local}]{\smallsteppart{C_i}{\participantstate{\sigma_i}{\tau}}{C'_i}{\participantstate{\sigma'_i}{\tau'}}}{\smallstepsystem{\systemstatefour{\cdots}{\participantcomponent{C_i}{\sigma_i}}{\cdots}{k_a}{\tau}}{\systemstatefour{\cdots}{\participantcomponent{C'_i}{\sigma'_i}}{\cdots}{k_a}{\tau'}}}

\hspace{5mm}

\Inf[\rulename{Attacker}]{\smallsteppart{C_a}{\attackerstate{\sigma_a}{k_a}{\tau}}{C'_a}{\attackerstate{\sigma'_a}{k'_a}{\tau'}}}{\smallstepsystem{\systemstatethree{\cdots}{\participantcomponent{C_a}{\sigma_a}}{k_a}{\tau}}{\systemstatethree{\cdots}{\participantcomponent{C'_a}{\sigma'_a}}{k'_a}{\tau'}}}

\\[2em]

\Inf[\rulename{Skip}]{\smallsteppart{\cskip}{\participantstate{\sigma}{\tau}}{\cempty}{\participantstate{\sigma}{\tau}}}

\hspace{5mm}

\Inf[\rulename{Seq1}]{\smallsteppart{C_1}{\participantstate{\sigma}{\tau}}{\cempty}{\participantstate{\sigma'}{\tau'}}}{\smallsteppart{C_1 \cseq C_2}{\participantstate{\sigma}{\tau}}{C_2}{\participantstate{\sigma'}{\tau'}}}

\hspace{5mm}

\Inf[\rulename{Seq2}]{\smallsteppart{C_1}{\participantstate{\sigma}{\tau}}{C'_1}{\participantstate{\sigma'}{\tau'}}}{\smallsteppart{C_1 \cseq C_2}{\participantstate{\sigma}{\tau}}{C'_1 \cseq C_2}{\participantstate{\sigma'}{\tau'}}}

\\[2em]

\Inf[\rulename{If1}][e(\sigma_i) = \ttrue]{\smallsteppart{\cif{e}{C_1}{C_2}}{\participantstate{\sigma}{\tau}}{C_1}{\participantstate{\sigma}{\tau}}}

\hspace{5mm}

\Inf[\rulename{If2}][e(\sigma_i) \neq \ttrue]{\smallsteppart{\cif{e}{C_1}{C_2}}{\participantstate{\sigma}{\tau}}{C_2}{\participantstate{\sigma}{\tau}}}

\\[2em]

\Inf[\rulename{While}]{\smallsteppart{\cwhile{e}{C}}{\participantstate{\sigma}{\tau}}{\cif{e}{C \cseq \cwhile{e}{C}}{\cskip}}{\participantstate{\sigma}{\tau}}}

\\[2em]

\Inf[\rulename{Assign}]{\smallsteppart{\cassign{x}{e}}{\participantstate{\sigma}{\tau}}{\cempty}{\participantstate{\sigma[x \mapsto e(\sigma)]}{\tau}}}

\\[2em]

\Inf[\rulename{Send}]{\smallsteppart{\csend{e}}{\participantstate{\sigma}{\tau}}{\cempty}{\participantstate{\sigma[\snapshot \mapsto \tau + \esend{e(\sigma)}]}{\tau + \esend{e(\sigma)}}}}

\\[2em]

\Inf[\rulename{Recv}][v \in \receivable{\tau}]{\smallsteppart{\crecv{x}}{\participantstate{\sigma}{\tau}}{\cempty}{\participantstate{\sigma[x \mapsto v]}{\tau}}}

\\[2em]

\Inf[\rulename{NonceGen}][\fresh{v}{\tau}]{\smallsteppart{\cnonce{x}}{\participantstate{\sigma}{\tau}}{\cempty}{\participantstate{\sigma[x \mapsto v, \snapshot \mapsto \tau + \enonce{v}]}{\tau + \enonce{v}}}}

\\[2em]

\Inf[\rulename{Hash}]{\smallsteppart{\chash{x}{e}}{\participantstate{\sigma}{\tau}}{\cempty}{\participantstate{\sigma[x \mapsto \thash{e(\sigma)}]}{\tau}}}

\hspace{5mm}

\Inf[\rulename{Pk}]{\smallsteppart{\cpk{x}{e}}{\participantstate{\sigma}{\tau}}{\cempty}{\participantstate{\sigma[x \mapsto \tpk{e(\sigma)}]}{\tau}}}

\\[2em]

\Inf[\rulename{Enc}]{\smallsteppart{\cenc{x}{e_1}{e_2}}{\participantstate{\sigma}{\tau}}{\cempty}{\participantstate{\sigma[x \mapsto \tenc{e_1(\sigma)}{e_2(\sigma)}]}{\tau}}}

\\[2em]

\Inf[\rulename{DecSucc}][\exists v \ldotp e_2(\sigma) = \tenc{\tpk{e_1(\sigma)}}{v}]{\smallsteppart{\cdec{x}{\vok}{e_1}{e_2}}{\participantstate{\sigma}{\tau}}{\cempty}{\participantstate{\sigma[x \mapsto v, \vok \mapsto \ttrue]}{\tau}}}

\\[2em]

\Inf[\rulename{DecFail}][\forall v \ldotp e_2(\sigma) \neq \tenc{\tpk{e_1(\sigma)}}{v}]{\smallsteppart{\cdec{x}{\vok}{e_1}{e_2}}{\participantstate{\sigma}{\tau}}{\cempty}{\participantstate{\sigma[\vok \mapsto \tfalse]}{\tau}}}

\\[2em]

\Inf[\rulename{Drop}]{\smallsteppart{\cdrop{e}}{\attackerstate{\sigma}{k}{\tau}}{\cempty}{\attackerstate{\sigma[\snapshot \mapsto \tau + \edrop{e(\sigma)}]}{k}{\tau + \edrop{e(\sigma)}}}}

\\[2em]

\Inf[\rulename{Learn}]{\smallsteppart{\clearn{e}}{\attackerstate{\sigma}{k}{\tau}}{\cempty}{\attackerstate{\sigma}{k \cup \{ e(\sigma) \}}{\tau}}}

\hspace{5mm}

\Inf[\rulename{Choose}][v \in k]{\smallsteppart{\cchoose{x}}{\attackerstate{\sigma}{k}{\tau}}{\cempty}{\attackerstate{\sigma[x \mapsto v]}{k}{\tau}}}

\\[2em]

\Inf[\rulename{Corrupt}]{\smallstepsystem{\systemstatefive{\cdots}{\participantcomponent{C_i}{\sigma_i}}{\cdots}{\participantcomponent{\ccorrupt{i} \cseq C'_a}{\sigma_a}}{k_a}{\tau}}{\systemstatefive{\cdots}{\participantcomponent{C_i}{\sigma_i}}{\cdots}{\participantcomponent{C'_a}{\sigma_a}}{k_a \cup \values{\sigma_i}}{\tau + \ecorrupt{i}{\values{\sigma_i}}}}}

\\[2em]

\Inf[\rulename{Fork}]{\smallstepsystem{\systemstatefour{\cdots}{\participantcomponent{\cfork{x_1, \cdots, x_n}{C} \cseq C'}{\sigma_i}}{\cdots}{k_a}{\tau}}{\systemstatefive{\cdots}{\participantcomponent{C'}{\sigma_i}}{\cdots}{\participantcomponent{C}{[x_1 \mapsto \sigma_i(x_1), \cdots, x_n \mapsto \sigma_i(x_n), \snapshot \mapsto \sigma_i(\snapshot)]}}{k_a}{\tau}}}

\end{array}
\]
\end{center}
\caption{Small-step semantics. Since expressions are functions from states to values, $e(\sigma)$ denotes the evaluation of expression $e$ in state $\sigma$. $\sigma[x_1 \mapsto v_1, \cdots, x_n \mapsto v_n]$ denotes state update: a state that, for all $i$, $1 \leq i \leq n$, yields $v_i$ for $x_i$, and the value in $\sigma$ for all other variables. Appending to a trace is denoted by $+$, \eg $\tau + \enonce{v}$. $\config{\cempty}{\participantstate{\sigma}{\tau}}$ denotes a terminal state.}
\label{fig:semantics}
\end{figure*}

The rules for standard commands such as sequential composition and conditionals, are as expected, and we will thus only discuss non-standard aspects of our programming language.

\paragraph{Global trace}
Recall from \secref{global-trace-encoding} that in our verification methodology (as implemented in Gobra), we use a concurrent ghost data structure with ghost locks to manage the global trace. In our operational semantics, we instead represent the trace as the dedicated element $\tau$ in the system's state. Irregardless of the technical implementation we must ensure three crucial properties:
(1) Each operation may only append a single trace events. In our methodology, this is checked via a suitable proof obligation upon lock release; in our operational semantics, each rule adds at most one event.
(2) To ensure monotonicity, the trace may only grow. Checked upon lock release in our methodology; in our operational semantics, no rule shortens the trace.
(3) Each single operation must preserve the trace invariant. Checked upon lock release in our methodology; in our operational semantics, this is part of the soundness theorem (\cf \thmref{soundness}).

\paragraph{Local snapshots}
Recall from \cf~\secref{local-snapshots} that each participant has a trace snapshot, which enables participants to keep local invariants of trace prefixes. To enable corresponding assertions in our program logic (\secref{program-logic}), our operational semantics provide a local variable~\snapshot that is treated special in two ways: local states $\sigma$ map \snapshot to a sequence of trace events (not to a value in $\vals$), and program commands may not use (in particular, modify) \snapshot (a straightforward syntactical constraint).

\paragraph{Projecting system configurations}
The \rulename{Local} and \rulename{Attacker}~rule project project a system configuration down to a participant- and attacker-local configuration, respectively. Besides \rulename{Corrupt} and \rulename{Fork}, all other rules then operate on either a participant- or attacker- local configuration, depending on whether a command can be executed by participants and the attacker, or only by the attacker.

\paragraph{Network messages}
All operations modifying the network state, \ie sending and dropping a message, are recorded on the global trace~$\tau$, and we can thus compute the set of receivable messages as follows:

\begin{definition}\textbf{Messages on the network.}\label{def:receivable}
	\begin{align*}
	\receivable{\tau} \triangleq &\;
  	\{m \mid \forall m \ldotp \event{Send(m)} \in \tau \land \event{Drop(m)} \notin \tau\}
	\end{align*}
\end{definition}

Consequently, function $\receivable{\tau}$ occurs in rule \rulename{Recv}'s side-condition to constrain the set of messages to receive from Without loss of generality, this side-condition implies that we consider only non-blocking traces, \ie where $\mathit{recv}()$ is invoked when $\receivable{\tau}$ is non-empty.

\paragraph{Nonce freshness}
\rulename{NonceGen}~rule's side condition captures our perfect cryptography assumption that generated nonces are always fresh.

\begin{definition}\textbf{Freshness of nonces.}\label{def:fresh}
	Since all previously generated nonces have been recorded on the trace~$\tau$, we can define freshness of a nonce~$v$ on the global trace~$\tau$ as follows:
	\begin{align*}
	\fresh{v}{\tau} \triangleq &\;
  	v \notin \{n \mid \forall n, l \ldotp \event{Nonce(n, l)} \in \tau\}
	\end{align*}
\end{definition}

\paragraph{Corruption}
The \rulename{Corrupt}~rule expresses that the attacker knowledge is extended by all terms in the state~$\sigma_i$ of the corrupted participant~$i$. The attacker can make use of these newly learnt terms by executing $\cchoose{x}$ that non-deterministically picks a term in the attacker knowledge and assigns it to the local variable~$x$.

Note that the \rulename{Corrupt}~rule requires command $\ccorrupt{}$ to be followed by another command, skip. Baking in sequential composition avoids the need for further sequential rules that only differ in the kind of configuration (system vs. local) they operate on. Rule \rulename{Fork} is defined analogously.

\paragraph{Forking}
Rule \rulename{Fork} extends the system configuration by another local configuration with the forked command to execute and the new thread's initial state. This new state maps the variables~$x_1$ to $x_n$ and \snapshot to the same value as $\sigma_i$, \ie the state in which the fork command is executed, which enables, \eg the sharing of public keys.

\subsection{Program Logic}
\label{sec:program-logic}

We now present a program logic that enables local reasoning about each participant, while guaranteeing that the trace invariant is maintained even when composing arbitrarily many verified participants and the attacker to a distributed system. We first present several auxiliary definitions and lemmas, and then the logic's proof rules.

\begin{definition}\textbf{Trace prefix.}\label{def:prefix}
	We define the following predicate over two~traces expressing that $\tau_1$ is a prefix of $\tau_2$
	\begin{align*}
		\prefix{\tau_1}{\tau_2} \triangleq \; \exists p \ldotp \tau_1 + p = \tau_2
	\end{align*}
	where $p$ is a possibly empty sequence of trace events.
\end{definition}

\begin{lemma}\textbf{Prefix reflexivity.}\label{lem:prefix-reflexivity}
	\begin{align*}
		& \forall \tau \ldotp \prefix{\tau}{\tau}
	\end{align*}
\end{lemma}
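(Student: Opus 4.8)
The plan is to prove the statement directly by unfolding \Defref{prefix} and exhibiting a concrete witness for the existentially quantified sequence, with no induction required. Fixing an arbitrary trace $\tau$, the goal $\prefix{\tau}{\tau}$ reduces, by definition, to establishing $\exists p \ldotp \tau + p = \tau$. I would instantiate the witness $p$ with the empty sequence of trace events, so that the sole remaining obligation is $\tau + \emptytrace = \tau$.

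This final equality holds because the empty sequence is a right identity for trace concatenation $+$: appending nothing to $\tau$ leaves it unchanged. On the functional-list representation of traces used in the development, this is either a defining clause of $+$ or a one-line standard list lemma, so the argument requires no case analysis on the structure of $\tau$. I expect no genuine obstacle here; the lemma is essentially definitional, and the only point worth isolating is the right-identity of $+$. In the mechanized setting, an off-the-shelf separation-logic verifier (such as Gobra) discharges this automatically. The lemma's purpose is to record reflexivity of the prefix relation as a reusable fact—for instance, to justify immediately that a participant's local snapshot is a prefix of itself (\cf \secref{local-snapshots}), which is needed as a base case when tracking snapshot evolution across trace extensions.
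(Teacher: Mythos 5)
Your proof is correct and takes exactly the same approach as the paper's one-line argument: instantiate the existential in \defref{prefix} with the empty sequence and use that appending nothing leaves the trace unchanged. The paper's proof is simply the sentence ``Pick $p$ to be the empty sequence,'' so your additional remarks on the right-identity of $+$ are just a more explicit spelling-out of the same step.
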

\begin{proof}
	Pick $p$ to be the empty sequence in \defref{prefix}.
\end{proof}

\begin{lemma}\textbf{Prefix transitivity.}\label{lem:prefix-transitivity}
	\begin{align*}
		& \forall \tau_1, \tau_2, \tau_3 \ldotp \prefix{\tau_1}{\tau_2} \land \prefix{\tau_2}{\tau_3} \implies \prefix{\tau_1}{\tau_3}
	\end{align*}
\end{lemma}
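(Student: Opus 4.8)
The plan is to unfold the definition of $\prefix{\cdot}{\cdot}$ (\defref{prefix}) on both hypotheses, extract the two witnessing suffix sequences, concatenate them, and use associativity of trace concatenation to produce a single witness for the conclusion. This is a direct, constructive argument with no case analysis or induction required.

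First I would assume the two hypotheses $\prefix{\tau_1}{\tau_2}$ and $\prefix{\tau_2}{\tau_3}$. By \defref{prefix}, these yield sequences $p_1$ and $p_2$ of trace events such that $\tau_1 + p_1 = \tau_2$ and $\tau_2 + p_2 = \tau_3$. Substituting the first equation into the second gives $(\tau_1 + p_1) + p_2 = \tau_3$. Next I would appeal to associativity of the append operation $+$ on sequences of trace events, rewriting the left-hand side as $\tau_1 + (p_1 + p_2)$. Choosing the witness $p \triangleq p_1 + p_2$, which is again a (possibly empty) sequence of trace events and hence admissible in \defref{prefix}, establishes $\tau_1 + p = \tau_3$, i.e. exactly $\prefix{\tau_1}{\tau_3}$.

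The main (and only) thing worth flagging is that the argument relies on associativity of sequence concatenation. This is the standard property of the list representation of traces and is in fact already tacitly used in the operational semantics of \figref{semantics}, where events are appended via $+$; in a mechanized development it would be a library lemma about lists and would discharge automatically. Beyond this, the proof is purely by definition unfolding, so I do not anticipate any real obstacle.
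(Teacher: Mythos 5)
Your proposal is correct and matches the paper's own proof essentially verbatim: both unfold \defref{prefix} to obtain witnesses $p_1$ and $p_2$, substitute, and take $p_1 + p_2$ as the witness for the conclusion (the paper writes $\tau_1 + p_1 + p_2$ without parentheses, tacitly using the associativity you explicitly flag). No gaps.
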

\begin{proof}
	\begin{align*}
		& \prefix{\tau_1}{\tau_2} \land \prefix{\tau_2}{\tau_3} \\
		& \iffdef \exists p_1, p_2 \ldotp \tau_1 + p_1 = \tau_2 \land \tau_2 + p_2 = \tau_3 \\
		& \implies \exists p_1, p_2 \ldotp \tau_1 + p_1 + p_2 = \tau_3 \\
		& \iffdef \prefix{\tau_1}{\tau_3}
	\end{align*}
	where we pick in the last step $p$ in \defref{prefix} to be $p_1 + p_2$.
\end{proof}

Inspired by Vafeiadis~\cite{DBLP:journals/entcs/Vafeiadis11}, we express the semantics of judgements in our logic in terms of configuration safety, which we define next.
Intuitively, \safe{n}{i}{C}{\sigma}{Q}{\tau} expresses that it is safe to execute command~$C$, as the $i$th component of the distributed system, and for $n$~execution steps starting in a state~$\sigma$; and if the command is fully executed, the predicate~$Q$ holds in the resulting final state. Furthermore, if new threads have been forked as part of executing~$C$ then it is safe to execute these forked components, too.
Since we are ultimately interested in the effects on the global trace~$\tau$, configuration safety includes trace invariant~\traceinvariant maintenance.
A judgement~\hoare{P}{C}{Q} then expresses that it is safe to execute the command~$C$ for an arbitrary number of execution steps and from any initial state satisfying the predicate~$P$.

\begin{definition}\textbf{Configuration safety.}\label{def:config-safety}
	\begin{align*}
		& \safe{0}{i}{C}{\sigma}{Q}{\tau} \; \text{holds always.} \\
		& \safe{n + 1}{i}{C}{\sigma}{Q}{\tau} \; \text{holds if and only if} \\
		& \; \; \text{(i)} \; C = \cempty \implies Q(\sigma) \; \text{and} \\
		& \; \; \text{(ii)} \; \forall \vec{C}, \vec{C'}, \vec{\sigma}, \vec{\sigma'}, k_a, k'_a, \tau' \ldotp i \leq |\vec{C}| = |\vec{\sigma}| \leq |\vec{C'}| = |\vec{\sigma'}| \land {} \\
		& \; \phantom{\forall \vec{C}, \vec{C'}, \vec{\sigma}, \vec{\sigma'}, k_a, k'_a, \tau' \ldotp} \vec{C_i} = C \land \vec{C'_i} \neq C \land \vec{\sigma_i} = \sigma \land {} \\
		& \; \phantom{\forall \vec{C}, \vec{C'}, \vec{\sigma}, \vec{\sigma'}, k_a, k'_a, \tau' \ldotp} \traceinvariant(\tau) \land \prefix{\snapshot(\sigma)}{\tau} \land {} \\
		& \; \phantom{\forall \vec{C}, \vec{C'}, \vec{\sigma}, \vec{\sigma'}, k_a, k'_a, \tau' \ldotp} \smallstepsystem{\systemstatetwo{\overrightarrow{\participantstate{C}{\sigma}}}{k_a}{\tau}}{\systemstatetwo{\overrightarrow{\participantstate{C'}{\sigma'}}}{k'_a}{\tau'}} \\
		& \phantom{\; \; \text{(ii)} \;} \implies \traceinvariant(\tau') \land \prefix{\tau}{\tau'} \land \prefix{\snapshot(\vec{\sigma'_i})}{\tau'} \land {} \\
		& \phantom{\; \; \text{(ii)} \; \implies} k_a \subseteq k'_a \land \safe{n}{i}{\vec{C'_i}}{\vec{\sigma'_i}}{Q}{\tau'} \land {} \\
		& \phantom{\; \; \text{(ii)} \quad} \left(\bigwedge_{|\vec{C}| < j \leq |\vec{C'}|} \safe{n}{j}{\vec{C'_j}}{\vec{\sigma'_j}}{\ttrue}{\tau'} \land \prefix{\snapshot(\vec{\sigma'_j})}{\tau'}\right)
	\end{align*}
	where $|\vec{V}|$ and $\vec{V_i}$ denote the length and element at index~$i$ of a vector~$V$, resp., and $\overrightarrow{\participantstate{C}{\sigma}}$ is syntactic sugar for $\participantstate{\vec{C_1}}{\vec{\sigma_1}} \cdots \participantstate{\vec{C}_{|\vec{C}|}}{\vec{\sigma}_{|\vec{\sigma}|}}$.
\end{definition}

\begin{definition}\textbf{Validity.}\label{def:validity}
	\begin{align*}
		\hoare{P}{C}{Q} \triangleq \; \forall n, i, \sigma, \tau \ldotp P(\sigma) \implies \safe{n}{i}{C}{\sigma}{Q}{\tau}
	\end{align*}
\end{definition}

Executing zero steps is vacuously safe. Executing $n + 1$ steps is safe $(i)$~if the command is already fully executed and the predicate~$Q$ satisfied; and otherwise $(ii)$~if there is a transition to $\vec{C'_i}$ that maintains the trace invariant~\traceinvariant, the necessary monotonicity properties (on snapshot, trace, and the attacker's knowledge set), and allows continued safe execution of all components (\ie of commands~$\vec{C'}$) in the system, including newly forked ones (the last, iterated conjunct in the definition).

\begin{figure*}[t!]
\begin{center}
    \[
    \begin{array}{c}

    \Inf[\rulename{Skip}]
    	{\phantom{\shoare{}{}{}}} 
    	{\shoare
    		{P}
    		{skip}
    		{P}}

    \hspace{5mm}
    
	\Inf[\rulename{Seq}]
    	{\shoare{P}{C_1}{R}}
    	{\shoare{R}{C_2}{Q}}
    	{\shoare{P}{C_1 \cseq C_2}{Q}}

    \hspace{5mm}
    
	\Inf[\rulename{Cons}]
		{P \models P'}
		{Q' \models Q}
		{\shoare{P'}{C}{Q'}}
		{\shoare{P}{C}{Q}}
		
	\\[2em]
    
    \Inf[\rulename{If}]
    	{\shoare{e \land P}{C_1}{Q}}
    	{\shoare{\neg e \land P}{C_2}{Q}}
    	{\shoare{P}{\cif{e}{C_1}{C_2}}{Q}}

    \hspace{5mm}

    \Inf[\rulename{While}]
    	{\shoare{e \land P}{C}{P}}
    	{\shoare{P}{\cwhile{e}{C}}{\neg e \land P}}

    \hspace{5mm}

    \Inf[\rulename{Assign}]
    	{\phantom{\shoare{}{}{}}} 
    	{\shoare
    		{P[e/x]}
    		{\cassign{x}{e}}
    		{P}}

    \\[2em]

    \Inf[\rulename{Send}]
    	{\shoare
    		{\ext{\esend{e}}{\snapshot} \land \forall p \ldotp P[\snapshot + p + \esend{e} / \snapshot]}
    		{\csend{e}}
    		{P}}

    \hspace{5mm}
    
    \Inf[\rulename{Recv}]
    	{\shoare
    		{\forall x \ldotp P}
    		{\crecv{x}}
    		{P}}

    \\[2em]
    
    \Inf[\rulename{NonceGen}]
    	{\shoare
    		{\ext{\enonce{x}}{\snapshot} \land \forall p, x \ldotp P[\snapshot + p + \enonce{x} / \snapshot]}
    		{\cnonce{x}}
    		{P}}
    \\[2em]
    
    \Inf[\rulename{Hash}]{
    	\shoare
    		{P[\thash{e}/x]}
    		{\chash{x}{e}}
    		{P}}

    \hspace{5mm}
    
    \Inf[\rulename{Pk}]{
    	\shoare
    		{P[\tpk{e}/x]}
    		{\cpk{x}{e}}
    		{P}}
	
	\hspace{5mm}
	
	\Inf[\rulename{Enc}]{
    	\shoare
    		{P[\tenc{e_1}{e_2}/x]}
    		{\cenc{x}{e_1}{e_2}}
    		{P}}

	\\[2em]
	
	\Inf[\rulename{Dec}]{
    	\shoare
    		{\forall x \ldotp P[\ttrue / \vok][e_2 / \tenc{\tpk{e_1}}{x}] \land P[\tfalse / \vok]}
    		{\cdec{x}{\vok}{e_1}{e_2}}
    		{P}}

    \\[2em]
	
	\Inf[\rulename{Drop}]
    	{\shoare
    		{\ext{\edrop{e}}{\snapshot} \land \forall p \ldotp P[\snapshot + p + \edrop{e} / \snapshot]}
    		{\cdrop{e}}
    		{P}}
	
	\\[2em]
	
	\Inf[\rulename{Learn}]
    	{\shoare
    		{P}
    		{\clearn{e}}
    		{P}}
	
	\hspace{5mm}
	
	\Inf[\rulename{Choose}]
    	{\shoare
    		{\forall x \ldotp P}
    		{\cchoose{x}}
    		{P}}
	
	\\[2em]
	
    \Inf[\rulename{Corrupt}]
    	{\shoare
    		{(\forall v \ldotp \ext{\ecorrupt{e}{v}}{\snapshot}) \land (\forall p, v \ldotp P[\snapshot + p + \ecorrupt{e}{v} / \snapshot])}
    		{\ccorrupt{e}}
    		{P}}

    \\[2em]
    
    \Inf[\rulename{Fork}]
    	{\freevars{R} \subseteq \vec{x}}
    	{P \models R}
    	{\shoare{R}{C}{\ttrue}}
    	{\shoare{P}{C'}{Q}}
    	{\shoare{P}{\cfork{\vec{x}}{C} \cseq C'}{Q}}

    \end{array}
    \]
\end{center}

\caption{The proof rules.}
\label{fig:rules}
\end{figure*}

\Figref{rules} shows the proof rules for our logic.
Our assertion language is a first-order logic (for brevity not a separation logic) with the usual logical connectives and quantifiers, and access to local program variables.
Pre- and postconditions can therefore refer to the local snapshot, but they \emph{cannot} refer to the global trace. The latter corresponds to our methodology (recall \secref{local-snapshots}), where pre- and postconditions also cannot directly express properties about the trace because access to it is governed by our library's ghost lock. Instead, properties about the global trace, such as the existence of a particular trace event, must always be expressed via the local snapshot. This ensures that pre- and postconditions are stable under potential environment interference, which is needed to prove our proof rules sound.

Similar to the discussion of the operational semantics, we discuss only non-standard proof rules.
Proof rules corresponding to commands that modify the global trace, \eg \rulename{Send}, enforce that the trace invariant is maintained under potential environment interference.
For this purpose, we define an extensibility predicate specifying that appending a trace $n$ event to an arbitrary extension of a trace~$\tau$ maintains the trace invariant.

\begin{definition}\textbf{Extensibility.}\label{def:ext}
	A trace~$\tau$ is \emph{extensible} by a trace event~$n$ if the trace invariant~\traceinvariant is maintained for any possible trace~$\tau'$, given that $\tau$ is a prefix thereof:
	\begin{equation*}
		\ext{n}{\tau} \triangleq \; \forall \tau' \ldotp \prefix{\tau}{\tau'} \land \traceinvariant(\tau') \implies \traceinvariant(\tau' + n)
	\end{equation*}
\end{definition}

Recall from \figref{semantics} that commands modifying the global trace, \eg send, also update the local snapshot to the most recent version of the trace. Analogous to the proof rule for assignments, the proof rules for trace-modifying command thus require that the syntactically substituted postcondition $\forall p \ldotp P[\snapshot + p + n / \snapshot]$ holds in the state before executing the command, where $\mathit{n}$ is a trace event (\eg $\esend{e}$).
The quantified $p$ accounts for all possible trace extensions that could have been made by the environment since the local snapshot was last updated, and thus accounts for arbitrary interleavings of participants and the attacker.

For the sake of presentation we have omitted additional assumptions that are available when discharing preconditions of snapshot-updating commands: \eg in proof rule \rulename{NonceGen} we may additionally use nonce freshness, and in proof rule \rulename{Recv} we may assume that a received message was previously sent and not dropped in the meantime.

\begin{theorem}\textbf{Soundness of proof rules.}\label{thm:soundness}
	\begin{align*}
		\text{If} \; \shoare{P}{C}{Q} \; \text{then} \; \hoare{P}{C}{Q}
	\end{align*}
\end{theorem}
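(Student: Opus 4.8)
The plan is to unfold \defref{validity} and reduce the goal to showing $\safe{n}{i}{C}{\sigma}{Q}{\tau}$ for all $n, i, \sigma, \tau$ with $P(\sigma)$, and then proceed by rule induction on the derivation of $\shoare{P}{C}{Q}$ (\figref{rules}). Because configuration safety (\defref{config-safety}) is step-indexed and unfolds recursively in $n$, I would first establish three auxiliary facts about the safety predicate, each by induction on the step index: (a)~safety of the terminal command collapses to the postcondition, i.e.\ $\safe{n}{i}{\cempty}{\sigma}{Q}{\tau}$ holds iff $n = 0$ or $Q(\sigma)$ (condition~(ii) is vacuous because $\cempty$ admits no step that changes component~$i$); (b)~safety is monotone in its postcondition (needed for \rulename{Cons}); and (c)~a sequential-composition lemma stating that if $C_1$ is safe for reaching an intermediate assertion $R$ and $C_2$ is safe from every $R$-state, then $C_1 \cseq C_2$ is safe (needed for \rulename{Seq}).

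For the axiom rules whose operational counterpart is obtained by projecting a single participant/attacker step to a terminal continuation (\rulename{Skip}, \rulename{Assign}, \rulename{Hash}, \rulename{Pk}, \rulename{Enc}, \rulename{Dec}, \rulename{Recv}, \rulename{Learn}, \rulename{Choose}, \rulename{Send}, \rulename{NonceGen}, \rulename{Drop}), I would unfold $\safe{n+1}{i}{C}{\sigma}{Q}{\tau}$: condition~(i) is vacuous since none of these commands is $\cempty$, and for condition~(ii) the operational semantics (\figref{semantics}) fixes the unique shape of component~$i$'s step, whose continuation is $\cempty$. It then remains to (1)~re-establish $\traceinvariant(\tau')$, (2)~discharge the monotonicity obligations $\prefix{\tau}{\tau'}$, $\prefix{\snapshot(\sigma')}{\tau'}$, and $k_a \subseteq k'_a$, and (3)~conclude $Q(\sigma')$, which by fact~(a) yields $\safe{n}{i}{\cempty}{\sigma'}{Q}{\tau'}$. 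For the state-only commands $\tau' = \tau$, so (1) and (2) are immediate and (3) follows from a substitution lemma relating the precondition to the semantic state update. For the trace-extending commands, invariant preservation~(1) follows from the $\ext{n}{\snapshot}$ conjunct of the precondition, instantiated at $\tau$ using the hypotheses $\traceinvariant(\tau)$ and $\prefix{\snapshot(\sigma)}{\tau}$; and (3) follows by instantiating the universally quantified extension $p$ in $P[\snapshot + p + n / \snapshot]$ with the environment's suffix (the $p$ with $\tau = \snapshot(\sigma) + p$), which matches exactly the snapshot update $\snapshot \mapsto \tau + n$ performed by the semantics.

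For the structural rules, \rulename{Cons} follows from fact~(b) together with weakening the precondition via $P \models P'$; \rulename{If} and \rulename{While} combine the induction hypotheses with the corresponding semantic rules, the loop being unfolded through the step index; and \rulename{Seq} is discharged by fact~(c) applied to the two induction hypotheses with intermediate assertion $R$. The interesting case is \rulename{Fork}, whose operational rule extends the system by a fresh component running $C$: I would discharge the last, iterated conjunct of \defref{config-safety} (safety of the forked threads) using the premise $\shoare{R}{C}{\ttrue}$, noting that its precondition $R$ holds in the fresh thread's state because that state copies exactly $\vec{x}$ and $\snapshot$ from $\sigma$, and $\freevars{R} \subseteq \vec{x}$ together with $P \models R$ guarantees $R$ depends only on the copied variables; safety of the continuation $C'$ comes from the premise $\shoare{P}{C'}{Q}$. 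As the text notes, \rulename{Corrupt} and \rulename{Fork} bake a trailing command into their operational rules, so \rulename{Corrupt}'s effect is realized through sequential composition, with its trace/knowledge obligations discharged exactly as for the other trace-extending commands ($\ext{\cdot}{\snapshot}$ for the invariant, $\forall p$ for the snapshot update, and $k_a \subseteq k_a \cup \values{\sigma_i}$ for knowledge growth).

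The main obstacle is conceptual rather than computational: configuration safety is phrased over an entire system configuration (the vector of participants plus the attacker and its knowledge set), yet the proof rules reason thread-locally. The guard $\vec{C'_i} \neq C$ in condition~(ii) confines each rule's obligation to the component's own step, so all environment interference must instead be absorbed by the universal quantifier over the starting trace $\tau$ in \defref{validity} and by the $\forall p$ quantifier in the trace-extending preconditions. Lining these up---so that the locally established pre- and postconditions are genuinely stable under arbitrary interleavings, and so that \rulename{Fork} grows the system while preserving the safety of every component at the extended trace---is the delicate part; once facts~(a)--(c) are in place, the remaining cases reduce to routine unfoldings of \figref{semantics}.
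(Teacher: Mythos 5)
Your proposal is correct and follows essentially the same route as the paper's proof: rule induction on the derivation, reduced to step-indexed configuration-safety lemmas, with your facts (a) and (c) corresponding to the paper's lemmas on safety of the terminal command and on sequential composition, and the per-command cases (\rulename{Send}, \rulename{Fork}) discharged exactly as in the paper via the extensibility predicate, the universally quantified snapshot extension $p$, and the $\freevars{R} \subseteq \vec{x}$ side condition. The only cosmetic difference is that the paper additionally records a downward-closure lemma in the step index (\lemref{less-safety}, used in its \rulename{Fork} case), which you sidestep by instantiating the premises' validity at the needed index directly.
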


We proof this theorem in the usual way, by structural induction on the shape of the proof tree given by the theorem's left-hand side of the implication. We proceed by a case distinction on the last rule applied, and may assume the theorem (\ie our induction hypothesis) for this rule's premises. In our proof sketch we focus on a few interesting cases -- send, sequential composition, and fork -- and we present these cases further down, as individual lemmas.
Send is interesting because it illustrates a trace-updating proof rule, for which we have to show that the trace invariant is maintained.
The challenge for sequential composition is to show that our definition of configuration safety allows us to prove that \emph{each} transition in the system maintains the trace invariant.
The \rulename{Fork}~proof rule is of interest because it is the only command that extends the system configuration with additional components.

We begin by sketching the proofs for several auxiliary lemmas about configuration safety that will be useful later on.

\begin{lemma}\label{lem:cempty-safety}
	The empty command satisfies configuration safety given that the predicate~$Q$ holds.
	\begin{align*}
		\forall n, i, \sigma, Q, \tau \ldotp Q(\sigma) \implies \safe{n}{i}{\cempty}{\sigma}{Q}{\tau}
	\end{align*}
\end{lemma}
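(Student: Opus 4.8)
The plan is to prove the statement by a simple case distinction on the step budget~$n$, unfolding \defref{config-safety} in each case; no appeal to an induction hypothesis is needed. For $n = 0$ there is nothing to show, since $\safe{0}{i}{\cempty}{\sigma}{Q}{\tau}$ holds unconditionally by the first clause of the definition.

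For the successor case $n = m + 1$, I would establish the two conjuncts~(i) and~(ii) of \defref{config-safety} separately. Conjunct~(i) instantiates to $\cempty = \cempty \implies Q(\sigma)$, which follows immediately from the lemma's hypothesis~$Q(\sigma)$. The work therefore lies entirely in conjunct~(ii), the implication quantified over all possible system transitions.

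The key observation is that $\cempty$ is a terminal command: inspecting \figref{semantics}, no small-step rule has a left-hand side of the form $\smallsteppart{\cempty}{\ldots}{\ldots}{\ldots}$, so the tracked component can never itself take a step. Consequently, in any system transition $\smallstepsystem{\systemstatetwo{\overrightarrow{\participantstate{C}{\sigma}}}{k_a}{\tau}}{\systemstatetwo{\overrightarrow{\participantstate{C'}{\sigma'}}}{k'_a}{\tau'}}$ with $\vec{C_i} = \cempty$, whichever component actually fires (be it via \rulename{Local}, \rulename{Attacker}, \rulename{Corrupt}, or \rulename{Fork}) leaves the tracked component untouched, since each of these rules modifies at most the firing component and, in the case of \rulename{Fork}, appends a fresh one at the end of the configuration. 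Hence $\vec{C'_i} = \vec{C_i} = \cempty$, which directly contradicts the premise $\vec{C'_i} \neq \cempty$ assumed in conjunct~(ii). The antecedent of the implication in~(ii) is therefore never satisfiable, so the implication holds vacuously and $\safe{m+1}{i}{\cempty}{\sigma}{Q}{\tau}$ follows.

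The main (indeed, essentially the only) obstacle is recognizing that the premise $\vec{C'_i} \neq C$ in \defref{config-safety} can never be met when $C = \cempty$; everything else is immediate. This hinges precisely on the operational-semantics invariant that a single system step changes only the firing component (plus the one thread potentially added by \rulename{Fork}), combined with the terminality of~$\cempty$. Because the antecedent collapses, the conclusion of~(ii) --- in particular the nested $\safe{m}{\cdot}{\cdot}{\cdot}{\cdot}{\cdot}$ obligations and the monotonicity properties --- never has to be discharged, confirming that a plain case split on~$n$ rather than a genuine induction suffices.
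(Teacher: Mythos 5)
Your proposal is correct and follows essentially the same route as the paper's proof: conjunct~(i) is discharged directly by the hypothesis $Q(\sigma)$, and conjunct~(ii) holds vacuously because no transition can take the tracked component from $\cempty$ to a different command, so the premise $\vec{C'_i} \neq C$ is unsatisfiable. You merely spell out in more detail (via the observation that every rule leaves non-firing components untouched) what the paper states in one line.
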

\begin{proof}
	We show for arbitrary $n$, $i$, $\sigma$, $\tau$, and assuming $Q(\sigma)$, that $\safe{n}{i}{\cempty}{\sigma}{Q}{\tau}$ holds.
	Case $(i)$ from the definition of $\mathit{safe}$ holds straightforwardly.
	Case $(ii)$ is satisfied because there is no transition starting in command~$\cempty$ and resulting in a different command. Hence, this case vacously holds.
\end{proof}

\begin{lemma}\label{lem:less-safety}
	A command~$C$ satisfying configuration safety for $n$~execution steps is safe to execute for fewer execution steps.
	\begin{align*}
		& \forall m, n, i, C, \sigma, Q, \tau \ldotp m \leq n \land \safe{n}{i}{C}{\sigma}{Q}{\tau} \\
		& \implies \safe{m}{i}{C}{\sigma}{Q}{\tau}
	\end{align*}
\end{lemma}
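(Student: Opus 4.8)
The plan is to prove this by induction on the larger step index~$n$, exploiting the fact that configuration safety is defined by recursion on its step index and that only the two \emph{continuation} conjuncts in clause~(ii) of \defref{config-safety} actually mention the (decremented) index. Every other conjunct---clause~(i), trace-invariant maintenance $\traceinvariant(\tau')$, the monotonicity facts $\prefix{\tau}{\tau'}$ and $\prefix{\snapshot(\vec{\sigma'_i})}{\tau'}$, and $k_a \subseteq k'_a$---is completely independent of the step index. Hence shrinking $n$ to a smaller $m$ can only weaken the \emph{recursive} obligations, which is exactly what the induction hypothesis will discharge.

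Concretely, I would prove $\forall m \le n.\ \safe{n}{i}{C}{\sigma}{Q}{\tau} \implies \safe{m}{i}{C}{\sigma}{Q}{\tau}$ by induction on $n$, keeping $i, C, \sigma, Q, \tau$ (and $m$) universally quantified so that the hypothesis can be reinstantiated at different commands and states. For the base case $n = 0$, the only admissible $m$ is $0$, and $\safe{0}{i}{C}{\sigma}{Q}{\tau}$ holds unconditionally by definition. For the inductive step, assume the claim for $n$ and take $m \le n+1$ together with $\safe{n+1}{i}{C}{\sigma}{Q}{\tau}$. If $m = 0$ the goal is immediate; otherwise write $m = m' + 1$ with $m' \le n$ and unfold the obligation $\safe{m'+1}{i}{C}{\sigma}{Q}{\tau}$.

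Clause~(i) of the goal is literally clause~(i) of the hypothesis, so it transfers verbatim. For clause~(ii), I would fix an arbitrary system transition of the shape permitted by clause~(ii); applying clause~(ii) of $\safe{n+1}{i}{C}{\sigma}{Q}{\tau}$ to that \emph{same} transition yields all the non-recursive conjuncts (which are exactly those required at index $m'+1$), together with $\safe{n}{i}{\vec{C'_i}}{\vec{\sigma'_i}}{Q}{\tau'}$ for the continuation and $\safe{n}{j}{\vec{C'_j}}{\vec{\sigma'_j}}{\ttrue}{\tau'}$ for each freshly forked component~$j$. Since $m' \le n$, the induction hypothesis---instantiated once for the continuation thread and once for each forked thread---downgrades each of these to $\safe{m'}{i}{\vec{C'_i}}{\vec{\sigma'_i}}{Q}{\tau'}$ and $\safe{m'}{j}{\vec{C'_j}}{\vec{\sigma'_j}}{\ttrue}{\tau'}$, respectively, which completes clause~(ii) at index $m'+1 = m$.

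I expect no genuine obstacle here: this is the standard downward-closure property of a step-indexed safety predicate. The only point demanding care is bookkeeping---phrasing the induction hypothesis generally enough to apply it separately to the continuation command and to each of the arbitrarily many forked commands, and noting that the invariant and monotonicity conjuncts carry no index and so need no weakening. A cosmetic alternative would be to first establish the one-step statement $\safe{n+1}{i}{C}{\sigma}{Q}{\tau} \implies \safe{n}{i}{C}{\sigma}{Q}{\tau}$ and then iterate it $n - m$ times, but the direct induction above is shorter and avoids a separate iteration argument.
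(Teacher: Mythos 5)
Your proof is correct and matches the paper, which simply states ``straightforward induction'' on a step index and leaves the details to the reader; your careful unfolding of clauses (i) and (ii) and the re-instantiation of the induction hypothesis for the continuation and each forked thread is exactly the intended argument. The only (cosmetic) difference is that you induct on $n$ where the paper inducts on $m$; both work for the same reason, namely that only the recursive occurrences of $\mathit{safe}$ carry the index.
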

\begin{proof}
	Straightforward induction on $m$.
\end{proof}

Next, we present soundness lemmas for the aforementioned interesting proof rules: send, sequential composition, and fork.

\mypar{\rulename{Send}}
Soundness for the proof rule~\rulename{Send} directly follows from the following safety lemma:

\begin{lemma}\label{lem:csend-safety}
	\begin{align*}
		& \forall n, i, \sigma, Q, \tau \ldotp \ext{\esend{e(\sigma)}}{\snapshot(\sigma)} \land {} \\
		& \phantom{\forall n, i, \sigma, Q, \tau \ldotp} (\forall p \ldotp Q[\snapshot + p + \esend{e} / \snapshot](\sigma)) \\
		& \implies \safe{n}{i}{\csend{e}}{\sigma}{Q}{\tau}
	\end{align*}
\end{lemma}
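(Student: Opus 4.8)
The plan is to prove \lemref{csend-safety} by unfolding the definition of configuration safety (\defref{config-safety}) and distinguishing the cases $n = 0$ and $n = m + 1$; no genuine induction hypothesis on the send lemma itself is needed, because the continued-safety obligation will bottom out at the empty command and be discharged by the already-established \lemref{cempty-safety}. The base case $n = 0$ is immediate, since $\safe{0}{i}{\csend{e}}{\sigma}{Q}{\tau}$ holds unconditionally. For the case $n = m + 1$, condition~$(i)$ of \defref{config-safety} is vacuous because $\csend{e} \neq \cempty$, so it remains to discharge condition~$(ii)$: for every system step out of a configuration whose $i$th component is $\participantstate{\csend{e}}{\sigma}$, under the assumptions $\traceinvariant(\tau)$ and $\prefix{\snapshot(\sigma)}{\tau}$, I must re-establish the trace invariant together with the monotonicity and continued-safety conjuncts.

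First I would observe that the only transition available to a component running $\csend{e}$ is an application of rule~\rulename{Send} (lifted through \rulename{Local} or \rulename{Attacker}), which steps to $\cempty$, updates the local snapshot to $\tau + \esend{e(\sigma)}$, and yields the new global trace $\tau' = \tau + \esend{e(\sigma)}$ while leaving the attacker knowledge unchanged. The monotonicity conjuncts are then immediate: $\prefix{\tau}{\tau'}$ holds by picking the extension $\esend{e(\sigma)}$ in \defref{prefix}; $\prefix{\snapshot(\vec{\sigma'_i})}{\tau'}$ holds by \lemref{prefix-reflexivity} since the updated snapshot equals $\tau'$; and the attacker-knowledge inclusion is trivial. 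To establish $\traceinvariant(\tau')$ I would invoke the first hypothesis $\ext{\esend{e(\sigma)}}{\snapshot(\sigma)}$: by \defref{ext} it suffices to exhibit a trace that has $\snapshot(\sigma)$ as a prefix and satisfies the invariant, and $\tau$ is exactly such a trace by the assumptions $\prefix{\snapshot(\sigma)}{\tau}$ and $\traceinvariant(\tau)$, so instantiating the quantifier at $\tau$ yields $\traceinvariant(\tau + \esend{e(\sigma)}) = \traceinvariant(\tau')$.

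The remaining conjunct is the continued safety $\safe{m}{i}{\cempty}{\vec{\sigma'_i}}{Q}{\tau'}$, and since the post-command is $\cempty$, \lemref{cempty-safety} reduces this to showing that $Q$ holds in the post-state $\sigma[\snapshot \mapsto \tau + \esend{e(\sigma)}]$. This is the one place where I expect the real work, and where the universal quantification over trace extensions in the second hypothesis is used. That hypothesis gives $Q[\snapshot + p + \esend{e} / \snapshot](\sigma)$ for \emph{every}~$p$, which by the substitution semantics equals $Q\big(\sigma[\snapshot \mapsto \snapshot(\sigma) + p + \esend{e(\sigma)}]\big)$. Since $\prefix{\snapshot(\sigma)}{\tau}$ supplies a witness $p_0$ with $\snapshot(\sigma) + p_0 = \tau$, instantiating the hypothesis at $p_0$ produces precisely $Q\big(\sigma[\snapshot \mapsto \tau + \esend{e(\sigma)}]\big)$, as required. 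Finally, because \rulename{Send} forks no new threads, the iterated conjunction over freshly forked components in \defref{config-safety} is empty and holds vacuously. The main obstacle is thus the bookkeeping in this last step: recognising that the quantified $p$ in the precondition is exactly what absorbs the (unknown) trace growth between the last snapshot update and the current trace $\tau$, so that the syntactically substituted postcondition can be specialised to the concrete post-state.
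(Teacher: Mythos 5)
Your proof is correct and follows essentially the same route as the paper's: the only transition is an application of \rulename{Send}, the monotonicity conjuncts are discharged by \defref{prefix} and \lemref{prefix-reflexivity}, the trace invariant follows by instantiating the extensibility hypothesis at $\tau$, and continued safety reduces via \lemref{cempty-safety} to instantiating the quantified $p$ with the witness supplied by $\prefix{\snapshot(\sigma)}{\tau}$. The paper phrases the argument as an induction on $n$ whose hypothesis is never actually used; your observation that a plain case split on $n=0$ versus $n=m+1$ suffices is accurate and slightly cleaner, but not a substantive difference.
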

\begin{proof}
	We prove this lemma by induction on $n$ using the following induction hypothesis:
	\begin{align*}
		\mathit{IH}(n) \triangleq \; & \forall i, \sigma, Q, \tau \ldotp \ext{\esend{e(\sigma)}}{\snapshot(\sigma)} \land {} \\
		& \phantom{\forall i, \sigma, Q, \tau \ldotp} (\forall p \ldotp Q[\snapshot + p + \esend{e} / \snapshot](\sigma)) \\
		& \implies \safe{n}{i}{\csend{e}}{\sigma}{Q}{\tau}
	\end{align*}
	In the base case ($n = 0$), $\safe{0}{i}{\csend{e}}{\sigma}{Q}{\tau}$ holds by definition.
	For the induction step, we assume $\mathit{IH}(n)$ and show that $\mathit{IH}(n + 1)$ holds.
	\Ie we further assume $\ext{\esend{e(\sigma)}}{\snapshot(\sigma)}$ and $\forall p \ldotp Q[\snapshot + p + \esend{e} / \snapshot](\sigma)$ for arbitrary $i$, $\sigma$, $Q$, and $\tau$.
	We have to prove that $\safe{n+1}{i}{\csend{e}}{\sigma}{Q}{\tau}$ holds.
	Case $(i)$ from the definition of $\mathit{safe}$ holds trivially because $\csend{e} \neq \cempty$.
	To prove case $(ii)$, we assume the implication's left-hand side and show that the right-hand side holds.
	In particular, we consider a transition that executes command~$\csend{e}$.
	According to the operational semantics, only the transition rule~\rulename{Local} with an application of the \rulename{Send}~rule in its premise is applicable and modifies the command in the $i$th component's configuration.
	This allows us to conclude that the considered transition must have the following shape:
	$$\smallstepsystem{\systemstatetwo{\overrightarrow{\participantstate{C}{\sigma}}}{k_a}{\tau}}{\systemstatetwo{\overrightarrow{\participantstate{C'}{\sigma'}}}{k'_a}{\tau'}}$$
	where
	\begin{align*}
		& |\vec{C'}| = |\vec{C}| \land k'_a = k_a \land \tau' = \tau + \esend{e(\sigma)} \land {} \\
		& \vec{C'_i} = \cempty \land \vec{\sigma'_i} = \vec{\sigma_i}[\snapshot \mapsto \tau + \esend{e(\sigma)}] \land {} \\
		& (\forall j \ldotp i \neq j \implies \vec{C'_j} = \vec{C_j} \land \vec{\sigma'_j} = \vec{\sigma_j})
	\end{align*}
	and $\traceinvariant(\tau) \land \prefix{\snapshot(\vec{\sigma_i})}{\tau}$ holds.
	We have to prove that (1) $\traceinvariant(\tau')$, (2) $\prefix{\tau}{\tau'}$, (3) $\prefix{\snapshot(\vec{\sigma'_i})}{\tau'}$, (4) $k_a \subseteq k'_a$, and (5) $\safe{n}{i}{\vec{C'_i}}{\vec{\sigma'_i}}{Q}{\tau'}$ hold.
	Note that no additional local configurations have been added by this command because $|\vec{C'}| = |\vec{C}|$ holds.
	(1) follows directly by definition of \defref{ext}.
	(2) holds by choosing $p = \esend{e(\sigma)}$ as witness in \defref{prefix}.
	(3) holds by reflexivity (\cf \lemref{prefix-reflexivity}).
	(4) holds because the attacker knowledge is unchanged.
	Finally, (5) follows from \lemref{cempty-safety} via the following derivation to obtain $Q(\vec{\sigma'_i})$:
	\begin{align*}
		& \forall p \ldotp Q[\snapshot + p + \esend{e}/\snapshot](\vec{\sigma_i}) \\
		& \phantom{\forall} \implies Q[\tau + \esend{e}/\snapshot](\vec{\sigma_i}) \\
		& \phantom{\forall} \iff Q(\vec{\sigma_i}[\snapshot \mapsto \tau + \esend{e(\vec{\sigma_i})}]) \iff Q(\vec{\sigma'_i})
	\end{align*}
	where the implication is justified by the fact that $\prefix{\snapshot(\vec{\sigma_i})}{\tau}$ holds.
\end{proof}

\mypar{\rulename{Seq}}
In the case where the last rule applied in our proof tree is \rulename{Seq}, we may assume the induction hypothesis for the rule's premises, \ie $\hoare{P}{S_1}{R}$ and $\hoare{R}{S_2}{Q}$. Soundness for this case, \ie showing $\hoare{P}{S_1 \cseq S_2}{Q}$, then follows from the following safety lemma:

\begin{lemma}\label{lem:cseq-safety}
	\begin{align*}
		& \forall n, i, S_1, S_2, \sigma_1, R, Q, \tau \ldotp \safe{n}{i}{S_1}{\sigma_1}{R}{\tau} \land {} \\
		& \phantom{\forall i, S_1} (\forall m, \sigma_2, \tau' \ldotp m \leq n \land R(\sigma_2) \implies \safe{m}{i}{S_2}{\sigma_2}{Q}{\tau'}) \\
		& \implies \safe{n}{i}{S_1 \cseq S_2}{\sigma_1}{Q}{\tau}
	\end{align*}
\end{lemma}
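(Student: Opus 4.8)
The plan is to prove \lemref{cseq-safety} by induction on the step index~$n$, mirroring the step-indexed shape of \defref{config-safety}. The base case $n = 0$ is immediate, since $\safe{0}{i}{S_1 \cseq S_2}{\sigma_1}{Q}{\tau}$ holds by definition. For the inductive step I assume the statement for $n$ as the induction hypothesis and, for arbitrary $i, S_1, S_2, \sigma_1, R, Q, \tau$, the two premises instantiated at $n+1$: namely (A)~$\safe{n+1}{i}{S_1}{\sigma_1}{R}{\tau}$, and (B)~$\forall m \leq n+1, \sigma_2, \tau' \ldotp R(\sigma_2) \implies \safe{m}{i}{S_2}{\sigma_2}{Q}{\tau'}$. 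I then discharge the two clauses of $\safe{n+1}{i}{S_1 \cseq S_2}{\sigma_1}{Q}{\tau}$. Clause~(i) is vacuous because $S_1 \cseq S_2 \neq \cempty$, so all the work lies in clause~(ii).

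For clause~(ii) I fix an arbitrary system transition that changes component~$i$'s command away from $S_1 \cseq S_2$. First I would note that the definition only quantifies over transitions with $\vec{C'_i} \neq S_1 \cseq S_2$; since one system step advances exactly one component, this step must be component~$i$'s own. Inspecting the operational semantics (\figref{semantics}), the only rules reducing a composite $S_1 \cseq S_2$ are \rulename{Seq1} and \rulename{Seq2} (the baked-in \rulename{Corrupt}/\rulename{Fork} forms, which arise only when $S_1$ is itself such a command, belong to the dedicated fork/corrupt lemmas and are handled analogously to \rulename{Seq1} below). Both \rulename{Seq1} and \rulename{Seq2} are driven by a corresponding local step of $S_1$, so I can re-use premise~(A): applying $\safe{n+1}{i}{S_1}{\sigma_1}{R}{\tau}$ to that underlying $S_1$-step immediately discharges the environment-independent obligations of clause~(ii) — preservation $\traceinvariant(\tau')$, the prefix facts $\prefix{\tau}{\tau'}$ and $\prefix{\snapshot(\vec{\sigma'_i})}{\tau'}$, the monotonicity $k_a \subseteq k'_a$, and the safety of any freshly forked threads — since none of these mention the postcondition or the residual command. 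It also yields continued safety of the residual of $S_1$ at step~$n$.

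It remains to establish continued safety of the composite's residual with postcondition~$Q$, where I split on the reduction rule. In the \rulename{Seq2} case, $S_1$ steps to some $S_1' \neq \cempty$ and the composite becomes $S_1' \cseq S_2$; from~(A) I have $\safe{n}{i}{S_1'}{\vec{\sigma'_i}}{R}{\tau'}$, and from~(B) restricted to $m \leq n$ I have exactly the second premise the induction hypothesis needs, so the induction hypothesis gives $\safe{n}{i}{S_1' \cseq S_2}{\vec{\sigma'_i}}{Q}{\tau'}$. In the \rulename{Seq1} case, $S_1$ completes (steps to $\cempty$) and the composite becomes $S_2$; from~(A) I obtain $\safe{n}{i}{\cempty}{\vec{\sigma'_i}}{R}{\tau'}$. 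If $n = 0$ the goal $\safe{0}{i}{S_2}{\vec{\sigma'_i}}{Q}{\tau'}$ holds trivially; otherwise clause~(i) of this $\cempty$-safety supplies the intermediate assertion $R(\vec{\sigma'_i})$, and instantiating~(B) at $m = n$ yields $\safe{n}{i}{S_2}{\vec{\sigma'_i}}{Q}{\tau'}$, as required.

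I expect the main obstacle to be getting the interaction between the structural case split and the step-index induction exactly right: threading the intermediate assertion~$R$ so that it becomes available precisely at the transition where $S_1$ finishes, and invoking premise~(B) at the correctly decremented index in the \rulename{Seq2} case so that it matches the induction hypothesis. A secondary delicacy is that clause~(ii) also demands continued safety of the threads forked while reducing $S_1$; these obligations transfer verbatim from~(A), but one must confirm that sequencing with $S_2$ introduces no components beyond those already accounted for by $S_1$'s step.
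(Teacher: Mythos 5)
Your proposal is correct and follows essentially the same route as the paper's proof: induction on the step index $n$, a case split on whether the underlying local step is \rulename{Seq1} or \rulename{Seq2}, transferring the trace-invariant, prefix, and knowledge-monotonicity obligations from the safety of $S_1$, extracting $R$ from the safety of $\cempty$ (with the same $n=0$ versus $n>0$ distinction) in the \rulename{Seq1} case, and invoking the induction hypothesis with the premise on $S_2$ restricted to $m \leq n$ in the \rulename{Seq2} case. Your explicit remark about the baked-in \rulename{Corrupt}/\rulename{Fork} sequential forms is a careful aside the paper leaves implicit, but it does not change the argument.
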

\begin{proof}
	We perform induction on $n$ using the following induction hypothesis:
	\begin{align*}
		\mathit{IH}(n) \triangleq \; & \forall i, S_1, S_2, \sigma_1, R, Q, \tau, \ldotp {} \\
		& \phantom{\forall i,} \safe{n}{i}{S_1}{\sigma_1}{R}{\tau} \land {} \\
		& \phantom{\forall i,} (\forall m, \sigma_2, \tau' \ldotp m \leq n \land R(\sigma_2) \implies \safe{m}{i}{S_2}{\sigma_2}{Q}{\tau'}) \\
		& \implies \safe{n}{i}{S_1 \cseq S_2}{\sigma_1}{Q}{\tau}
	\end{align*}
	The base case ($n = 0$) holds by definition.
	In the induction step, we may assume $\mathit{IH}(n)$ to prove $\mathit{IH}(n + 1)$. For arbitrary $i$, $S_1$, $S_2$, $\sigma_1$, $R$, $Q$, and $\tau$ we assume the left-hand side, \ie $\safe{n + 1}{i}{S_1}{\sigma_1}{R}{\tau}$ and $\forall m, \sigma_2, \tau' \ldotp m \leq n + 1 \land R(\sigma_2) \implies \safe{m}{i}{S_2}{\sigma_2}{Q}{\tau'}$. 
	It remains to prove that $\safe{n + 1}{i}{S_1 \cseq S_2}{\sigma_1}{Q}{\tau}$ holds.
	The proof proceeds similarly to the proof of \lemref{csend-safety} except that in case $(ii)$ the \rulename{Local}~rule's premise is fulfilled by an application of either the \rulename{Seq1} or \rulename{Seq2}~rule:
	\begin{itemize}
		\item Case \rulename{Seq1}:
			According to this transition's premise, there exists a transition $\smallsteppart{S_1}{\participantstate{\sigma_i}{\tau}}{\cempty}{\participantstate{\sigma'_i}{\tau''}}$ for some $\tau''$.
			Thus, we obtain by definition of $\safe{n + 1}{i}{S_1}{\sigma_i}{R}{\tau}$ that $\traceinvariant(\tau'')$, $\prefix{\tau}{\tau''}$, $\prefix{\snapshot(\sigma'_i)}{\tau''}$, $k_a \subseteq k'_a$, and \\ $\safe{n}{i}{\cempty}{\sigma'_i}{R}{\tau''}$ hold.
			We distinguish two cases, namely $n = 0$ and $n > 0$.
			In the first case, we obtain by definition $\safe{0}{i}{S_2}{\sigma'_i}{Q}{\tau''}$.
			In the second case, we obtain by definition of $\safe{n}{i}{\cempty}{\sigma'_i}{R}{\tau''}$ that $R(\sigma'_i)$ holds.
			Therefore, we can instantiate $m$, $\sigma_2$, and $\tau'$ with $n$, $\sigma'_i$, and $\tau''$, respectively, in the quantifier above.
			Thus, we obtain $\safe{n}{i}{S_2}{\sigma'_i}{Q}{\tau''}$.
			This concludes the proof for both cases $n = 0$ and $n > 0$ showing that $\safe{n + 1}{i}{S_1 \cseq S_2}{\sigma_i}{Q}{\tau}$ holds.
		
		\item Case \rulename{Seq2}:
			This transition's premise specifies that a transition $\smallsteppart{S_1}{\participantstate{\sigma_i}{\tau}}{S'_1}{\participantstate{\sigma'_i}{\tau''}}$ for some $\tau''$ exists.
			We apply the definition of $\safe{n + 1}{i}{S_1}{\sigma_i}{R}{\tau}$ to obtain $\traceinvariant(\tau'')$, $\prefix{\tau}{\tau''}$, $\prefix{\snapshot(\sigma'_i)}{\tau''}$, $k_a \subseteq k'_a$, and \\ $\safe{n}{i}{S'_1}{\sigma'_i}{R}{\tau''}$.
			By applying the induction hypothesis for $n$, we obtain $\safe{n}{i}{S'_1 \cseq S_2}{\sigma'_i}{Q}{\tau''}$.
			Thus, we showed $\safe{n + 1}{i}{S_1 \cseq S_2}{\sigma_i}{Q}{\tau}$.
	\end{itemize}
\end{proof}

\mypar{\rulename{Fork}}
Soundness of the \rulename{Fork}~proof rule follows from the following safety lemma:

\begin{lemma}\label{lem:cfork-safety}
	\begin{align*}
		& \forall n, i, \vec{x}, S_1, S_2, \sigma_1, Q, \tau \ldotp \safe{n}{i}{S_1}{\sigma_1}{Q}{\tau} \land {} \\
		& \phantom{\forall n, i, \vec{x}} (\forall j, \sigma_2 \ldotp \left[\sigma_1 \sim \sigma_2\right]^{\vec{x} \cup \snapshot} \implies \safe{n}{j}{S_2}{\sigma_2}{\ttrue}{\tau}) \\
		& \implies \safe{n}{i}{\cfork{\vec{x}}{S_2} \cseq S_1}{\sigma_1}{Q}{\tau}
	\end{align*}
	where $\left[\sigma_1 \sim \sigma_2\right]^{\vec{x} \cup \snapshot}$ denotes that $\sigma_1$ maps the variables in $\vec{x}$ and variable~$\snapshot$ to the same values as $\sigma_2$ does.
\end{lemma}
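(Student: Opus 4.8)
The plan is to follow the template established by \lemref{csend-safety}: unfold \defref{config-safety} and discharge its two clauses, relying on the auxiliary facts \lemref{less-safety} and \lemref{prefix-reflexivity}. As in the \rulename{Send} case, the induction on $n$ degenerates here, because a \rulename{Fork} step rewrites the $i$th component from $\cfork{\vec{x}}{S_2} \cseq S_1$ into the plain continuation $S_1$, which is no longer a fork; there is thus no genuine recursion on the command and the induction hypothesis is never invoked. Concretely, I first dispatch the base case $n = 0$, which holds by definition, then fix $n = m+1$ and verify configuration safety of $\cfork{\vec{x}}{S_2} \cseq S_1$ in state $\sigma_1$. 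Clause $(i)$ is immediate since this command differs syntactically from $\cempty$.

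The heart of the argument is clause $(ii)$, and the key observation is that the only system transition that can advance the $i$th component is \rulename{Fork}. This is because $\mathit{fork}$ has no local small-step rule---it is defined directly at the system level---so \rulename{Seq1}/\rulename{Seq2} under \rulename{Local} cannot fire on $\cfork{\vec{x}}{S_2} \cseq S_1$, and \rulename{Corrupt} does not match its syntactic shape. I would then read the post-configuration off of \rulename{Fork}: the $i$th component becomes $\participantstate{S_1}{\sigma_1}$, exactly one new component is appended (at index $|\vec{C}|+1$) running $S_2$ in a state $\sigma_2'$ that \rulename{Fork} builds so as to agree with $\sigma_1$ on $\vec{x}$ and on \snapshot, while $\tau' = \tau$ and the attacker knowledge $k_a$ are left unchanged.

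Discharging the remaining obligations of clause $(ii)$ is then routine bookkeeping. Because $\tau' = \tau$, preservation of $\traceinvariant$, trace monotonicity (via \lemref{prefix-reflexivity}), and attacker-knowledge monotonicity are immediate; the snapshot-prefix obligation for the $i$th component follows from $\vec{\sigma'_i} = \sigma_1$ together with the antecedent $\prefix{\snapshot(\sigma_1)}{\tau}$. Continued safety $\safe{m}{i}{S_1}{\sigma_1}{Q}{\tau}$ follows from the first hypothesis $\safe{m+1}{i}{S_1}{\sigma_1}{Q}{\tau}$ by \lemref{less-safety}, and for the single spawned component the state $\sigma_2'$ constructed by \rulename{Fork} is exactly a witness for $\left[\sigma_1 \sim \sigma_2'\right]^{\vec{x} \cup \snapshot}$, so the second hypothesis (again downgraded via \lemref{less-safety}) yields $\safe{m}{j}{S_2}{\sigma_2'}{\ttrue}{\tau}$, with its snapshot-prefix obligation holding since $\snapshot(\sigma_2') = \snapshot(\sigma_1)$. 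The main obstacle is therefore not computational but lies in this transition analysis: one must justify that \rulename{Fork} is the unique applicable rule and spawns exactly one new component, since this is what pins down the post-configuration and guarantees that the forked state matches the agreement predicate demanded by the second hypothesis; everything else reduces to the prefix lemmas and \lemref{less-safety}, exactly as in \lemref{csend-safety}.
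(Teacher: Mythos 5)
Your proof is correct and follows essentially the same route as the paper's: unfold configuration safety, observe that \rulename{Fork} is the only system rule applicable to a component of shape $\cfork{\vec{x}}{S_2} \cseq S_1$ (since fork has no local small-step rule), and discharge the obligations for the continuation and the single spawned component via \lemref{less-safety}, the prefix lemmas, and the unchanged trace and attacker knowledge. The paper phrases the argument as an induction on $n$, but---as you correctly observe---its induction hypothesis is never invoked in the step case, so your explicit remark that the induction degenerates is a faithful (and slightly cleaner) rendering of the same argument.
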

\begin{proof}
	We perform induction on $n$ and use the following induction hypothesis:
	\begin{align*}
		\mathit{IH}(n) \triangleq \; & \forall i, \vec{x}, S_1, S_2, \sigma_1, Q, \tau, \ldotp {} \\
		& \phantom{\forall i} \safe{n}{i}{S_1}{\sigma_1}{Q}{\tau} \land {} \\
		& \phantom{\forall i} (\forall j, \sigma_2 \ldotp \left[\sigma_1 \sim \sigma_2\right]^{\vec{x} \cup \snapshot} \implies \safe{n}{j}{S_2}{\sigma_2}{\ttrue}{\tau}) \\
		& \implies \safe{n}{i}{\cfork{\vec{x}}{S_2} \cseq S_1}{\sigma_1}{Q}{\tau}
	\end{align*}
	For $n = 0$, $\safe{0}{i}{\cfork{\vec{x}}{S_2} \cseq S_1}{\sigma_1}{Q}{\tau}$ holds by definition.
	In the induction step, we assume $IH(n)$ to show $IH(n + 1)$. We assume the left-hand side, \ie
	\begin{align}
		& \safe{n + 1}{i}{S_1}{\sigma_1}{Q}{\tau} \land {} \label{eq:cfork-safety-conj1} \\
		& (\forall j, \sigma_2 \ldotp \left[\sigma_1 \sim \sigma_2\right]^{\vec{x} \cup \snapshot} \implies \safe{n + 1}{j}{S_2}{\sigma_2}{\ttrue}{\tau}) \label{eq:cfork-safety-conj2}
	\end{align}
	and seek to show $\safe{n + 1}{i}{\cfork{\vec{x}}{S_2} \cseq S_1}{\sigma_1}{Q}{\tau}$.
	Similar to the proof of \lemref{csend-safety}, the interesting case is $(ii)$ in which we only consider the inference rule~\rulename{Fork}.
	Based on the operational semantics, we obtain
	\begin{align*}
		& (\vec{C'_i} = S_1) \land (\sigma_1 = \vec{\sigma_i} = \vec{\sigma'_i}) \land (|\vec{C'}| = |\vec{C}| + 1) \land {} \\
		& (\vec{C'}_{|\vec{C'}|} = S_2) \land \left[\sigma_1 \sim \vec{\sigma'}_{|\vec{C'}|}\right]^{\vec{x} \cup \snapshot} \land {} \\
		& (\forall j \ldotp 1 \leq j \leq |\vec{C}| \implies \vec{\sigma'_j} = \vec{\sigma_j}) \land {} \\
		& (\forall j \ldotp 1 \leq j \leq |\vec{C}| \land i \neq j \implies \vec{C'_j} = \vec{C_j})
	\end{align*}
	Since the attacker knowledge~$k_a$ and global trace~$\tau$ remain unchanged by the application of this inference rule, we have to prove that (a) $\safe{n}{i}{S_1}{\sigma_1}{Q}{\tau}$ and (b) $\safe{n}{|\vec{C'}|}{S_2}{\vec{\sigma'}_{|\vec{C'}|}}{\ttrue}{\tau}$ hold.
	(a) follows from applying \lemref{less-safety} to $\safe{n + 1}{i}{S_1}{\sigma_1}{Q}{\tau}$.
	Since (\ref{eq:cfork-safety-conj2})'s left-hand side is satisfied for $\sigma_2 = \vec{\sigma'}_{|\vec{C'}|}$, we obtain $\safe{n + 1}{|\vec{C'}|}{S_2}{\vec{\sigma'}_{|\vec{C'}|}}{\ttrue}{\tau}$ by instantiating the quantifier~$j$ with $|\vec{C'}|$. We also obtain (b) by applying \lemref{less-safety}.
\end{proof}

\subsection{Trace Inclusion}
\label{sec:trace-inclusion}

\begin{figure}[t]
  \makeatletter
  \lst@AddToHook{OnEmptyLine}{\vspace{-0.4\baselineskip}}
  \makeatother
\begin{gobra}
func main(num_initiators, num_responders int) {
  ... // initialization code
  while (num_initiators > 0) {
    fork (initiator_args) {
      initiator(initiator_args)
    }
    num_initiators := num_initiators - 1
  }
  while (num_responders > 0) {
    fork (responder_args) {
      responder(responder_args)
    }
    num_responders := num_responders - 1
  }
  fork() { attacker() }
}
\end{gobra}
\vspace{-0.7em}
\caption{Sketch of a program~$C_{\mathit{system}}$ bootstrapping the distributed system by first executing sequential initialization code to, \eg generate public/private keypairs and then forking several instances of an initiator and responder implementation and the highly non-deterministic attacker implementation. \code{initiator_args} and \code{responder_args} are abbreviations for a list of arguments that are passed to the initiator and responder implementations, respectively. \Eg the initiator's public/private keypair and the responder's public key might constitute \code{initiator_args}.
}
\label{fig:system-code}
\vspace{-0.5em}
\end{figure}

We can now show the desired trace inclusion (recall \secref{soundness-proof-sketch}), which directly follows from \thmref{soundness}.

\begin{theorem}\label{thm:trace-inclusion}
	If we boostrap the distributed system from a single component, with no precondition, a trace invariant that holds for the empty trace, and an initial attacker knowledge set, then the trace invariant always holds, regardless of how many transitions are performed, and additional components (participants and the attacker) are forked.
	\begin{align*}
		\forall & C, \vec{C'}, Q, \sigma, \vec{\sigma'}, k'_a, \tau' \ldotp \shoare{\ttrue}{C}{Q} \land \traceinvariant(\emptytrace) \land {} \\
		& \phantom{C, \vec{C'}, Q, \sigma, \vec{\sigma'}, k'_a, \tau' \ldotp} \smallstepsystemstar{\systemstatetwo{\participantstate{C}{\sigma}}{k_a^{\mathit{init}}}{\emptytrace}}{\systemstatetwo{\overrightarrow{\participantstate{C'}{\sigma'}}}{k'_a}{\tau'}} \\
		& \implies \traceinvariant(\tau')
	\end{align*}
	where $k_a^{\mathit{init}}$ is the initial attacker knowledge consisting of all public terms.
\end{theorem}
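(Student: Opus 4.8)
The plan is to reduce the statement to \thmref{soundness} and then lift the per-component safety guarantee it provides to the whole concurrent system by an induction on the length of the system execution. First I would instantiate \thmref{soundness} with the hypothesis $\shoare{\ttrue}{C}{Q}$ to obtain $\hoare{\ttrue}{C}{Q}$, and unfold \defref{validity}: since $\ttrue(\sigma)$ holds trivially, this yields $\safe{n}{i}{C}{\sigma}{Q}{\tau}$ for \emph{every} step budget $n$, component index $i$, and trace $\tau$. The quantification over all traces is the crucial ingredient, since it is what makes the bootstrapped component's safety robust against the trace being grown by other threads.

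Next I would strengthen the goal to an invariant over system configurations that can be pushed through the execution. Concretely, call a configuration $\systemstatetwo{\overrightarrow{\participantstate{C}{\sigma}}}{k_a}{\tau}$ \emph{$n$-safe} if (a) $\traceinvariant(\tau)$ holds, and (b) for every component index $j$, both $\prefix{\snapshot(\vec{\sigma_j})}{\tau}$ and $\safe{n}{j}{\vec{C_j}}{\vec{\sigma_j}}{Q_j}{\tau}$ hold, where $Q_j$ is the postcondition associated with that thread ($Q$ for the bootstrapped component, $\ttrue$ for every forked one). I would then discharge two obligations: (i) the initial one-component configuration $\systemstatetwo{\participantstate{C}{\sigma}}{k_a^{\mathit{init}}}{\emptytrace}$ is $n$-safe for the number $n$ of steps taken, using $\traceinvariant(\emptytrace)$ from the hypothesis, the empty initial snapshot so that $\prefix{\snapshot(\sigma)}{\emptytrace}$ holds by \lemref{prefix-reflexivity}, and the safety obtained above; and (ii) that $n\!+\!1$-safety is preserved by a single transition, decreasing the budget to $n$. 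Iterating (ii) along the given derivation and reading off part (a) then gives $\traceinvariant(\tau')$.

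For the preservation step I would observe that every system transition is taken by exactly one component $i$ (the \rulename{Local}, \rulename{Attacker}, \rulename{Corrupt}, and \rulename{Fork} rules each rewrite a single component, optionally appending one forked thread). Applying part~(ii) of \defref{config-safety} to that component's safety $\safe{n+1}{i}{\vec{C_i}}{\vec{\sigma_i}}{Q_i}{\tau}$ — whose premises $\traceinvariant(\tau)$ and $\prefix{\snapshot(\vec{\sigma_i})}{\tau}$ are precisely parts (a) and (b) of the induction hypothesis — yields $\traceinvariant(\tau')$, $\prefix{\tau}{\tau'}$, $k_a \subseteq k'_a$, the updated snapshot prefix, and continued safety $\safe{n}{i}{\vec{C'_i}}{\vec{\sigma'_i}}{Q_i}{\tau'}$, together with $n$-safety of any freshly forked thread (the iterated conjunct of \defref{config-safety}). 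For the untouched components $j \neq i$, the snapshot prefix is recovered from $\prefix{\snapshot(\vec{\sigma_j})}{\tau}$ and $\prefix{\tau}{\tau'}$ via \lemref{prefix-transitivity}, and the budget is lowered from $n\!+\!1$ to $n$ via \lemref{less-safety}.

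The main obstacle is exactly this last point: re-establishing safety of the components that did \emph{not} step at the \emph{extended} trace $\tau'$. Because \defref{config-safety}\,(ii) only constrains the thread that actually steps, a thread's safety at $\tau$ does not, on its own, transfer to $\tau'$ (for instance, a pending $\mathit{recv}$ may become enabled on newly sent messages). This is where the ``for all traces'' quantifier of \defref{validity} must be threaded through the induction: I would therefore phrase part~(b) as safety \emph{at every} invariant-respecting trace rather than only at the current one, so that the untouched threads remain safe under interference for free, leaving as the delicate residual obligation the recovery of this trace-universal form of safety for the single thread that just stepped. I expect that to be the technically sensitive part of the formalization; it rests on the robustness already guaranteed for verified code by \thmref{soundness}, and it mirrors, as the paper notes, the standard separation-logic encoding in which the ghost-lock invariant $\traceinvariant$ is exactly what keeps each thread stable against the others' trace appends.
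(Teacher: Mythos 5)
Your proposal follows essentially the same route as the paper's proof: invoke \thmref{soundness} to obtain validity of $\simpleHoareSmall{\ttrue}{C}{Q}$ (hence configuration safety for every step budget, component index, and trace), then induct on the length of the transition sequence while maintaining the trace invariant together with per-component safety and the snapshot-prefix conditions, reading off $\traceinvariant(\tau')$ at the end. The interference issue you isolate --- that \defref{config-safety} pins a component's safety to one specific trace, so the non-stepping components' safety must be carried across the trace extension and the stepping component's continuation must be re-universalized over traces --- is genuine and is precisely what the paper's one-paragraph sketch glosses over with ``each command in every component of the system satisfies configuration safety''; your trace-universal strengthening of the induction invariant (which the per-command safety lemmas such as \lemref{csend-safety} in fact already deliver, since their hypotheses constrain only the local state) is the appropriate repair, so if anything your account is the more careful of the two.
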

\begin{proof}
	We prove this theorem by first applying soundness of our proof rules~(\thmref{soundness}) and expanding \defref{config-safety} because all states~$\sigma$ satisfy $\ttrue$.
	We proceed by induction over the length of transition sequences.
	Since the trace invariant holds initially, is maintained by each transition, and each command in every component of the system satisfies configuration safety, we obtain $\traceinvariant(\tau')$ for every trace~$\tau'$ that is possible after executing $n$~transitions, where $n$ is the induction variable.
\end{proof}

\balance

\Thmref{trace-inclusion} implies the following trace inclusion property where $\phi$ is a security property implied by the trace invariant~\traceinvariant, \ie $\traceinvariant \models \phi$:
\begin{align*}
	\forall C, Q \ldotp \shoare{\ttrue}{C}{Q} \land \traceinvariant(\emptytrace) \implies \mathit{Tr}(C) \subseteq \mathit{Tr}(\traceinvariant) \subseteq \mathit{Tr}(\phi)
\end{align*}
where $\mathit{Tr}(C)$ denotes the set of all traces that result from executing arbitrary many transitions according to the small-step operational semantics.
$\mathit{Tr}(\traceinvariant)$  and $\mathit{Tr}(\phi)$ are the sets of traces satisfying $\traceinvariant$ and $\phi$, respectively.
\Ie $\mathit{Tr}(\traceinvariant) = \{\tau \mid \forall \tau \ldotp \traceinvariant(\tau)\}$ and $\mathit{Tr}(\phi)$ analogously.

In our verification case studies we prove \thmref{trace-inclusion} in three~steps:
in step 1, we once-and-forall verify our reusable verification library, including a most-general attacker implementation (an iterated nondeterministic choice between all executable commands) against a partially abstract (thus sufficiently general) trace invariant. \Ie the judgement~$\shoare{true}{C_a}{true}$ we obtain for the attacker holds for all possible attackers and protocol-specific instantiations of this abstract trace invariant.
In step 2, we implement each participant in its own program $C_i$; verifying these effectively yield a judgement~$\shoare{P_i}{C_i}{Q_i}$ per participant.

\looseness=-1
In step 3, we combine these separate judgements for the protocol participants and the attacker by constructing a program~$C_{\mathit{system}}$ that first performs some sequential initialization code and then forks several instances of protocol participants and the attacker, as illustrated in \figref{system-code}.
By taking the number of participant instances as unconstraint input parameters, we obtain a result for unboundedly-many instances.
Functions and non-deterministic choices are straightforward extensions to our programming language.
The initialization code's purpose is to establish the participants' preconditions.
\Eg in our NSL case study we implement initialization code that generates public-private keypairs and passes the relevant keys to the individual protocol participants.
In the case of WireGuard, the corresponding initialization code remains an assumption, which is typical for security protocol verification and corresponds to assuming that there exists a mechanism to authentically distribute public keys.

}

\end{document}